\documentclass[11pt,a4paper]{article}

\usepackage{bhms-journal}

\title{\bfseries 
  Minimal Distance of Propositional Models%
  \thanks{This paper is the final version of a trilogy of conference papers
  \emph{Give Me Another One!}~\cite{BehrischHMSisaac-15} (ISAAC 2015),
  \emph{As Close As It Gets}~\cite{BehrischHMSwalcom-16} (WALCOM 2016), and
  \emph{The Next Whisky Bar}~\cite{BehrischHMScsr-16} (CSR 2016).}}
\author{Mike Behrisch%
  \footnotemark[4]\\
  Technische Universit\"at Wien, Vienna, Austria\\
  behrisch@logic.at
  \and
  Miki Hermann%
  \thanks{Supported by ANR-11-ISO2-003-01 Blanc International grant
    ALCOCLAN. Part of the work was done during his stay at the
    Wolfgang Pauli Institute (ICP, UMI CNRS 2842) in Vienna, Austria.}\\
  LIX (UMR CNRS 7161),
  \'Ecole Polytechnique, Palaiseau, France\\
  hermann@lix.polytechnique.fr
  \and
  Stefan Mengel%
  \thanks{Research of this author was done during his post-doctoral
    stay in LIX at \'Ecole Polytechnique.
    Supported by a QUALCOMM grant.}\\
  CRIL (UMR CNRS 8188),
  Universit\'e d'Artois, Lens, France.\\
  mengel@cril.fr
  \and
  Gernot Salzer%
  \thanks{Supported by Austrian Science Fund (FWF) grant I836-N23.}\\
  Technische Universit\"at Wien, Vienna, Austria\\
  salzer@logic.at
}

\begin{document}

\maketitle

\begin{abstract}
  We investigate the complexity of three optimization problems in
  Boolean propositional logic related to information theory: Given a
  conjunctive formula over a set of relations, find a satisfying
  assignment with minimal Hamming distance to a given assignment that
  satisfies the formula ($\NextSol$, $\XSOL$) or that does not need to
  satisfy it ($\NearestSol$, $\NSOL$). The third problem asks for two
  satisfying assignments with a minimal Hamming distance among all
  such assignments ($\MinSolDistance$, $\MSD$).

  For all three problems we give complete classifications with respect
  to the relations admitted in the formula. We give polynomial time
  algorithms for several classes of constraint languages. For all
  other cases we prove hardness or completeness regarding $\APX$,
  $\pAPX$, or equivalence to well-known hard optimization
  problems.
\end{abstract}

\section{Introduction}
\label{sec:intro}

We investigate the solution spaces of Boolean constraint satisfaction
problems built from atomic constraints by means of conjunction and
variable identification. We study three minimization problems in
connection with Hamming distance: Given an instance of a constraint
satisfaction problem in the form of a generalized conjunctive formula
over a set of atomic constraints, the first problem asks to find a
satisfying assignment with minimal Hamming distance to a given
assignment ($\NearestSol$, $\NSOL$). Note that for this problem we
assume neither that the given assignment satisfies the formula nor that
the solution is different from the assignment. The second problem
is similar to the first one, but this time the given assignment
has to satisfy the formula and we look for another solution with
minimal Hamming distance ($\NextSol$, $\XSOL$).  The
third problem is to find two satisfying assignments with minimal
Hamming distance among all satisfying assignments ($\MinSolDistance$,
$\MSD$). Note that the dual problem $\MaxSolDistance$ has been studied
in~\cite{CrescenziR-02}.

The $\NSOL$ problem appears in several guises throughout
literature. E.g., a common problem in Artificial Intelligence is to
find solutions of constraints close to an initial configuration; our
problem is an abstraction of this setting for the Boolean
domain. Bailleux and Marquis~\cite{BailleuxM-06} describe such
applications in detail and introduce the decision problem
$\DistanceSAT$: Given a propositional formula~$\varphi$, a
partial interpretation~$I$, and a bound~$k$, is there a satisfying
assignment differing from~$I$ in no more than $k$ variables? It is
straightforward to show that $\DistanceSAT$ corresponds to
the decision variant of our problem with existential quantification
(called $\dNSOLpp$ later on). While \cite{BailleuxM-06} investigates
the complexity of $\DistanceSAT$ for a few relevant classes
of formulas and empirically evaluates two algorithms, we analyze the
decision and the optimization problem for arbitrary semantic
restrictions on the formulas.

Hamming distance also plays an important role in belief revision. The
result of revising/updating a formula~$\varphi$ by another
formula~$\psi$ is characterized by the set of models of~$\psi$ that
are closest to the models of~$\varphi$. Dalal~\cite{Dalal-88}
selects the models of~$\psi$ having a minimal Hamming distance to
models of~$\varphi$ to be the models that result from the change.

As is common, we analyze the complexity of our optimization
problems modulo a parameter that specifies the atomic constraints
allowed to occur in the constraint satisfaction problem.  We give a
complete classification of the approximation complexity with respect
to this parameterization. It turns out that our problems can either be
solved in polynomial time, or they are complete for a well-known
optimization class, or else they are equivalent to well-known hard
optimization problems.

Our study can be understood as a continuation of the minimization
problems investigated by Khanna et al.\ in~\cite{KhannaSTW-01},
especially that of $\OptMinOnes$. The $\OptMinOnes$ optimization
problem asks for a solution of a constraint satisfaction problem with
the minimal Hamming weight, i.e., minimal Hamming distance to the
$0$-vector. Our work generalizes these results by allowing the given
vector to be also different from zero.

Our work can also be seen as a generalization of questions
in coding theory. In fact, our problem $\MSD$ restricted to affine
relations is the well-known problem $\MinDist$ of computing the
minimum distance of a linear code. This quantity is of central
importance in coding theory, because it determines the number of
errors that the code can detect and correct. Moreover, our problem
$\NSOL$ restricted to affine relations is the problem $\NCW$ of
finding the nearest codeword to a given word, which is the basic
operation when decoding messages received through a noisy
channel. Thus our work can be seen as a generalization of these
well-known problems from affine to general relations.

In the case of $\NearestSol$ we are able to apply methods from clone
theory, even though the problem turns out to be more intricate than
pure satisfiability. The other two problems, however, cannot be shown
to be compatible with existential quantification easily, which makes
classical clone theory inapplicable. Therefore we have to resort to
weak co-clones that require only closure under conjunction and
equality. In this connection, we apply the theory developed
in~\cite{SchnoorS-08,SchnoorDiss} as well as the minimal weak bases of
Boolean co-clones from~\cite{Lagerkvist-14}.

This paper is structured as follows. Section~\ref{sec:prelim} recalls
basic definitions and notions.
Section~\ref{sec:results} introduces the trilogy of optimization
problems studied in this paper, namely Nearest Solution (denoted by
$\NSOL$), Nearest Other Solution (denoted by
$\XSOL$), and Minimum Solution Distance (denoted by
$\MSD$), as well as their decision versions.  It
also states our three main results, i.e., a complete classification of
complexity for these optimization problems, depicted in
Figures~\ref{fig:nsol-coclones}, \ref{fig:xsol-coclones},
and~\ref{fig:msd-coclones}.
Section~\ref{sec:proofsPP} investigates the (non-)applicability of
clone theory to our problems. It also provides a duality result for
the constraint languages used as parameters.
Section~\ref{sec:proofsNSOL} contains the proofs of complexity
classification results for $\NearestSol$,
Section~\ref{sec:proofsXSOL} for $\NextSol$, and
Section~\ref{sec:proofsMSD} for $\MinSolDistance$.
Finally, the concluding remarks in Section~\ref{sec:conclusion}
compare our theorems to previously existing similar results and put our
results into perspective.

\section{Preliminaries}
\label{sec:prelim}

\subsection{Boolean Relations and Relational Clones}

An $n$-ary \emph{Boolean relation}~$R$ is a subset of $\set{0,1}^n$;
its elements $(b_1, \ldots, b_n)$ are also written as
$b_1 \cdots b_n$.  Let~$V$ be a set of variables. An \emph{atomic
  constraint}, or an \emph{atom}, is an expression $R(\vec x)$,
where~$R$ is an $n$-ary relation and $\vec x$ is an $n$-tuple of
variables from~$V$.  Let~$\Gamma$ be a non-empty finite set of Boolean
relations, also called a \emph{constraint language}. A (conjunctive)
\emph{$\Gamma$-formula} is a finite conjunction of atoms
$R_1(\vec{x_1}) \land \cdots \land R_k(\vec{x_k})$, where the~$R_i$
are relations from~$\Gamma$ and the $\vec{x_i}$ are variable tuples of
suitable arity. For technical reasons in connection with reductions we
also allow empty conjunctions ($k=0$) here. Such formulas elegantly take
care of certain marginal cases at the cost of adding only one additional
trivial problem instance.

An \emph{assignment} is a mapping $m\colon V \rightarrow \set{0,1}$
assigning a Boolean value $m(x)$ to each variable $x \in V$.
In a given context we can assume $V$ to be finite, by restricting it
e.g.\ to the variables occurring in a formula. If we impose
an arbitrary but fixed order on the variables, say $x_1, \ldots, x_n$,
then the assignments can be identified with elements from $\set{0,1}^n$.
The $i$-th component of a tuple~$m\in\set{0,1}^n$
is denoted by~$m[i]$ and corresponds to the value of the $i$-th
variable, i.e., $m[i]=m(x_i)$.  The \emph{Hamming weight}
$\hw(m) = \card{\Set{i}{m[i]=1}}$ of~$m$ is the number of~$1$s in the
tuple~$m$. The \emph{Hamming distance}
$\hd(m,m') = \card{\Set{i}{m[i] \neq m'[i]}}$ of~$m$ and~$m'$ is the
number of coordinates on which the tuples disagree. The
complement~$\cmpl{m}$ of a tuple~$m$ is its pointwise complement,
$\cmpl m[i] = 1- m[i]$.

An assignment~$m$ satisfies a constraint $R(x_1,\ldots ,x_n)$ if
$(m(x_1), \ldots , m(x_n))\in R$ holds.  It satisfies the
formula~$\varphi$ if it satisfies all its atoms; $m$ is said to be
a \emph{model} or \emph{solution} of~$\varphi$ in this case. We use $[\varphi]$ to
denote the set of models of~$\varphi$.
For a term~$t$, $[t]$ is the set of assignments for which $t$ evaluates to~$1$.
Note that $[\varphi]$ and $[t]$ represent Boolean relations.
If the variables of~$\varphi$ are not
explicitly enumerated in parentheses as parameters, they are
implicitly considered to be ordered lexicographically.
In sets of relations represented this
way we usually omit the brackets.  A \emph{literal} is a variable~$v$,
or its negation~$\neg v$. Assignments are extended to literals by
defining $m(\neg v)=1-m(v)$.

Table~\ref{tab:funrel} defines Boolean functions and relations needed
later on, in particular exclusive or $[x \oplus y]$,
not-all-equal $\nae^3$, $k$-ary disjunction $\bor^k$,
and $k$-ary negated conjunction $\nand^k$.

\begin{table}[t]
  \caption{List of some Boolean functions and relations}%
  \label{tab:funrel}
  \centering
  \begin{displaymath}
    \begin{array}[t]{@{}rcl@{\qquad}rcl@{}}
      x \oplus y &=& x + y \pmod{2}
      &
      \bor^k &=& \set{0,1}^k\smallsetminus\set{0 \dotsm 0} \text{ for $k \geq 1$}\\
      {\eq}  &=& \set{00,11}
      &
      \nand^k &=& \set{0,1}^k\smallsetminus\set{1 \dotsm 1} \text{ for $k \geq 1$}\\
      \dup^3 &=& \set{0,1}^3 \smallsetminus \set{010, 101}
      &\even^k &=& \set{(a_1, \ldots, a_k) \in \set{0,1}^k \mid
                  \sum_{i=1}^k a_i \text{ is even}}\\
      \nae^3 &=& \set{0,1}^3 \smallsetminus \set{000, 111}
      &
      \odd^k &=& \set{(a_1, \ldots, a_k) \in \set{0,1}^k \mid
                  \sum_{i=1}^k a_i \text{ is odd}}\\
      S_0 &=& [(x_1\land x_4) \eq (x_2\land x_3)]
      &
      S_1 &=& [S_0(\neg x_1,\neg x_2,\neg x_3,\neg x_1)]\\
      S_2 &=& [(\neg x_1\lor\neg x_2) \to \neg x_3]\\
      \multicolumn{6}{c}{\even^k_{k\neq} = \set{(a_1, \ldots, a_{2k}) \in \set{0,1}^{2k} \mid
                          \even^k(a_1, \ldots, a_k) \land \bigwedge_{i=1}^k \left(a_{k+i}\eq \neg a_i  \right)}}
    \end{array}
  \end{displaymath}
\end{table}

\begin{table}[b]
  \caption{Some relevant Boolean co-clones with bases}
  \label{tab:clones}
  \centering
  \begin{displaymath}
    \begin{array}[t]{@{}ll@{}}
     \iS_0^k    & \set{\bor^k} \\
     \iS_1^k    & \set{\nand^k} \\
     \iS_{00}^k & \set{\bor^k, x \to y, \neg x, x} \\
     \iS_{10}^k & \set{\nand^k, \neg x, x, x \to y } \\
     \iD_1      & \set{x \oplus y, x} \\
     \iD_2      & \set{x \oplus y, x \to y}
   \end{array}
   \qquad
   \begin{array}[t]{@{}ll@{}}
      \iL        & \set{\even^4} \\
      \iL_2      & \set{\even^4, \neg x, x} \\
      \iV        & \set{x \lor y \lor \neg z} \\
      \iV_2      & \set{x \lor y \lor \neg z, \neg x, x} \\
      \iE        & \set{\neg x \lor \neg y \lor z} \\
      \iE_2      & \set{\neg x \lor \neg y \lor z, \neg x, x}
    \end{array}
    \qquad
    \begin{array}[t]{@{}ll@{}}
       \iN        & \set{\dup^3} \\
       \iN_2      & \set{\nae^3} \\
       \iI        & \set{\even^4, x \to y} \\
       \iI_0      & \set{\even^4, x \to y, \neg x}\\
       \iI_1      & \set{\even^4, x \to y, x}\\
       \iM_2      & \set{x \to y, \neg x, x}
     \end{array}
  \end{displaymath}
\end{table}

Throughout the text we refer to different types of Boolean constraint
relations following Schaefer's terminology~\cite{Schaefer-78} (see
also the monograph~\cite{CreignouKS-01} and the
survey~\cite{BoehlerCRV-04}).  A Boolean relation~$R$ is
\begin{inparaenum}[(1)]
\item \emph{$1$-valid} if $1 \cdots 1 \in R$ and
  \emph{$0$-valid} if $0 \cdots 0 \in R$,
\item \emph{Horn} (\emph{dual Horn}) if~$R$ can be represented by a
  formula in conjunctive normal form (CNF) with at most one
  unnegated (negated) variable per clause,
\item \emph{monotone} if it is both Horn and dual Horn,
\item \emph{bijunctive} if it can be represented by a CNF formula with
  at most two literals per clause,
\item \emph{affine} if it can be represented by an affine system of
  equations $A x = b$ over~$\ZZ_2$,
\item \emph{complementive} if for each $m \in R$ also $\cmpl m \in R$,
\item \emph{implicative hitting set-bounded$+$} with bound~$k$ (denoted by $\IHSBp k$)
  if~$R$ can be represented by a CNF formula with clauses of the form
  $(x_1 \lor \cdots \lor x_k)$, $(\neg x \lor y)$, $x$, and~$\neg x$,
\item \emph{implicative hitting set-bounded$-$} with bound~$k$ (denoted by $\IHSBm k$)
  if~$R$ can be represented by a CNF formula with clauses of the form
  $(\neg x_1 \lor \cdots \lor \neg x_k)$, $(\neg x \lor y)$, $x$,
  and~$\neg x$.
\end{inparaenum}
A set~$\Gamma$ of Boolean relations is called $0$-valid ($1$-valid,
Horn, dual Horn, monotone, affine, bijunctive, complementive, $\IHSBp k$,
$\IHSBm k$) if \emph{every} relation in~$\Gamma$ is $0$-valid ($1$-valid,
Horn, dual Horn, monotone, affine, bijunctive, complementive, $\IHSBp k$,
$\IHSBm k$).

A formula constructed from atoms by conjunction, variable
identification, and existential quantification is called a
\emph{primitive positive formula} (\emph{pp-formula}). If $\varphi$ is
such a formula, we write again $[\varphi]$ for its set of models, i.e.,
the Boolean relation defined by $\varphi$. As above the coordinates of
this relation are understood to be the variables of~$\varphi$ in
lexicographic order, unless otherwise stated by explicit enumeration. We
denote by
$\cc \Gamma$ the set of all relations that can be expressed using
relations from $\Gamma\cup\set{\eq}$, conjunction, variable identification
(and permutation), cylindrification, and existential
quantification, i.e., the set of all relations that are primitive
positively definable from $\Gamma$ and equality. The set~$\cc \Gamma$ is called the \emph{co-clone}
generated by~$\Gamma$. A \emph{base} of a co-clone~$\B$ is a set of
relations~$\Gamma$ such that $\cc \Gamma = \B$, i.e., just a generating set
with regard to primitive positive definability including equality. Note
that traditionally
(e.g.~\cite{JanovMucnik1959}),
the notion of \emph{base} also involves minimality with
respect to set inclusion.
Our use
of the term \emph{base} is in accordance with~\cite{BoehlerRSV-05}, where
finite bases for all Boolean co-clones have been determined.
Some of these are listed in Table~\ref{tab:clones}. The sets of relations being
$0$-valid, $1$-valid, complementive, Horn, dual Horn, affine,
bijunctive, 2affine (both bijunctive and affine), monotone, $\IHSBp k$,
and $\IHSBm k$ each form a co-clone denoted by
$\iI_0$, $\iI_1$, $\iN_2$, $\iE_2$, $\iV_2$, $\iL_2$, $\iD_2$,
$\iD_1$, $\iM_2$, $\iS_{00}^k$, and~$\iS_{10}^k$, respectively;
see Table~\ref{tab:cclones}.

\begin{table}[t]
  \caption{Sets of Boolean relations with their names determined by co-clone inclusions}
  \label{tab:cclones}
  \centering
  \begin{displaymath}
    \begin{array}{lcl@{\hspace*{2cm}}lcl}
      \Gamma \subseteq \iI_0 & \Leftrightarrow & \Gamma \text{ is $0$-valid} &
      \Gamma \subseteq \iI_1 & \Leftrightarrow & \Gamma \text{ is $1$-valid}\\
      \Gamma \subseteq \iE_2 &  \Leftrightarrow & \Gamma \text{ is Horn} &
      \Gamma \subseteq \iV_2 &  \Leftrightarrow & \Gamma \text{ is dual Horn}\\
      \Gamma \subseteq \iM_2 &  \Leftrightarrow & \Gamma \text{ is monotone} &
      \Gamma \subseteq \iD_2 &  \Leftrightarrow & \Gamma \text{ is bijunctive}\\
      \Gamma \subseteq \iL_2 &  \Leftrightarrow & \Gamma \text{ is affine} &
      \Gamma \subseteq \iD_1 &  \Leftrightarrow & \Gamma \text{ is 2affine}\\
      \Gamma \subseteq \iN_2 &  \Leftrightarrow & \Gamma \text{ is complementive} &
      \Gamma \subseteq \iI   &  \Leftrightarrow & \Gamma \text{ is both $0$- and $1$-valid}\\
      \Gamma \subseteq \iS_{00}^k &  \Leftrightarrow & \Gamma \text{ is $\IHSBp k$} &
                                                                                                 \Gamma \subseteq \iS_{10}^k &  \Leftrightarrow & \Gamma \text{ is $\IHSBm k$}
    \end{array}
  \end{displaymath}
\end{table}

We will also use a weaker closure than $\cc \Gamma$, called
\emph{conjunctive closure} and denoted by~$\ccc \Gamma$, where the
constraint language~$\Gamma$ is closed under conjunctive definitions,
but not under existential quantification or addition of explicit
equality constraints.
Sets of relations of the form $W = \ccc{W\cup\set{\eq}}$ are called
\emph{weak systems} and are in a one-to-one correspondence with so-called
strong partial clones~\cite{Romov1981}. It is a well-known consequence of
the Galois theory developed in~\cite{Romov1981} that for every co-clone
$\cc{\Gamma'}$ whose corresponding clone is finitely generated (this
presents no restriction in the Boolean case), there is a largest partial
clone whose total part coincides with that clone,
cf.~\cite[Theorem~20.7.2]{Lau-06} or
see~\cite[Theorems~4.6, 4.7, 4.11]{SchnoorS-08} for a proof in the
Boolean case. This largest partial clone even is a strong partial clone,
and hence, there is a least weak system~$W$ under inclusion such that
$\cc W = \cc{\Gamma'}$. Any \emph{finite} weak generating set~$\Gamma$ of
this weak system~$W$, i.e., $W =\ccc{\Gamma\cup\set{\eq}}$, is called a
\emph{weak base} of~$\cc{\Gamma'}$,
see~\cite[Definition~4.2]{SchnoorS-08}. Such a set~$\Gamma$, in
particular, is a finite base of the co-clone~$\cc{\Gamma'}$. Finally, to
get from the closure operator~$\ccc{\Gamma\cup\set{\eq}}$
(which is hard to handle in the context of our problems)
to~$\ccc{\Gamma}$ (which is easy to handle), one needs the notion of
irredundancy. A relation~$R$ is called \emph{irredundant}, if it has
neither duplicate nor fictitious coordinates. It can be observed from the
proofs of Proposition~5.2 and Corollary~5.6 in~\cite{SchnoorS-08} or
from~\cite[Proposition~3.11]{SchnoorDiss}, that
$R\in \ccc{\Gamma\cup\set{\eq}}$ implies $R\in \ccc{\Gamma}$ for any
irredundant relation~$R$. Following Schnoor~\cite[p.~30]{SchnoorDiss}, we
call a weak base of~$\cc{\Gamma'}$ consisting exclusively of irredundant
relations an \emph{irredundant weak base}. 
Thus, if~$\Gamma$ is an irredundant weak base of~$\cc{\Gamma'}$, then the
minimality of the weak system $W = \ccc{\Gamma\cup\set{\eq}}$ implies that
$\Gamma\subseteq W\subseteq
\ccc{\Gamma'\cup\set{\eq}}$
(cf.~\cite[Corollary~4.3]{SchnoorS-08}), and thus
$\Gamma\subseteq \ccc{\Gamma'}$ because of irredundancy.
Hence, we obtain the following useful tool.

\begin{theorem}[{Schnoor~\cite[Corollary~3.12]{SchnoorDiss}}]\label{thm:weakbases}
  If\/~$\Gamma$ is an irredundant weak base of a co-clone~$\iC$,
  e.g.\ a minimal weak base of\/~$\iC$, then
  $\Gamma\subseteq \ccc{\Gamma'}$ holds for any base~$\Gamma'$
  of\/~$\iC$.
\end{theorem}

According to Lagerkvist~\cite{Lagerkvist-14}, a \emph{minimal weak base}
is an irredundant weak base satisfying an
additional minimality property that ensures small cardinality.
The utility of Theorem~\ref{thm:weakbases} comes in particular from the fact that Lagerkvist determined
minimal weak bases for all finitely generated Boolean co-clones
in~\cite{Lagerkvist-14}. For our purposes we note that each of the
co-clones $\iV$, $\iV_0$, $\iV_1$, $\iV_2$, $\iN$, $\iN_2$, and $\iI$ is
generated by a minimal weak base consisting of a single relation
(Table~\ref{tab:weakbases}).

Another source of weak base relations without duplicate coordinates comes
from the following construction:
let $\graphic$ be the $2^{n}$-ary relation that is given by the value
tables (in some chosen enumeration) of the~$n$ distinct $n$-ary
projection functions. More formally, let
$\beta\colon 2^{n}\to \set{0,1}^n$ be the reader's preferred bijection
between the index set $2^n = \set{0,\dotsc,2^{n-1}}$ and the set of all
arguments of an $n$-ary Boolean function~-- often lexicographic
enumeration is chosen here for presentational purposes, but the order of
enumeration of the $n$-tuples does not matter as long as it remains
fixed. Then $\graphic = \Set{e_i\circ \beta}{1\leq i\leq n}$ where
$e_i\colon \set{0,1}^n\to\set{0,1}$ denotes the projection function onto
the $i$-th coordinate. Let~$C$ be a clone with corresponding
co-clone~$\iC$. Since $\iC$ is closed with respect to intersection of
relations of identical arity, for any $k$-ary relation $R$, there is a
least $k$-ary relation in $\iC$ containing $R$, scilicet
$\GammaF{C}{R}
    :=\bigcap\Set{R'\in \iC}{R'\supseteq R, R'\text{ $k$-ary}}$.
Traditionally, e.g.~\cite[Sect.~2.8, p.~134]{Lau-06} or
\cite[Definition~1.1.16, p.~48]{PoeschelK-79}, this relation is denoted
by $\Gamma_C(R)$, but here we have chosen a different notation to avoid
confusion with constraint languages.
It is well known, e.g.~\cite[Satz~1.1.19(i), p.~50]{PoeschelK-79}, and
easy to see that $\GammaF{C}{R}$ is completely
determined by the $\ell$-ary part of~$C$ whenever $\ell \geq \card{R}$:
given any enumeration of $\emptyset\neq R = \set{r_1,\dotsc,r_{\ell}}$
(for technical reasons we have to exclude the case $\ell = 0$ in
this presentation because we do not consider clones with nullary
operations here) we have
$\GammaF{C}{R}
  =\Set{f\circ(r_1,\dotsc,r_{\ell})}{f\in C, f \text{ $\ell$-ary}}$,
where $f\circ(r_1,\dotsc,r_{\ell})$ denotes the row-wise application
of~$f$ to a matrix whose columns are formed by the tuples
$r_1,\dotsc,r_{\ell}$.
Relations of the form $\GammaF{C}{\graphic}$ represent the $n$-ary part
of the clone $C$ as a $2^{n}$-ary relation and are called \emph{$n$-th
graphic} of $C$ (cf.\ e.g.~\cite[p.~133 and Theorem~2.8.1(b)]{Lau-06}).
Indeed, the previous characterization of $\GammaF{C}{\graphic}$ yields
$\GammaF{C}{\graphic}
=\Set{f\circ(e_1\circ\beta,\dotsc,e_n\circ\beta)}{
                        f\in C, f \text{ $n$-ary}}
=\Set{f\circ(e_1,\dotsc,e_n)\circ\beta}{
                        f\in C, f \text{ $n$-ary}}
=\Set{f\circ\beta}{f\in C, f \text{ $n$-ary}}$.
With the help of this description of $\GammaF{C}{\graphic}$
and standard clone theoretic manipulations, one can easily verify the
following result, identifying possible candidates for irredundant
singleton weak bases.

\begin{theorem}[{\cite[Theorem~4.11]{SchnoorS-08}}]%
\label{thm:weakbases-from-graphics}
  Let $C$ be a clone and $R =\GammaF{C}{\set{r_1,\dotsc,r_n}}$ with
  $n\geq 1$, then $\GammaF{C}{\graphic}$ gives a singleton weak base of
  $\cc{\set{R}}$ without duplicate coordinates.
\end{theorem}

\begin{table}[b]
  \caption{Minimal weak bases for some co-clones}
  \label{tab:weakbases}
  \centering
  \begin{displaymath}
    \begin{array}[t]{@{}lcl@{}}
      R_{\iL} &=& \even^4\\
      R_{\iL_0} &=& {\even^3} \times \set{0}\\
      R_{\iL_1} &=& {\odd^3} \times \set{1}\\
      R_{\iL_2} &=& {\even^3_{3\neq}} \times \set{0} \times \set{1}\\
      R_{\iL_3} &=& \even^4_{4\neq}\\
      R_{\iN} &=& {\even^4}\cap S_0
    \end{array}
    \qquad
    \begin{array}[t]{@{}lcl@{}}
      R_{\iV} &=& (S_1\times \set{0,1})\cap (\set{0,1}\times S_2)\\
      R_{\iV_0} &=& S_1\times \set{0}\\
      R_{\iV_1} &=& R_{\iV}\times \set{1}\\
      R_{\iV_2} &=& S_1\times \set{0}\times \set{1}\\
      R_{\iN_2} &=& [R_{\iN}(x_1,\dotsc,x_4)\land \bigwedge_{i=1}^4 x_{i+4}
                    \eq\neg x_i]\\
      R_{\iI} &=& [S_1(\neg x_1,\neg x_2,\neg x_3)\land S_1(x_4,x_2,x_3)]
    \end{array}
  \end{displaymath}
\end{table}

\subsection{Approximability, Reductions, and Completeness}

We assume that the reader has a basic knowledge of approximation
algorithms and complexity theory.  We recall some basic notions of
approximation algorithms and complexity theory; for details see the
monographs~\cite{AusielloCGKMSP-99,CreignouKS-01}.

A \emph{combinatorial optimization problem}~$\Pcal$ is a quadruple
$(I, \sol, \obj, \goal)$, where:
\begin{compactitem}[$\bullet$]
\item $I$ is the set of admissible \emph{instances} of~$\Pcal$.
\item $\sol(x)$ denotes the set of \emph{feasible solutions} for every
  instance~$x\in I$.
\item $\obj(x,y)$ denotes the non-negative integer measure of~$y$
  for every instance~$x\in I$ and every feasible
  solution~$y\in\sol(x)$;
  $\obj$ is also called \emph{objective function}.
\item $\goal \in \set{\min, \max}$ denotes the
      \emph{optimization goal} for~$\Pcal$.
\end{compactitem}
A combinatorial optimization problem is said to be an
\emph{$\NP$-optimization problem} ($\NPO$-problem) if
\begin{compactitem}[$\bullet$]
\item the instances and solutions are recognizable in polynomial
  time,
\item the size of the solutions in~$\sol(x)$ is polynomially bounded
  in the size of~$x$, and
\item the objective function $\obj$ is computable in polynomial
  time.
\end{compactitem}
The optimal value of the objective function for the solutions of an
instance~$x$ is denoted by $\OPT(x)$.  In our case the optimization
goal will always be minimization, i.e., $\OPT(x)$ will be the minimum.

Given an instance $x \in I$ with a feasible solution $y \in \sol(x)$ and
a real number $r\geq 1$, we say that $y$ is \emph{$r$-approximate} if
$\obj(x,y)\leq r\OPT(x)$ holds and our goal is minimization, or
$\obj(x,y)\geq \OPT(x)/r$ and we consider a maximization problem.

Let~$A$ be an algorithm that for any instance~$x$ of~$\Pcal$ such that
$\sol(x)\neq \emptyset$ returns a feasible solution $A(x) \in \sol(x)$.
Given an arbitrary
function $r\colon \NN \to [1,\infty)$, we say that~$A$ is an
$r(n)$-approximate algorithm for~$\Pcal$ if for any instance $x \in I$
having feasible solutions the algorithm returns an
$r(\card{x})$-approximate solution,
where~$\card x$ is the size of~$x$. If an $\NPO$ problem~$\Pcal$
admits an $r(n)$-approximate polynomial-time algorithm, we say
that~$\Pcal$ is approximable within $r(n)$.

An $\NPO$ problem~$\Pcal$ is in the class $\PO$ if the optimum is
computable in polynomial time (i.e. if~$\Pcal$ admits a
$1$-approximate polynomial-time algorithm). $\Pcal$ is in the class
$\APX$ ($\pAPX$) if it is approximable within a constant (polynomial)
function in the size of the instance~$x$.  $\NPO$ is the class of all
$\NPO$ problems and $\NPOPB$ is the class of all $\NPO$ problems where
the objective function is polynomially bounded. The following
inclusions hold for these approximation complexity classes:
$\PO \subseteq \APX \subseteq \pAPX \subseteq \NPO$. All inclusions
are strict unless $\P = \NP$.

For reductions among decision problems we use the polynomial-time
many-one reduction denoted by~$\mle$. Many-one equivalence between
decision problems is denoted by~$\meq$.  For reductions among
optimization problems we use approximation preserving reductions, also
called $\AP$-reductions, denoted by~$\aple$. $\AP$-equivalence between
optimization problems is denoted by~$\apeq$.

We say that an optimization problem $\Pcal$ \emph{$\AP$-reduces} to
another optimization problem $\Qcal$, denoted $\Pcal \aple \Qcal$, if
there are two polynomial-time computable functions~$f$ and~$g$ and a
real constant~$\alpha\geq 1$ such that for all $r>1$ and all
$\Pcal$-instances~$x$ the following conditions hold.
\begin{compactitem}[$\bullet$]
\item $f(x)$ is a $\Qcal$-instance or the generic unsolvable
      instance~$\bot$ (which is not part of~$\Qcal$).
\item If $x$ admits feasible solutions, then $f(x)$ is different from $\bot$
  and also admits feasible solutions.
\item For any feasible solution~$y'$ of~$f(x)$, $g(x,y')$ is a
  feasible solution of~$x$.
\item If~$y'$ is an $r$-approximate solution of the $\Qcal$-instance~$f(x)$,
  then $g(x,y')$ is an $(1+(r-1)\alpha + \Lo(1))$-approximate solution
  of the $\Pcal$-instance~$x$, where $\Lo(1)$ refers to the size of~$x$.
\end{compactitem}
Our definition of $\AP$-reducibility slightly extends the one
in~\cite{AusielloCGKMSP-99} by introducing a generic unsolvable
instance~$\bot$.  This extension allows us to reduce problems with
unsolvable instances to such without as long as the unsolvable
instances can be detected in polynomial time, by making $f$
map the unsolvable instances to~$\bot$.
This practice has been implicit in previous work, e.g.~\cite{KhannaSTW-01}.

We also need a slightly non-standard variation of $\AP$-reductions. We
say that an optimization problem $\Pcal$ \emph{$\AP$-Turing-reduces}
to another optimization problem $\Qcal$ if there is a polynomial-time
oracle algorithm~$A$ and a constant~$\alpha\geq 1$ such that for all
$r>1$ on any input~$x$ for~$\Pcal$
\begin{compactitem}[$\bullet$]
\item if all oracle calls with a $\Qcal$-instance~$x'$ are answered
  with a feasible $\Qcal$-solution~$y$ for~$x'$, then~$A$
  outputs a feasible $\Pcal$-solution for~$x$, and
\item if for every call the oracle answers with an $r$-approximate
  solution, then $A$ computes a $(1+(r-1)\alpha + \Lo(1))$-approximate
  solution for the $\Pcal$-instance~$x$.
\end{compactitem}
It is straightforward to check that $\AP$-Turing-reductions are
transitive. Moreover, if $\Pcal$ $\AP$-Turing-reduces to $\Qcal$ with
constant $\alpha$ and $\Qcal$ has an $r(n)$-approximation algorithm,
then there is an $\alpha r(n)$-approximation algorithm for $\Pcal$.

We will relate our problems to well-known optimization problems, by
calling the problem~$\Pcal$ under investigation $\Qcal$-complete if
$\Pcal\apeq \Qcal$. This notion of completeness is stricter than the
one in~\cite{KhannaSTW-01}, since the latter relies on
$\mathrm{A}$-reductions. For $\Qcal$, we will consider the following
optimization problems analyzed in~\cite{KhannaSTW-01}.

\optproblem{$\OptMinOnes(\Gamma)$}%
{A conjunctive formula~$\varphi$ over relations from~$\Gamma$.}%
{An assignment~$m$ satisfying~$\varphi$.}%
{Minimum Hamming weight $\hw(m)$.}

\optproblem{$\OptWMinOnes(\Gamma)$}%
{A conjunctive formula~$\varphi$ over relations from~$\Gamma$ and a weight
  function $w\colon V \to \NN$ assigning non-negative integer weights
  to the variables of $\varphi$.}%
{An assignment~$m$ satisfying~$\varphi$.}%
{Minimum value $\sum_{x: m(x)=1}w(x)$.}

We now define some well-studied problems to which we will relate our problems.
Note that these problems do not depend on any parameter.

\optproblem{$\NCW$}%
{A matrix $A \in \ZZ_2^{k\times l}$ and a vector $m\in \ZZ_2^l$.}%
{A vector $x\in \ZZ_2^k$.}%
{Minimum Hamming distance $\hd(xA,m)$.}

\optproblem{$\MinDist$}%
{A matrix $A\in \ZZ_2^{k\times l}$.}%
{A non-zero vector $x \in \ZZ_2^l$ with $A x = 0$.}%
{Minimum Hamming weight $\hw(x)$.}

\optproblem{$\MinHD$}%
{A conjunctive formula~$\varphi$ over relations from~$\set{x\lor y\lor
  \neg z, x, \neg x}$.}%
{An assignment $m$ to $\varphi$.}%
{Minimum number of unsatisfied conjuncts of $\varphi$.}

$\NCW$, $\MinDist$ and $\MinHD$ are known to be $\NP$-hard to approximate
within a factor $2^{\LOmega(\log^{1-\varepsilon}(n))}$ for every
$\varepsilon > 0$~\cite{AroraBSS-97,DumerMS-03,KhannaSTW-01}.
Thus if a problem $\Pcal$ is equivalent to any of these problems,
it follows that $\Pcal \notin \APX$ unless $\P=\NP$.

\subsection{Satisfiability}

We also use the classic problem $\SAT(\Gamma)$ asking for the
satisfiability of a given conjunctive formula over a constraint
language~$\Gamma$. Schaefer~\cite{Schaefer-78} completely classified its complexity.
$\SAT(\Gamma)$ is polynomial-time decidable
if~$\Gamma$ is $0$-valid $(\Gamma \subseteq \iI_0)$, $1$-valid
$(\Gamma \subseteq \iI_1)$, Horn $(\Gamma \subseteq \iE_2)$, dual Horn
$(\Gamma \subseteq \iV_2)$, bijunctive $(\Gamma \subseteq \iD_2)$, or
affine $(\Gamma \subseteq \iL_2)$; otherwise it is
$\NP$-complete. Moreover, we need the decision
problem $\AnotherSat(\Gamma)$: Given a conjunctive formula
over~$\Gamma$ and a satisfying assignment~$m$, is there another
satisfying assignment~$m'$ different from~$m$?
The complexity of this problem was completely classified by
Juban~\cite{Juban-99}.
$\AnotherSat(\Gamma)$
is polynomial-time decidable if~$\Gamma$ is both $0$- and $1$-valid
$(\Gamma \subseteq \iI)$, complementive $(\Gamma \subseteq \iN_2)$,
Horn $(\Gamma \subseteq \iE_2)$, dual Horn $(\Gamma \subseteq \iV_2)$,
bijunctive $(\Gamma \subseteq \iD_2)$, or affine
$(\Gamma \subseteq \iL_2)$; otherwise it is $\NP$-complete.

\subsection{Linear and Integer Programming}

A \emph{unimodular matrix} is a square integer matrix having
determinant $+1$ or $-1$. A \emph{totally unimodular matrix} is a
matrix for which every square non-singular submatrix is unimodular. A
totally unimodular matrix need not be square itself. Any totally
unimodular matrix has only $0$, $+1$ or $-1$ entries.
If~$A$ is a totally unimodular matrix and
$\vec b$ is an integral vector, then for any given
linear functional~$f$ such that the linear program
$\min\Set{f(\vec x)}{A \vec x \geq\vec b}$
has a real minimum~$\vec{x}$, it also has an integral minimum point~$\vec{x}$.
That is, the feasible region $\Set{\vec{x}}{A\vec{x}\geq \vec{b}}$ is an
integral polyhedron. For this reason, linear programming methods can be
used to obtain the solutions for integer linear programs in this case.
Linear programs can be solved in polynomial time, hence so can integer
programs with totally unimodular matrices.
For details see the monograph by Schrijver~\cite{Schrijver-86}.

\section{Results}
\label{sec:results}

This section presents the problems we consider and our results; the
proofs follow in subsequent sections. The input to all our problems is
a conjunctive formula over a constraint language. The satisfying
assignments of the formula, i.e.\ its models or solutions, form a
Boolean relation that can be understood as an associated generalized
binary code. As for linear codes, the minimization target is always
the Hamming distance between the codewords or models.  Our three problems
differ in the information additionally available for computing the
required Hamming distance.

Given a formula and an arbitrary assignment, the first problem asks
for a solution closest to the given assignment.

\optproblem{$\NearestSol(\Gamma)$, $\NSOL(\Gamma)$}%
{A conjunctive formula~$\varphi$ over relations from~$\Gamma$ and an
  assignment~$m$ to the variables occurring in~$\varphi$, which is not
  required to satisfy~$\varphi$.}%
{An assignment~$m'$ satisfying~$\varphi$
(i.e.\ a codeword of the code described by~$\varphi$).}%
{Minimum Hamming distance $\hd(m,m')$.}

Note that the problem generalizes the $\OptMinOnes$ problem
from~\cite{KhannaSTW-01}. Indeed, if we take the all-zero assignment
$m = 0 \cdots 0$ as part of the input, we get exactly the
$\OptMinOnes$ problem as a special case.

\begin{theorem}[{\textnormal{illustrated in Figure~\ref{fig:nsol-coclones}}}]\label{thm:NSol}
  For a given Boolean constraint language~$\Gamma$ the optimization problem
  $\NSOL(\Gamma)$ is
  \begin{compactenum}[(i)]
  \item in~$\PO$ if\/ $\Gamma$ is
    \begin{compactenum}[(a)]
    \item 2affine $(\Gamma\subseteq\iD_1)$ or
    \item monotone $(\Gamma\subseteq\iM_2)$;
    \end{compactenum}
  \item $\APX$-complete if
    \begin{compactenum}[(a)]
    \item
      $\Gamma$ generates  $\iD_2$
      $(\cc \Gamma = \iD_2)$,
      or
    \item
      $[x\lor y] \in \cc \Gamma$
      and $\Gamma$ is $\IHSBp k$
      $(\iS_0^2 \subseteq \cc \Gamma \subseteq \iS_{00}^k)$ for some
      $k \in \NN$, $k\geq 2$, or
    \item $[\neg x \lor \neg y] \in \cc \Gamma$
      and $\Gamma$ is $\IHSBm k$
      $(\iS_1^2 \subseteq \cc \Gamma \subseteq \iS_{10}^k)$ for some
      $k \in \NN$, $k\geq 2$, or
    \end{compactenum}
  \item $\NCW$-complete if\/ $\Gamma$ is
    exactly affine $(\iL\subseteq\cc \Gamma\subseteq\iL_2)$;
  \item $\MinHD$-complete if\/ $\Gamma$ is
    \begin{compactenum}[(a)]
    \item exactly Horn $(\iE\subseteq\cc \Gamma\subseteq \iE_2)$ or
    \item exactly dual Horn $(\iV\subseteq\cc \Gamma\subseteq \iV_2)$;
    \end{compactenum}
  \item $\pAPX$-complete if\/
    $\Gamma$ does not contain an affine relation and it is
    \begin{compactenum}[(a)]
    \item $0$-valid $(\iN\subseteq\cc \Gamma \subseteq \iI_0)$ or
    \item $1$-valid $(\iN\subseteq\cc \Gamma \subseteq \iI_1)$;
    and
    \end{compactenum}
  \item otherwise $(\iN_2\subseteq\cc\Gamma)$ it is $\NP$-complete to
    decide whether a feasible solution for $\NSOL(\Gamma)$ exists.
  \end{compactenum}
\end{theorem}
\begin{proof}
  The proof is split into several propositions presented in
  Section~\ref{sec:proofsNSOL}.
  \begin{compactenum}[(i)]
  \item 
    See Propositions~\ref{prop:NSOL-iD1} and~\ref{prop:NSOL-iM2}.
  \item 
    See Propositions~\ref{prop:NSOL-iS0^2-iS1^2}, \ref{prop:NSOL-D_2},
    and~\ref{prop:NSOL-S_00}.
  \item 
    See Corollary~\ref{cor:NSOL-NCW-hard-L} and
    Proposition~\ref{prop:NSOL-MinOnes-to-weak_base-L_2}.
  \item 
    See Propositions~\ref{prop:iV_2_to_MinOnes} and~\ref{prop:MinOnes_to_iV_2}.
  \item 
    See Proposition~\ref{prop:duphardnc}.
  \item 
    See Proposition~\ref{prop:NSOL-iN_2-BR}.
    \qedhere
  \end{compactenum}
\end{proof}

\newcommand\NSOLstyle
    {R/.cstyle=PO,M/.cstyle=PO,D/.cstyle=PO,
     D2/.cstyle=APX,S2/.cstyle=APX,S3/.cstyle=APX,
     S0/.cstyle=NA,S1/.cstyle=NA,
     E/.cstyle=MHD,V/.cstyle=MHD,
     L/.cstyle=NCW,
     N/.cstyle=pAPX,N2/.cstyle=NPO,N/.lstyle={text=white},
     I/.cstyle=pAPX,I2/.cstyle=NPO,I/.lstyle={text=white}
    }
\begin{figure}
  \centering
  \expandafter\postlattice\expandafter[\NSOLstyle]
  \bigskip

  \begin{tabular}{lll}
    \makebox[\boxwidth][l]{\protect\postlegend{PO} in PO}
    & \makebox[\boxwidth][l]{\protect\postlegend{APX} $\APX$-complete}
      & \makebox[\boxwidth][l]{\protect\postlegend{NCW} $\NCW$-complete}
  \\
    \makebox[\boxwidth][l]{\protect\postlegend{MHD} $\MinHD$-complete}
    & \makebox[\boxwidth][l]{\protect\postlegend{pAPX} $\pAPX$-complete}
      & \makebox[\boxwidth][l]{\protect\postlegend{NPO} feasibility $\NP$-complete}
  \\
    & \makebox[\boxwidth][l]{\protect\postlegend{NA} not applicable}
      &
  \\
  \end{tabular}
  \caption{Lattice of co-clones with complexity classification for
    $\NSOL$.}
  \label{fig:nsol-coclones}
\end{figure}

Given a constraint and one of its solutions, the second problem asks
for another solution closest to the given one.

\optproblem{$\NextSol(\Gamma)$, $\XSOL(\Gamma)$}%
{A conjunctive formula~$\varphi$ over relations from~$\Gamma$ and a
  satisfying assignment~$m$ (to the variables mentioned in~$\varphi$).}%
{An assignment~$m'\neq m$ satisfying~$\varphi$.}%
{Minimum Hamming distance $\hd(m,m')$.}

The difference between the problems $\NearestSol$ and $\NextSol$ is
the knowledge, or its absence, whether the input assignment satisfies
the constraint. Moreover, for $\NearestSol$ we may output the given
assignment if it satisfies the formula while for $\NextSol$ we have to
output an assignment different from the one given as the input.

\begin{theorem}[{\textnormal{illustrated in Figure~\ref{fig:xsol-coclones}}}]\label{thm:NOSol}
  For every constraint language~$\Gamma$ the optimization problem
  $\XSOL(\Gamma)$ is
  \begin{compactenum}[(i)]
  \item in~$\PO$ if
    \begin{compactenum}[(a)]
    \item $\Gamma$ is bijunctive $(\Gamma\subseteq\iD_2)$ or
    \item $\Gamma$ is $\IHSBp k$
      $(\Gamma\subseteq\iS_{00}^k)$ for some $k \in \NN$, $k \geq 2$ or
    \item $\Gamma$ is $\IHSBm k$
      $(\Gamma\subseteq\iS_{10}^k)$ for some $k \in \NN$, $k \geq 2$;
    \end{compactenum}
  \item $\MinDist$-complete if\/
    $\Gamma$ is exactly affine
    $(\iL\subseteq\cc \Gamma\subseteq\iL_2)$;
  \item $\MinHD$-complete under $\AP$-Turing-reductions if\/~$\Gamma$ is
    \begin{compactenum}[(a)]
    \item exactly Horn $(\iE\subseteq\cc \Gamma\subseteq \iE_2)$ or
    \item exactly dual Horn $(\iV\subseteq\cc \Gamma\subseteq \iV_2)$;
    \end{compactenum}
  \item in $\pAPX$ if\/~$\Gamma$ is
    \begin{compactenum}[(a)]
    \item exactly both $0$-valid and $1$-valid $(\cc \Gamma = \iI)$ or
    \item exactly complementive $(\iN \subseteq \cc \Gamma \subseteq \iN_2)$,
    \end{compactenum}
    where $\XSOL(\Gamma)$ is $n$-approximable but not
    $(n^{1-\varepsilon})$-approximable for any $\varepsilon>0$ unless $\P=\NP$;
  \item and otherwise $(\iI_0\subseteq\cc \Gamma$ or
    $\iI_1\subseteq\cc \Gamma)$ it is $\NP$-complete to decide whether a
    feasible solution for $\XSOL(\Gamma)$ exists.
  \end{compactenum}
\end{theorem}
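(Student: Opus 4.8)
First I would verify that the six cases are exhaustive and pairwise consistent by inspecting the relevant fragment of Post's lattice: the decisive co-clones are $\iD_2$, the two chains $\iS_{00}^k$ and $\iS_{10}^k$, then $\iL_2$, $\iE_2$, $\iV_2$, $\iI$, $\iN_2$, and finally $\iI_0$ and $\iI_1$, and every $\Gamma$ receives exactly one verdict according to where $\cc{\Gamma}$ sits. I would then separate membership from hardness. For the upper bounds it suffices to exploit the structural property guaranteed by $\Gamma\subseteq\iC$ for the appropriate $\iC$: each atom is written out over its own variables (bijunctive atoms as $2$-clauses, affine atoms as linear equations, and so on) without introducing auxiliary variables, so the incompatibility of $\XSOL$ with existential quantification never interferes with an algorithm. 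For the hardness results the reductions must insert specific gadget relations, and here existential quantification is forbidden; the engine is the weak-co-clone closure together with Theorem~\ref{thm:weakbases}: whenever $\cc{\Gamma}$ equals the co-clone under consideration, $\Gamma$ is a base of it, so the matching weak base of Lagerkvist (Table~\ref{tab:weakbases}) lies in $\cce{\Gamma}$ and supplies gadgets built by conjunction and identification alone.

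\textbf{Polynomial cases (i).} For bijunctive $\Gamma\subseteq\iD_2$ I would use that $[\varphi]$ is closed under the ternary majority operation and argue via forced flips in the implication graph: for each coordinate $i$ I fix $x_i$ to $1-m[i]$ and propagate the forced literal changes, which either shows that no solution flips coordinate~$i$ or produces the unique distance-minimal solution $m^i$ differing from $m$ at $i$; outputting the closest $m^i$ solves the problem. For $\Gamma\subseteq\iS_{00}^k$ (and symmetrically $\iS_{10}^k$) I would observe that the generators, hence all of $\Gamma$, are closed under coordinatewise $\lor$ (resp.\ $\land$) and use clauses of width at most $k$; guessing a single flipped coordinate and completing it minimally by this closure again yields a polynomial algorithm, the bounded width ensuring that the positive (resp.\ negative) clauses can be repaired locally. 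Each argument uses only the stated containment.

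\textbf{Affine and Horn/dual Horn (ii), (iii).} For affine $\Gamma$ a formula is a system $Ax=b$ over $\ZZ_2$, its solutions form the coset $m+\ker A$, and for a solution $m$ the other solutions are exactly $m+v$ with $0\neq v\in\ker A$, where $\hd(m,m+v)=\hw(v)$; minimizing this is literally $\MinDist$ for the matrix $A$, so $\XSOL(\Gamma)\aple\MinDist$ without auxiliary variables, while the converse encodes a $\MinDist$ instance (with reference solution $0\cdots0$) as parity atoms drawn from the weak base of the relevant affine co-clone, giving $\apeq$. For exactly Horn $\Gamma$ (dual Horn being symmetric) I expect the equivalence with $\MinHD$ to be the technical core. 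The algorithmic direction reuses the forced-flip scheme, but since finding the nearest solution with a prescribed flipped coordinate is itself hard for Horn, each of the $n$ resulting subproblems is delegated to a $\MinHD$ oracle and the best answer returned; issuing one oracle call per coordinate is precisely why the reduction is an $\AP$-Turing- rather than a plain $\AP$-reduction, and the composition bound for $\AP$-Turing-reductions from the preliminaries keeps the ratio under control. The hardness direction reduces $\MinHD$ to $\XSOL(\Gamma)$ using the weak base $(x_1\equiv x_2\land x_3)\land(x_2\lor x_3\to x_4)$ of $\iE$ (augmented by the literals needed for $\iE_0,\iE_1,\iE_2$) as gadgets, so that unsatisfied conjuncts of the $\MinHD$ instance correspond exactly to the coordinates in which an alternative solution is forced to differ from $m$.

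\textbf{pAPX and NPO cases (iv), (v); main obstacle.} When $\cc{\Gamma}=\iI$ both $0\cdots0$ and $1\cdots1$ are solutions, and when $\Gamma$ is complementive $\cmpl m$ is a solution; in either case an explicit solution distinct from $m$ (one of $0\cdots0$, $1\cdots1$, or $\cmpl m$) has distance at most $n$ while the optimum is at least $1$, so returning it is an $n$-approximation and places $\XSOL(\Gamma)$ in $\pAPX$. The matching $(n^{1-\varepsilon})$-inapproximability I would get from a gap reduction out of an $\NP$-hard problem, padding with fresh independent coordinates so that "yes" instances admit an alternative solution at small distance while "no" instances force distance $\Omega(n^{1-\varepsilon})$, the construction using the weak bases of $\iI$ and $\iN$ from Table~\ref{tab:weakbases}. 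Finally, in the remaining case membership in $\NPO$ is immediate, and $\NPO$-hardness follows by an $\AP$-reduction from a known $\NPO$-complete case of $\OptWMinOnes$, transported through the weak base of $\cc{\Gamma}$; since $\iI_0\subseteq\cc{\Gamma}$ or $\iI_1\subseteq\cc{\Gamma}$ the problem $\AnotherSat(\Gamma)$ is already $\NP$-hard (Juban), which rules out any finite approximation guarantee. I expect case (iii) to be the main obstacle: arranging the $\AP$-Turing reduction so the per-coordinate oracle calls faithfully reconstruct the global nearest-other-solution distance, while simultaneously engineering the weak-base gadgets so that $\MinHD$-violations translate into forced Hamming deviations; the $(n^{1-\varepsilon})$ gap constructions of case (iv) are the secondary difficulty.
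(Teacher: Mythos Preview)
Your overall architecture matches the paper's closely: upper bounds exploit structural properties directly on the given atoms, lower bounds go through weak bases and Theorem~\ref{thm:weakbases}, and the $\AP$-Turing reduction for the Horn case proceeds by fixing each coordinate in turn and calling an oracle. Two points deserve comment.

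For~(i)(b)--(c) you assume each relation in $\Gamma\subseteq\iS_{00}^k$ decomposes as a quantifier-free conjunction of base clauses, so that propagation along implications makes sense. This is true but not immediate; the paper isolates it as a separate lemma (Lemma~\ref{lem:s00-exelim}), showing by resolution that existential quantifiers can always be eliminated from pp-formulas over the $\iS_{00}^k$ base. Without this step your flip-and-propagate algorithm has no syntactic object to propagate along. Also, the bounded width~$k$ plays no role in the algorithm itself; the key structural fact is the $\lor$-polymorphism, which forces any optimal neighbour of~$m$ to flip only zeros or only ones, after which implication propagation finds~$m_x$ for each~$x$.

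For~(iii) hardness the paper does not reduce directly from $\MinHD$. It reduces from $\OptMinOnes(\Gamma\cup\{[x]\})$, which is $\MinHD$-hard by~\cite{KhannaSTW-01}, and uses the weak base only to extract an implication-like relation (Lemma~\ref{lem:constructimpl}). The construction introduces a single global variable~$x$, replaces each unit literal~$y$ by $x\to y$, and adds $v\to x$ for every variable~$v$; the resulting formula is $0$-valid, and its non-zero solutions correspond to solutions of the original, so $\OPT(\varphi)+1=\OPT(\varphi',\vec 0)$. Your proposed direct encoding of unsatisfied $\MinHD$ conjuncts as forced Hamming deviations is a different route; it may well work, but you should be aware that the paper's path through $\OptMinOnes$ avoids having to design per-clause slack gadgets inside $\cce{\Gamma}$. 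The remaining cases (affine, $\pAPX$ tightness via $\AnotherSat_{<n}$ and padding, $\NPO$ via Juban) are as you describe.
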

\begin{proof}
  The proof is split into several propositions presented in
  Section~\ref{sec:proofsXSOL}.
  \begin{compactenum}[(i)]
  \item 
    See Propositions~\ref{prop:XSOL-iD2} and~\ref{prop:XSOL-iI00m}.
  \item 
    See Proposition~\ref{prop:MinDist-hardness-XSOL}.
  \item 
    See Corollary~\ref{cor:XSOL-Horn-dual_Horn}.
  \item 
    See Propositions~\ref{prop:XSOL-iI_0-iI_1} and~\ref{prop:tightxsol}.
  \item 
    See Proposition~\ref{prop:XSOL-iI_0-iI_1}.
    \qedhere
  \end{compactenum}
\end{proof}

\newcommand\NOSOLstyle
    {R/.cstyle=PO,M/.cstyle=PO,D/.cstyle=PO,
     S2/.cstyle=PO,S3/.cstyle=PO,
     S0/.cstyle=NA,S1/.cstyle=NA,
     E/.cstyle=MHD,V/.cstyle=MHD,
     L/.cstyle=MD,
     N/.cstyle=pAPX,N/.lstyle={text=white},
     I/.cstyle=NPO,Ix/.cstyle=pAPX,I/.lstyle={text=white}
    }
\begin{figure}
  \centering
  \expandafter\postlattice\expandafter[\NOSOLstyle]
  \bigskip

  \begin{tabular}{lll}
    \makebox[\boxwidth][l]{\protect\postlegend{PO} in PO}
    & & \makebox[\boxwidth][l]{\protect\postlegend{MD} $\MinDist$-complete}
  \\
    \makebox[\boxwidth][l]{\protect\postlegend{MHD} $\MinHD$-complete}
    & \makebox[\boxwidth][l]{\protect\postlegend{pAPX} in $\pAPX$ and tight}
      & \makebox[\boxwidth][l]{\protect\postlegend{NPO} feasibility $\NP$-complete}
  \\
    & \makebox[\boxwidth][l]{\protect\postlegend{NA} not applicable}
      &
  \\
  \end{tabular}
  \caption{Lattice of co-clones with complexity classification for
    $\XSOL$.}
  \label{fig:xsol-coclones}
\end{figure}

The third problem does not take any assignments as input, but asks for
two solutions which are as close to each other as possible. We
optimize once more the Hamming distance between the solutions.

\optproblem{$\MinSolDistance(\Gamma)$, $\MSD(\Gamma)$}%
{A conjunctive formula~$\varphi$ over relations from~$\Gamma$.}%
{Two satisfying truth assignments~$m\neq m'$
 to the variables occurring in~$\varphi$.}%
{Minimum Hamming distance $\hd(m,m')$.}

The $\MinSolDistance$ problem enlarges the notion of minimum distance
of an error correcting code. The following theorem is a more
fine-grained analysis of the result published by Vardy
in~\cite{Vardy-97}, extended to an optimization problem.

\begin{theorem}[{\textnormal{illustrated in Figure~\ref{fig:msd-coclones}}}]\label{thm:MSD}
  For any constraint language~$\Gamma$
  the optimization problem $\MSD(\Gamma)$ is
  \begin{compactenum}[(i)]
  \item in~$\PO$ if\/~$\Gamma$ is
    \begin{compactenum}[(a)]
    \item bijunctive $(\Gamma\subseteq\iD_2)$ or
    \item Horn $(\Gamma\subseteq\iE_2)$ or
    \item dual Horn $(\Gamma\subseteq\iV_2)$;
    \end{compactenum}
  \item $\MinDist$-complete if\/~$\Gamma$ is exactly affine
    $(\iL\subseteq\cc \Gamma\subseteq\iL_2)$;
  \item in $\pAPX$ if\/ $\dup^3 \in \cc \Gamma$
    and~$\Gamma$ is both $0$-valid and $1$-valid
    $(\iN \subseteq \cc \Gamma \subseteq \iI)$, 
    where
    $\MSD(\Gamma)$ is $n$-approximable but not
    $(n^{1{-}\varepsilon})$-approximable for any $\varepsilon>0$ unless $\P=\NP$;
    and
  \item otherwise
    $(\iN_2\subseteq\cc \Gamma$ or
    $\iI_0\subseteq\cc \Gamma$ or
    $\iI_1\subseteq\cc \Gamma)$
    it is $\NP$-complete to decide whether a feasible solution for
    $\MSD(\Gamma)$ exists.
  \end{compactenum}
\end{theorem}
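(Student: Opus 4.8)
The plan is to prove Theorem~\ref{thm:MSD} by establishing each of the four cases separately, exploiting throughout the fact that $\MSD$ is essentially a ``relativized'' version of $\XSOL$ and $\NSOL$: one optimizes the Hamming distance between two solutions, with no external reference vector. A useful first observation is that $\MSD$ is invariant under the weak co-clone closure $\cce{\cdot}$ and, since shifting both solutions by the same vector preserves their distance, the complement- and translation-symmetries of the solution space can be exploited freely. I would begin by setting up this framework and recording the monotonicity of $\MSD(\Gamma)$ under $\Gamma\subseteq\cce{\Gamma'}$, so that membership results need only be proved for a convenient base and hardness results for a single weak base, invoking Theorem~\ref{thm:weakbases} to transfer hardness up the co-clone lattice.

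For the polynomial cases~(i), the strategy is direct. When $\Gamma$ is bijunctive, Horn, or dual~Horn, satisfiability is decidable in polynomial time (Schaefer), and I would show that one can compute a \emph{closest pair} of distinct solutions efficiently. For the bijunctive case I expect to mimic the $2$-SAT techniques underlying the $\PO$-membership of $\XSOL(\Gamma)$ for $\Gamma\subseteq\iD_2$ in Theorem~\ref{thm:NOSol}(i)(a): guess a coordinate $i$ on which the two solutions differ, fix $m[i]\neq m'[i]$, and minimize the distance by propagation. For the Horn (dually, dual~Horn) case, the key structural fact is that the models of a Horn formula are closed under componentwise $\land$, so the solution set has a minimal model; I would argue that the minimum distance between two distinct solutions is attained by pairs that are ``adjacent'' in a lattice-theoretic sense and can be found by a greedy or local-search procedure running in polynomial time. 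This is the technical heart of case~(i).

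Case~(ii), $\MinDist$-completeness for exactly affine $\Gamma$, is where the connection to coding theory is most transparent, and I expect it to be the cleanest equivalence. For the reduction $\MSD(\Gamma)\aple\MinDist$, observe that the solution set of an affine formula is a coset $x_0+C$ of a linear code $C$, and the distance between two solutions equals the weight of a nonzero codeword of $C$; thus $\MSD$ asks exactly for the minimum weight of $C$, which is $\MinDist$. For the converse $\MinDist\aple\MSD(\Gamma)$, I would encode a given parity-check matrix $A$ as an affine $\Gamma$-formula whose solution space is precisely $\{x : Ax=0\}$, using that $\iL\subseteq\cc\Gamma$ provides the affine expressive power needed; here Theorem~\ref{thm:weakbases} and the weak-base entries guarantee that the required affine relations are available in $\cce{\Gamma'}$. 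Both directions preserve the optimum exactly, giving $\apeq$.

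The remaining two cases concern the hardest regime and are where I anticipate the main obstacles. For case~(iii), $\Gamma$ both $0$- and $1$-valid but containing no affine relation ($\iN\subseteq\cc\Gamma\subseteq\iI$), the $0$- and $1$-valid assumptions guarantee that $0\cdots0$ and $1\cdots1$ are both solutions, so a feasible pair always exists, and since their distance is $n$, the trivial pair gives an $n$-approximation, placing $\MSD(\Gamma)$ in $\pAPX$. The matching inapproximability ($n^{1-\varepsilon}$ lower bound) is the delicate part: I would derive it by a gap-preserving reduction from an $\NP$-hard promise problem, using the weak base of $\iN$ from Table~\ref{tab:weakbases} together with the even-parity and equivalence gadgets to force the two solutions to differ on a controlled set of coordinates, so that deciding whether the minimum distance is small versus large encodes a satisfiability question. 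Finally, for case~(iv), $\NPO$-completeness when $\iN_2\subseteq\cc\Gamma$ or $\iI_0\subseteq\cc\Gamma$ or $\iI_1\subseteq\cc\Gamma$, membership in $\NPO$ is immediate, and hardness follows by reducing a known $\NPO$-complete problem---most naturally the corresponding $\NPO$-complete case of $\NSOL$ or $\OptMinOnes$---to $\MSD(\Gamma)$, building an instance in which one of the two sought solutions is pinned (via the constant gadgets $x,\neg x$ available through the weak base) to play the role of the fixed reference vector, thereby recovering a $\NSOL$- or $\OptMinOnes$-type minimization. The main obstacle across (iii) and (iv) is engineering gadgets that simultaneously respect the $0/1$-validity or complementive constraints on $\Gamma$ \emph{and} faithfully transport the approximation gap; getting the weak-base relations of $\iN$, $\iN_2$, $\iI$ to enforce the needed coordinate couplings without destroying the validity properties is the step I expect to require the most care.
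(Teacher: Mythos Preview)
Your proposal has the right skeleton for cases~(ii) and~(iii), but there are genuine gaps in cases~(i) and~(iv).

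For case~(i), ``guess a coordinate~$i$, fix $m[i]\neq m'[i]$, minimize by propagation'' is an $\XSOL$ technique, not an $\MSD$ technique: in $\XSOL$ one of the two solutions is \emph{given}, so propagation has a fixed reference point. In $\MSD$ neither $m$ nor $m'$ is known, and fixing only one coordinate of each does not give you enough to propagate from. You cannot simply compute one solution and then run $\XSOL$, since the optimum of $\MSD$ need not be attained near an arbitrary solution. The paper takes a different route: it closes the clause set under (hyper-)resolution and unit propagation, then computes equivalence classes of literals (bijunctive) or variables (Horn) that must flip together in any pair of models; the minimum solution distance is the cardinality of the smallest such class (for Horn: the smallest class containing no ``dependent'' variable). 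Your lattice intuition for Horn (closure under $\wedge$, hence $m_1<m_2$ for an optimal pair) is exactly the starting observation the paper uses, but ``greedy or local search'' does not capture the actual algorithm, which is resolution-based and identifies the right equivalence structure globally.

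For case~(iv), your plan to ``pin one of the two solutions via the constant gadgets $x,\neg x$'' and thereby reduce from $\NSOL$ or $\OptMinOnes$ does not work in general: when $\iN_2\subseteq\cc\Gamma$ (complementive, not $0$- or $1$-valid), the constants $[x]$, $[\neg x]$ need not be available in $\cce\Gamma$, so you cannot fix a reference solution. The paper's argument is both simpler and cleaner: it shows that the \emph{feasibility} problem $\TSSAT(\Gamma)$ (does $\varphi$ have two solutions?) is already $\NP$-hard whenever $\iN_2\subseteq\cc\Gamma$, $\iI_0\subseteq\cc\Gamma$, or $\iI_1\subseteq\cc\Gamma$, by direct reductions from $\SAT(\Gamma)$ or $\AnotherSat(\Gamma)$ (padding with a fresh copy of a multi-solution relation, or using the given satisfying assignment, respectively). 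Hardness of feasibility immediately gives $\NPO$-completeness; no gadget engineering is needed.
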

\begin{proof}
  The proof is split into several propositions presented in
  Section~\ref{sec:proofsMSD}.
  \begin{compactenum}[(i)]
  \item See Propositions~\ref{prop:MSD-iD2} and~\ref{prop:MSD-iE2-iV2}.
  \item See Proposition~\ref{prop:msdaffine}.
  \item For $\Gamma \subseteq \iI$, every formula~$\varphi$
    over~$\Gamma$ has at least two solutions since it is both
    $0$-valid and $1$-valid. Thus $\TSSAT(\Gamma)$ is in~$\P$, and
    Proposition~\ref{prop:linearapprox} yields that $\MSD(\Gamma)$ is
    $n$-approximable. By
    Proposition~\ref{prop:tightnessOfLinearapprox} this approximation
    is indeed tight.
  \item According to~\cite{Juban-99}, $\AnotherSat(\Gamma)$ is
    $\NP$-hard for $\iI_0\subseteq\cc \Gamma$, or
    $\iI_1\subseteq\cc \Gamma$. By Lemma~\ref{lem:asattotwo} it
    follows that $\TSSAT(\Gamma)$ is $\NP$-hard, too.  For
    $\iN_2\subseteq\cc \Gamma$ we can reduce the $\NP$-hard problem
    $\SAT(\Gamma)$ to $\TSSAT(\Gamma)$.  Hence it is $\NP$-complete to
    decide whether a feasible solution for $\MSD(\Gamma)$ exists in all
    three cases.%
    \qedhere
  \end{compactenum}
\end{proof}

\newcommand\MSDstyle
    {R/.cstyle=PO,M/.cstyle=PO,D/.cstyle=PO,
     E/.cstyle=PO,V/.cstyle=PO,
     S2/.cstyle=PO,S3/.cstyle=PO,
     S0/.cstyle=NA,S1/.cstyle=NA,
     L/.cstyle=MD,
     N/.cstyle=pAPX,N2/.cstyle=NPO,N/.lstyle={text=white},
     I/.cstyle=NPO,Ix/.cstyle=pAPX,I/.lstyle={text=white}
    }
\begin{figure}
  \centering
  \expandafter\postlattice\expandafter[\MSDstyle]
  \bigskip

  \begin{tabular}{lll}
    \makebox[\boxwidth][l]{\protect\postlegend{PO} in PO}
    & & \makebox[\boxwidth][l]{\protect\postlegend{MD} $\MinDist$-complete}
  \\
    & \makebox[\boxwidth][l]{\protect\postlegend{pAPX} in $\pAPX$ and tight}
      & \makebox[\boxwidth][l]{\protect\postlegend{NPO} feasibility $\NP$-complete}
  \\
    & \makebox[\boxwidth][l]{\protect\postlegend{NA} not applicable}
      &
  \\
  \end{tabular}
  \caption{Lattice of co-clones with complexity classification for
    $\MSD$.}
  \label{fig:msd-coclones}
\end{figure}

The three optimization problems can be transformed into
decision problems in the usual way. We add an integer bound $k$ to the
input and ask if the Hamming distance satisfies the inequality
$\hd(m,m') \leq k$. This way we obtain the corresponding decision
problems $\dXSOL$, $\dNSOL$, and $\dMSD$, respectively. Their
complexity follows immediately from the theorems above. All
cases in $\PO$ become polynomial-time decidable, whereas the other
cases, which are $\APX$-hard, become $\NP$-complete.
This way we obtain dichotomy theorems classifying the decision problems
as polynomial or $\NP$-complete for all sets~$\Gamma$ of relations.
We obtain the following dichotomies for each of the respective
decision problems.
\begin{corollary}\label{cor:dMSD}
  For each constraint language~$\Gamma$
  \begin{compactitem}
  \item $\dNSOL(\Gamma)$ is in~$\P$ if\/~$\Gamma$ is 2affine or
    monotone, and it is $\NP$-complete otherwise.
  \item $\dXSOL(\Gamma)$ is in~$\P$ if\/~$\Gamma$ is bijunctive, $\IHSBp k$, or $\IHSBm k$,
    and it is $\NP$-complete otherwise.
  \item $\dMSD(\Gamma)$ is in~$\P$ if\/~$\Gamma$ is bijunctive, Horn, or
    dual-Horn, and it is $\NP$-complete otherwise.
  \end{compactitem}
\end{corollary}

\section{Applicability of Clone Theory and Duality}
\label{sec:proofsPP}

We show that clone theory is applicable to the problem $\NSOL$, as
well as a possibility to exploit inner symmetries between co-clones,
which shortens several proofs in the following sections.

\subsection{Nearest Solution}

There are two natural versions of $\NSOL(\Gamma)$. In one version the
formula~$\varphi$ is quantifier free while in the other one we do
allow existential quantification. We call the former version
$\NSOL(\Gamma)$ and the latter $\NSOLpp(\Gamma)$ and show
that both versions are equivalent.

Let $\dNSOL(\Gamma)$ and $\dNSOLpp(\Gamma)$ be the decision problems
corresponding to $\NSOL(\Gamma)$ and $\NSOLpp(\Gamma)$, asking whether
there is a satisfying assignment within a given bound.

\begin{proposition}\label{prop:quantifiers}
  For any constraint language~$\Gamma$ we have the equivalences
  $\dNSOL(\Gamma)\meq\dNSOLpp(\Gamma)$ and
  $\NSOL(\Gamma)\apeq\NSOLpp(\Gamma)$.
\end{proposition}
\begin{proof}
  The reduction from left to right is trivial in both cases.  For the
  other direction, consider first an instance of $\dNSOLpp(\Gamma)$
  with formula~$\varphi$, assignment~$m$, and bound~$k$. Let
  $x_1, \ldots, x_n$ be the free variables of $\varphi$ and let
  $y_1, \ldots, y_\ell$ be the existentially quantified ones,
  which we can assume to be disjoint. By discarding variables $y_i$
  while not changing $[\varphi]$, we can assume that each variable
  $y_i$ occurs in at least one atom of~$\varphi$.
  We construct a quantifier-free
  formula $\varphi'$, where the non-quantified variables of~$\varphi$
  get duplicated by a factor $\lambda:=(n+\ell+1)^2$ such that the
  effect of quantified variables becomes negligible.  For each
  variable~$z$ we define the set $B(z)$ as follows:
  \begin{eqnarray*}
    B(z)
    &=&
    \begin{cases}
      \set{x_i^1,\dots,x_i^\lambda} & \text{if
        $z=x_i$ for some $i\in \set{1, \ldots, n}$},\\
      \set{y_i} & \text{if $z=y_i$ for some $i\in \set{1, \ldots, \ell}$}.
    \end{cases}
  \end{eqnarray*}
  For every atom $R(z_1, \ldots, z_s)$ in~$\varphi$, the
  quantifier-free formula $\varphi'$ over the variables
  $\bigcup_{i=1}^n B(x_i) \cup \bigcup_{i=1}^\ell B(y_i)$ contains the
  atom $R(z_1', \ldots, z_s')$ for every combination
  $(z_1', \ldots, z_s')$ from $B(z_1)\times \cdots \times
  B(z_s)$.
  Moreover, we construct an assignment $B(m)$ of~$\varphi'$ by
  assigning to every variable~$x_i^j$ the value $m(x_i)$ and to~$y_i$
  the value~$0$. Note that because there is an upper bound on the
  arities of relations from~$\Gamma$, this is a polynomial time
  construction.

  We claim that $\varphi$ has a solution $m'$ with $\hd(m, m') \le k$
  if and only if $\varphi'$ has a solution $m''$ with
  $\hd(B(m), m'')\le k\lambda +\ell$.
  First, observe that if $m'$
  with the desired properties exists, then there is an extension
  $m'_{\mathrm{e}}$ of $m'$ to the $y_i$ that satisfies all
  atoms. Define $m''$ by setting $m''(x_i^j):= m'(x_i)$ and
  $m''(y_i):= m'_{\mathrm{e}}(y_i)$ for all $i$ and $j$. Then $m''$ is
  clearly a satisfying assignment of $\varphi'$. Moreover, $m''$ and
  $B(m)$ differ in at most $k\lambda$ variables among the
  $x_i^j$. Since there are only~$\ell$ other variables~$y_i$, we get
  $\hd(m'', B(m))\leq k\lambda + \ell$ as desired.

  Now suppose $m''$ satisfies $\varphi'$ with
  $\hd(B(m), m'') \le k\lambda +\ell$. We may assume for each~$i$
  that $m''(x_i^1) = \cdots = m''(x_i^\lambda)$.  Indeed, if
  this is not the case, then setting all~$x_i^j$ to
  $B(m)(x_i^j)=m(x_i)$ will result in a satisfying assignment closer to
  $B(m)$. After at most~$n$ iterations we get some~$m''$ as
  desired. Now define an assignment~$m'$ for~$\varphi$ by setting
  $m'(x_i):=m''(x_i^1)$. Then~$m'$ satisfies~$\varphi$, because
  the variables~$y_i$ can be assigned values as in $m''$. Moreover,
  whenever $m(x_i)$ differs from $m'(x_i)$, the inequality
  $B(m)(x_i^j) \neq m''(x_i^j)$ holds for every~$j$. Thus we obtain
  $\lambda \hd(m, m') \leq \hd(B(m), m'') \leq
  k\lambda+\ell$.  Therefore, we have the inequality
  $\hd(m, m') \leq k + \ell /\lambda$ and hence
  $\hd(m, m')\leq k$, since $\ell /\lambda<1$. This completes the many-one reduction.

  To see that the construction above is also an $\AP$-reduction,
  let~$m''$ be an $r$-approximation for~$\varphi'$
  and $B(m)$, i.e., $\hd(B(m), m'')\leq r \cdot \OPT(\varphi',
  B(m))$. Construct~$m'$ as before, so
  $\lambda \hd(m, m') \leq \hd(B(m), m'') \leq r \cdot
  \OPT(\varphi', B(m))$. Since $\OPT(\varphi', B(m))$ is at most
  $\lambda \OPT(\varphi, m) + \ell$ as before, we get
  $\lambda \hd(m, m') \leq r ( \lambda \OPT(\varphi, m) +
  \ell)$. This implies the inequality
  $\hd(m,m') \leq r \cdot \OPT(\varphi, m) + r\cdot \ell /
  \lambda = (r + \Lo(1))\cdot \OPT(\varphi, m)$ and shows that
  the construction is an $\AP$-reduction with $\alpha = 1$.
\end{proof}

\begin{remark}\label{rem:zero}
  Note that in the reduction from $\dNSOLpp(\Gamma)$ to
  $\dNSOL(\Gamma)$ we construct the assignment $B(m)$ as an extension
  of~$m$ by setting all new variables to~$0$. In particular, if~$m$ is
  the constant $0$-assignment, then so is $B(m)$. We use this
  observation as we continue.
\end{remark}

The following result is a technical lemma, which allows us to consider
constraints with disjoint variables independently.

\begin{lemma}\label{lem:split}
  Let $\varphi(\vec x, \vec y) = \psi(\vec x) \land \chi(\vec y)$ be a
  $\Gamma$-formula over a constraint language~$\Gamma$ and~$m$ an
  assignment over disjoint variable blocks $\vec x$ and $\vec y$.  Let
  $(\varphi, m)$ be an instance of\/ $\NSOL(\Gamma)$. Then
  $\OPT(\varphi, m) = \OPT(\psi, m\Restriction_{\vec x}) + \OPT(\chi,
  m\Restriction_{\vec y})$.
\end{lemma}
\begin{proof}
  If $s\in[\varphi]$, then $s\Restriction_{\vec{x}} \in [\psi]$ and
  $s\Restriction_{\vec{y}}\in[\chi]$. Conversely, if $s_\psi\in[\psi]$ and
  $s_\chi\in[\chi]$, then
  $s:= s_\psi\cup s_\chi$ is a model of $\varphi$.
  If $s\in[\varphi]$ is optimal, i.e.\ $\hd(s,m)=\OPT(\varphi,m)$, then
  \begin{displaymath}
  \OPT(\varphi,m)=\hd(s,m)
  = \hd(s\Restriction_{\vec{x}},m\Restriction_{\vec{x}})
   +\hd(s\Restriction_{\vec{y}},m\Restriction_{\vec{y}})
  \geq \OPT(\psi,m\Restriction_{\vec{x}})
      +\OPT(\chi,m\Restriction_{\vec{y}}).
  \end{displaymath}
  Conversely, if $s_\psi\in[\psi]$ and $s_\chi\in[\chi]$ are optimal
  solutions for their respective problems, then $s := s_\psi\cup s_\chi$
  satisfies
  \begin{displaymath}
  \OPT(\varphi,m)\leq\hd(s,m)
  = \hd(s\Restriction_{\vec{x}},m\Restriction_{\vec{x}})
   +\hd(s\Restriction_{\vec{y}},m\Restriction_{\vec{y}})
  = \OPT(\psi,m\Restriction_{\vec{x}})
   +\OPT(\chi,m\Restriction_{\vec{y}}).\qedhere
  \end{displaymath}
\end{proof}

We can also show that introducing explicit equality constraints does not
change the complexity of our problem.
We need two introductory lemmas. The first one deals with
equalities that do not interfere with the other atoms of the given
formula.
\begin{lemma}\label{lem:pre-processing-equality}
 For constraint languages $\Gamma$, the problems
 $\NSOL(\Gamma\cup\set{\eq})$ and $\dNSOL(\Gamma\cup\set{\eq})$
 reduce to particular cases of the respective problem,
 where for each
 constraint $x\eq y$ in the given formula~$\varphi$ at least one of $x,y$
 occurs also in some $\Gamma$-atom of~$\varphi$.
\end{lemma}
\begin{proof}
  Let $(\varphi,m)$ be an instance of $\NSOL(\Gamma\cup\set{\eq})$.
  Without loss of generality we assume $\varphi$ to be of the form
  $\psi\land\varepsilon$, where $\psi$ is a $\Gamma$-formula and
  $\varepsilon$ is a $\set\eq$-formula. Let $(V_i)_{i\in I}$ be the
  unique finest partition of the variables in~$\varepsilon$
  satisfying that variables $x,y$ are in the same partition class if
  $x\eq y$ occurs in~$\varepsilon$.

  For each index $i\in I$ we designate a specific
  variable~$x_i\in V_i$.  Let $\psi'$ be the formula obtained
  from~$\psi$ by substituting all occurrences of variables $y\in V_i$
  by $x_i$. Moreover, let $I'$ be the set of indices $i\in I$ such
  that $x_i$ actually occurs in~$\psi'$, and let
  $I'':=I\smallsetminus I'$ be the set of indices without this
  property. We set $\varepsilon':=\bigwedge_{i\in I'}\varepsilon_i$
  and $\varepsilon'':=\bigwedge_{i\in I''}\varepsilon_i$, where the
  formula $\varepsilon_i:=\bigwedge_{y\in V_i}(x_i \eq y)$ expresses the
  equivalence of the variables in~$V_i$.  Note that the formulas
  $\psi\land\varepsilon$ and
  $\chi:= \psi'\land\varepsilon'\land\varepsilon''$ contain the same
  variables and have identical sets of models.

  Now consider the formula $\varphi':=\psi'\land\varepsilon'$ and the
  assignment $m' := m\Restriction_{V'}$, where $V'$ is the set of
  variables occurring in~$\varphi'$. The pair $(\varphi', m')$ is an
  $\NSOL(\Gamma\cup\set{\eq})$ instance with the additional properties
  stated in the lemma.  By construction we have
  $\chi = \varphi' \land \varepsilon''$, where the set~$V'$ of
  variables in $\varphi'$ and the set~$V''$ of variables
  in~$\varepsilon''$ are disjoint.  By Lemma~\ref{lem:split} we obtain
  $\OPT(\varphi,m) = \OPT(\chi,m) =\OPT(\varphi',m') +
  \OPT(\varepsilon'',m\Restriction_{V''})$.

  %

  An optimal solution $s_{\varepsilon''}$ of~$\varepsilon''$ and the
  optimal value $d:=\OPT(\varepsilon'',m\Restriction_{V''})$ can
  obviously be computed in polynomial time. Therefore the instance
  $(\varphi,m,k)$ of $\dNSOL(\Gamma\cup\set{\eq})$ corresponds
  to the instance $(\varphi',m',k-d)$ of the restricted
  decision problem in the polynomial-time many-one reduction.

Moreover, if~$s'$ is an $r$-approximate solution of $(\varphi',m')$ for
some $r\geq 1$, then $s:=s'\cup s_{\varepsilon''}$ is a
solution of $\varphi$, and we have
\begin{equation*}
\hd(s,m) = \hd(s',m') + d \leq r\OPT(\varphi',m') + d \leq
r\OPT(\varphi',m') + rd = r\OPT(\varphi,m),
\end{equation*}
so the constructed solution~$s$ of $\varphi$ is also $r$-approximate.
This concludes the proof of the $\AP$-reduction with factor $\alpha=1$.
\end{proof}

When dealing with $\NSOL(\Gamma\cup\set{\eq})$, the previous lemma enables
us to concentrate on instances where the formula~$\varphi$ has the form
$\psi(z_1,\dotsc,z_n,x_1,\dotsc,x_t)
       \land \bigwedge_{i=1}^t\bigwedge_{x\in V_i} (x_i \eq x)$,
where $V_1,\dotsc,V_t$ are disjoint sets of variables, being also
disjoint from the variables of the $\Gamma$-formula $\psi$. For each
$1\leq i\leq t$ the given assignment $m$ can have equal distance to
the zero vector and the all-ones vector on the variables in
$V_i\cup\set{x_i}$, or it can be closer to one of the constant vectors.
It is convenient to group the equality constraints according to these
three cases. The following lemma discusses how to remove those equality
constraints, on whose variables $m$ is not equidistant from~$\vec{0}$
and~$\vec{1}$.

\begin{lemma}\label{lem:Heindl}%
  Let~$\Gamma$ be a constraint language
  and~$\psi(z_1,\dotsc,z_{n},x_1,\dotsc,x_\alpha,v_1,\dotsc,v_\beta,w_1,\dotsc,w_\gamma)$ be any $\Gamma$-formula containing
  precisely the distinct variables $z_1,\dotsc,z_{n}$,
  $x_1,\dotsc,x_\alpha$, $v_1,\dotsc,v_\beta$ and
  $w_1,\dotsc,w_\gamma$.
  Consider a formula
  \[\varphi := \psi \land
            \bigwedge_{a=1}^\alpha\bigwedge_{x\in I_a'} (x_a \eq x) \land
            \bigwedge_{b=1}^\beta \bigwedge_{x\in J_b'} (v_b \eq x) \land
            \bigwedge_{c=1}^\gamma\bigwedge_{x\in K_c'} (w_c \eq x)
  \]
  where $I_1',\dotsc,I_\alpha'$, $J_1',\dotsc,J_\beta'$ and
  $K_1',\dotsc,K_\gamma'$ are non-empty sets of variables that are
  pairwise disjoint and disjoint from the variables in~$\psi$.
  For $1\leq a\leq \alpha$, $1\leq b\leq \beta$ and $1\leq c\leq\gamma$
  we put $I_a:= I_a' \cup\set{x_a}$, $J_b:= J_b' \cup\set{v_b}$
  and $K_c := K_c'\cup\set{w_c}$.
  Moreover, let $m$ be an assignment for $\varphi$, such that
  for $1\leq a\leq \alpha$, $1\leq b\leq \beta$ and $1\leq c\leq\gamma$
  \begin{align*}
  d_{0,I_a}&:= \hd(m\Restriction_{I_a},\vec{0}),&
  d_{1,I_a}&:= \hd(m\Restriction_{I_a},\vec{1}),&
  &&& d_{1,I_a} - d_{0,I_a} &&= 0\\
  d_{0,J_b}&:= \hd(m\Restriction_{J_b},\vec{0}),&
  d_{1,J_b}&:= \hd(m\Restriction_{J_b},\vec{1}),&
  &\text{satisfy}&
  e_b :={}& d_{1,J_b} - d_{0,J_b} &&> 0\\
  d_{0,K_c}&:= \hd(m\Restriction_{K_c},\vec{0}),&
  d_{1,K_c}&:= \hd(m\Restriction_{K_c},\vec{1}),&
  &&f_c :={}& d_{0,K_c} - d_{1,K_c} &&> 0
  \end{align*}
  It is possible to construct a formula~$\psi'$, whose size is polynomial
  in the size of~$\varphi$, and an assignment~$M$ for
  $\varphi':= \psi'\land\bigwedge_{a=1}^\alpha\bigwedge_{x\in I_a'} (x_a\eq x)$
  such that the following holds
  \begin{compactitem}
  \item $\psi$, $\varphi$, $\varphi'$ and $\psi'$ are equisatisfiable;
  \item if $\psi$ is satisfiable, then
        $\OPT(\varphi,m) = \OPT(\varphi',M) + d$
        where $d = \sum_{b=1}^\beta d_{0,J_b}
                  +\sum_{c=1}^\gamma d_{1,K_c}$;
  \item for every $r\in[1,\infty)$, one can produce an ($r$-approximate)
        solution of $(\varphi,m)$ from any
        ($r$-\hskip0pt{}approximate) solution of
        $(\varphi',M)$ in polynomial time.
  \end{compactitem}
\end{lemma}
\begin{proof}
  First, we describe how to construct the formula~$\psi'$. We abbreviate
  $Z:=\set{z_1,\dotsc,z_n,x_1,\dotsc,x_\alpha}$,
  $Z':=Z\cup \bigcup_{a=1}^\alpha I_a$,
  $V:=\set{v_1,\dotsc,v_\beta}$ and $W:=\set{w_1,\dotsc,w_\gamma}$.
  For every variable
  $u\in Z\cup V\cup W$ define a set $B(u)$ of variables as follows:
  \[
  B(u) = \begin{cases}
  \set{u} & \text{if } u\in Z\\
  \set{u^1,\dotsc,u^{e_b}} & \text{if } u=v_b\in V\\
  \set{u^1,\dotsc,u^{f_c}} & \text{if } u=w_c\in W.
  \end{cases}
  \]
  For each atom~$R(u_1,\dotsc,u_q)$ of~$\psi$ define a set of atoms
  $\Set{R(u_1',\dotsc,u_q')}{(u_1',\dotsc,u_q')\in\prod_{i=1}^q B(u_i)}$,
  take the union over all these sets and define $\psi'$ as the
  conjunction of all its members, giving a formula over
  $Z\cup V'\cup W'$ where $V' =\bigcup_{u\in V} B(u)$ and
  $W' = \bigcup_{u\in W} B(u)$. Adding again the equality
  constraints, where~$m$ has equal distance from~$\vec{0}$ and~$\vec{1}$
  we get
  $\varphi' = \psi'\land\bigwedge_{a=1}^\alpha\bigwedge_{x\in I_a'} (x_a\eq x)$
  over~$Z'\cup V'\cup W'$. This is a polynomial time construction since
  the arities of relations in~$\Gamma$ are bounded.

  Moreover, we define an assignment~$M$ to the variables~$u$ of~$\varphi'$
  as follows:
  \begin{align*}
  M(u)=\begin{cases}
  m(u)&\text{if } u\in Z'\\
  0   &\text{if } u\in V'\\
  1   &\text{if } u\in W'.
  \end{cases}
  \end{align*}
  Let~$S'$ be a solution of~$(\varphi',M)$. If $S'$ is constant
  on~$B(u)$, for each $u\in V\cup W$, then put $S'':= S'$. Otherwise, by letting
  $S''(u):=S'(u)$ for $u\in Z'$ and for $u\in B(u')$ where $u'\in V\cup W$
  is such that $S'$ is constant on $B(u')$, and by defining
  $S''(u):= M(u) = 0$ for the remaining variables $u\in V'$ and
  $S''(u):= M(u) = 1$ for the remaining variables $u\in W'$,
  we obtain a model~$S''$ of~$\varphi'$ satisfying
  $\hd(S'',M)\leq \hd(S',M)$ and being constant on $B(u)$ for each
  $u\in V\cup W$.
  From~$S''$ we construct an assignment~$S$ of~$\varphi$ by defining
  $S(u):=S''(u)$ for~$u\in Z'$,
  $S(u):=S''(v_b^1)$ for $u\in J_b$ and $1\leq b\leq \beta$, and
  $S(u):=S''(w_c^1)$ for $u\in K_c$ and $1\leq c\leq \gamma$.
  It satisfies~$\varphi$ as $e_b,f_c>0$ for $1\leq b\leq \beta$
  and $1\leq c\leq \gamma$.
  From these definitions, it follows
  \begin{alignat*}{5}
    \hd(S'',M) &= \hd(S''\Restriction_{Z'},M\Restriction_{Z'})
              &&+ \sum_{b=1}^\beta
                  \hd(S''\Restriction_{B(v_b)},M\Restriction_{B(v_b)})
              &&+ \sum_{c=1}^\gamma
                  \hd(S''\Restriction_{B(w_c)},M\Restriction_{B(w_c)})\\
              &= \hd(S''\Restriction_{Z'},m\Restriction_{Z'})
              &&+ \sum_{b=1}^\beta S''(v_b^1)\cdot e_b
              &&+ \sum_{c=1}^\gamma (1-S''(w_c^1))\cdot f_c,\\
  \intertext{because $S''$ is constant on $B(u)$ for $u\in V\cup W$ and
             $\lvert B(v_b)\rvert = e_b$, $\lvert B(w_c)\rvert = f_c$
             for $1\leq b\leq \beta$ and $1\leq c\leq \gamma$; and}
    \hd(S,m)  &= \hd(S\Restriction_{Z'},m\Restriction_{Z'})
              &&+ \sum_{b=1}^\beta
                  \hd(S\Restriction_{J_b},m\Restriction_{J_b})
              &&+ \sum_{c=1}^\gamma
                  \hd(S\Restriction_{K_c},m\Restriction_{K_c})\\
              &= \hd(S''\Restriction_{Z'},m\Restriction_{Z'})
              &&+ \sum_{b=1}^\beta
                  \left(S''(v_b^1)\cdot e_b + d_{0,J_b}\right)
              &&+ \sum_{c=1}^\gamma
                  \left((1-S''(w_c^1))\cdot f_c + d_{1,K_c}\right).
  \end{alignat*}
  Consequently, $\hd(S,m) = \hd(S'',M) + d$,
  where $d=\sum_{b=1}^\beta d_{0,J_b} + \sum_{c=1}^\gamma d_{1,K_c}$.
  \par
  Using this, we shall prove below that
  $\OPT(\varphi',M) + d = \OPT(\varphi,m)$.
  Thus, if~$S'$ now takes the role of an $r$-approximate solution
  of~$(\varphi',M)$ for some $r\geq 1$, then it follows that
  \begin{align*}
  \hd(S,m) =    \hd(S'',M) + d
           \leq \hd(S',M)  + d
           &\leq r\OPT(\varphi',M) + d\\
           &\leq r\OPT(\varphi',M) + rd
           =    r\OPT(\varphi,m).
  \end{align*}
  \par

  Let subsequently $S'$ be such that $\OPT(\varphi',M) = \hd(S',M)$,
  and let $s$ be a model of~$\varphi$. Construct a model~$s'$
  of~$\varphi'$ by putting $s'(u):=s(u)$ for~$u\in Z'$ and
  $s'(u):=s(u')$ for $u\in B(u')$ and $u'\in V\cup W$.
  As above we get
  $\hd(s,m) = \hd(s',M) + d$ because the definitions imply
  \begin{alignat*}{5}
    \hd(s',M) &= \hd(s'\Restriction_{Z'},M\Restriction_{Z'})
             &&+ \sum_{b=1}^\beta
                 \hd(s'\Restriction_{B(v_b)},M\Restriction_{B(v_b)})
             &&+ \sum_{c=1}^\gamma
                 \hd(s'\Restriction_{B(w_c)},M\Restriction_{B(w_c)})\\
              &= \hd(s\Restriction_{Z'},m\Restriction_{Z'})
             &&+ \sum_{b=1}^\beta s(v_b)\cdot e_b
             &&+ \sum_{c=1}^\gamma (1-s(w_c))\cdot f_c\enspace;\\
    \hd(s,m) &= \hd(s\Restriction_{Z'},m\Restriction_{Z'})
             &&+ \sum_{b=1}^\beta
                 \hd(s\Restriction_{J_b},m\Restriction_{J_b})
             &&+ \sum_{c=1}^\gamma
                 \hd(s\Restriction_{K_c},m\Restriction_{K_c})\\
             &= \hd(s\Restriction_{Z'},m\Restriction_{Z'})
             &&+ \sum_{b=1}^\beta
                 \left(s(v_b)\cdot e_b + d_{0,J_b}\right)
             &&+ \sum_{c=1}^\gamma
                 \left((1-s(w_c))\cdot f_c + d_{1,K_c}\right).
  \end{alignat*}
  By minimality of $S'$, we obtain
  $\hd(S'',M)\leq \hd(S',M)\leq \hd(s',M)$. If we additionally require
  that~$s$ be an optimal solution of~$(\varphi,m)$, then
  $\hd(s',M) = \hd(s,m) -d \leq \hd(S,m) -d = \hd(S'',M)$.
  Thus, the distances $\hd(S'',M)$, $\hd(S',M)$ and $\hd(s',M)$ coincide,
  which implies the desired equality
  $\OPT(\varphi,m)=\hd(s,m) = \hd(s',M)+d = \hd(S',M)+d=\OPT(\varphi',M)+d$.
\end{proof}

The previous lemma, in fact, describes an $\AP$-reduction from the
specialized version of the problem $\NSOL(\Gamma\cup\set{\eq})$ discussed
in Lemma~\ref{lem:pre-processing-equality} to an even more specialized
variant (the analogous statement is true for the decision
version---instances $(\varphi,m,k)$ can be decided by considering
$(\varphi',M,k-d)$ instead):
namely all equality constraints touch variables in
$\Gamma$-atoms and the given assignment has equal distance from the
constant tuples on each variable block connected by equalities. In the
next result we show how to remove also these equality constraints.

\begin{proposition}\label{prop:equality}%
  For constraint languages~$\Gamma$ we have
  $\dNSOL(\Gamma)\meq\dNSOL(\Gamma\cup\set{\eq})$ and
  $\NSOL(\Gamma)\apeq\NSOL(\Gamma\cup\set{\eq})$.
\end{proposition}
\begin{proof}
  The reduction from left to right is trivial.
  For the other direction, consider first an instance of
  $\dNSOL(\Gamma\cup\set{\eq})$ with formula~$\varphi$, assignment~$m$,
  and bound~$k$.
  Applying the reductions indicated in
  Lemmas~\ref{lem:pre-processing-equality} and~\ref{lem:Heindl}, we can
  assume (also for $\NSOL(\Gamma\cup\set{\eq})$) that $\varphi$ is of the form
  $\psi \land \bigwedge_{a=1}^\alpha \bigwedge_{x\in I'_{a}} (x_a \eq x)$
  with a $\Gamma$-formula~$\psi$ containing the distinct variables
  $z_1,\dotsc,z_n,x_1,\dotsc,x_\alpha$ ($n\geq 0$, $\alpha\geq 1$) and non-empty disjoint
  (from each other and from $\psi$) variable sets $I'_a$ for
  $1\leq a\leq \alpha$. Moreover, we can suppose that
  $\hd(m\Restriction_{I_a},\vec{0}) = \hd(m,\Restriction_{I_a},\vec{1})
  =:c_a$ for all $1\leq a\leq \alpha$, where $I_a$ denotes the set
  $I'_a\cup\set{x_a}$.

  We define $c := \sum_{a=1}^\alpha c_a$, and we
  choose some $\ell$-element index set $I$ such that $\alpha/\ell <1$,
  that is, $\ell\geq \alpha+1$ (we shall place another condition on $\ell$
  at the end). We construct a formula $\varphi'$ as follows:
  For each atom $R(u_1,\dotsc,u_q)$ of~$\psi$ we introduce the set
  $\Set{R(u_1^{i_1},\dotsc,u_q^{i_q})}{(i_1,\dotsc,i_q)\in I^q}$ of atoms
  where for $1\leq \nu\le q$ and $i\in I$ we let
  $u_{\nu}^{i} := z_{j,i}$ if $u_{\nu} = z_j$
  for some $1\leq j\leq n$ and $u_{\nu}^{i} = u_{\nu}$ if else
  $u_{\nu}\in \set{x_1,\dotsc,x_\alpha}$.
  Take the union over all these sets and let~$\varphi'$ be the
  conjunction of all atoms in this union.
  This construction can be carried out in polynomial time since there is
  a bound on the arities of relations in~$\Gamma$.
  Define an assignment~$M$ by $M(x_a):= m(x_a)$ for $1\leq a\leq \alpha$
  and $M(z_{j,i}):=m(z_j)$ for $1\leq j\leq n$ and $i\in I$. We claim that existence of solutions
  for $(\varphi,m,k)$ can be decided by checking for solutions of
  $(\varphi',M,\ell(k-c)+\alpha)$.
  The argument is similar to that of
  Proposition~\ref{prop:quantifiers}: $\psi$ is (un)satisfiable if and only if
  $\varphi$ and $\varphi'$ are, so we have a correct answer in the
  unsatisfiable case. Otherwise, consider a solution $s$ to
  $(\varphi,m,k)$. Letting $Z:=\set{z_1,\dotsc,z_n}$, we have
  \begin{equation*}
  \hd(s,m) = \hd(s\Restriction_Z,m\Restriction_Z) +
  \sum_{a=1}^\alpha \hd(s\Restriction_{I_a},m\Restriction_{I_a})
   =\hd(s\Restriction_Z,m\Restriction_Z) + \sum_{a=1}^\alpha c_a
   =\hd(s\Restriction_Z,m\Restriction_Z) + c,
  \end{equation*}
  i.e.\ $\hd(s\Restriction_Z,m\Restriction_Z)\leq k-c$.
  Putting $s'(x_a):=s(x_a)$ for $1\leq a\leq \alpha$ and
  $s'(z_{j,i}):=s(z_j)$ for $1\leq j\leq n$ and $i\in I$ we get a model
  of~$\varphi'$, and it follows that
  $\hd(s'\Restriction_{Z'},M\Restriction_{Z'})
  =\ell\cdot\hd(s\Restriction_{Z},m\Restriction_{Z})\leq \ell\cdot(k-c)$,
  where $Z':=\set{z_{j,i}\mid 1\leq j\leq n, i\in I}$.
  Therefore, abbreviating $X := \set{x_1,\dotsc,x_\alpha}$, we obtain
  $\hd(s',M) = \hd(s'\Restriction_{Z'},M\Restriction_{Z'}) +
  \hd(s'\Restriction_X,M\Restriction_X) \leq \ell\cdot(k-c)+\alpha$.

  Conversely, let $S'$ be a solution of $(\varphi',M,\ell(k-c)+\alpha)$. As in
  Proposition~\ref{prop:quantifiers} we can construct a solution $S''$ being
  constant on $\set{z_{j,i}\mid i\in I}$ for each $1\leq j\leq n$. Letting
  $S(x):=S''(x_a)$ for $x\in I_a$ and $1\leq a\leq \alpha$ and
  $S(z_j):= S''(z_{j,i})$ for some fixed index $i\in I$ and all $1\leq j\leq n$, one obtains a model
  of~$\varphi$. If $S(z_j)\neq m(z_j)$ for some $1\leq j\leq n$, then we
  have $S''(z_{j,i}) = S(z_j) \neq m(z_j) = M(z_{j,i})$ for all $i\in I$. Hence,
  we have
  $
  \ell\cdot\hd(S\Restriction_{Z},m\Restriction_{Z})
  \leq\hd(S''\Restriction_{Z'},M\Restriction_{Z'})
  \leq \hd(S'',M) \leq \hd(S',M)
  $.
  Division by~$\ell$ implies
  $\hd(S\Restriction_{Z},m\Restriction_{Z})\leq \hd(S',M)/\ell
  \leq k-c + \alpha/\ell < k-c+1$,
  i.e.\ $\hd(S\Restriction_{Z},m\Restriction_{Z})\leq k-c$.
  From this we finally infer that
  $\hd(S,m) 
            = \hd(S\Restriction_{Z},m\Restriction_{Z}) + c \leq k$.%

  Suppose now that~$S'$ is an $r$-approximate solution for $(\varphi',M)$
  for some $r\geq 1$, i.e.\ we have $\hd(S',M)\leq r\OPT(\varphi',M)$.
  Constructing a model $S$ of~$\varphi$ as before, we obtain
  $\ell\hd(S\Restriction_{Z},m\Restriction_{Z})\leq \hd(S',M)\leq r\OPT(\varphi',M)$. Further
  from an optimal solution of~$\varphi$, we get a model~$s'$
  of~$\varphi'$ satisfying
  \begin{align*}
  \OPT(\varphi',M)\leq \hd(s',M)
  &=   \hd(s'\Restriction_{Z'},M\Restriction_{Z'})
     + \hd(s'\Restriction_X,M\Restriction_X)\\
  &=   \ell(\OPT(\varphi,m) - c) + \hd(s'\Restriction_X,M\Restriction_X)
  \leq \ell(\OPT(\varphi,m)-c) + \alpha.
  \end{align*}
  Multiplying this inequality by~$r$, combining it with previous inequalities and dividing by~$\ell$ we
  thus have $\hd(S\Restriction_{Z},m\Restriction_{Z})\leq
  r\OPT(\varphi,m)-rc +r\alpha/\ell$.
  Note that $\OPT(\varphi,m)>0$, because if $\OPT(\varphi,m) =0$, then we
  would have a unique optimal model of~$\varphi$, namely~$m$. Then
  $m\Restriction_{I_1}$ would have to be constant, implying
  $\hd(m\Restriction_{I_1},\vec{0}) \neq \hd(m\Restriction_{I_1},\vec{1})$,
  as one distance would be zero and the other one $\lvert I_1\rvert >0$.
  Therefore, for $\ell\in \LOmega(\lvert\varphi\rvert^2)$ we have
  $\hd(S,m) = \hd(S\Restriction_{Z},m\Restriction_{Z}) + c \leq \hd(S\Restriction_{Z},m\Restriction_{Z}) + rc \leq
  r\OPT(\varphi,m)+r\alpha/\ell
  \leq \OPT(\varphi,m)(r + r\alpha/\ell)
  = \OPT(\varphi,m)(r + \Lo(1))$.
  This demonstrates an $\AP$-reduction with factor~$1$.
\end{proof}

Propositions~\ref{prop:quantifiers} and~\ref{prop:equality}
allow us to switch freely between formulas
with quantifiers and equality and those without. Hence we may
derive upper bounds in the setting without quantifiers and equality
while using the latter in hardness reductions. In particular,
we can use pp-definability when implementing a constraint
language~$\Gamma$ by another constraint language~$\Gamma'$. Hence it
suffices to consider Post's lattice of co-clones to characterize the
complexity of $\NSOL(\Gamma)$ for every finite constraint
language~$\Gamma$.

\begin{corollary}\label{cor:coclones}
 For constraint languages~$\Gamma$ and $\Gamma'$, for which the inclusion
 $\Gamma'\subseteq \cc \Gamma$ holds, we have the reductions
 $\dNSOL(\Gamma')\mle\dNSOL(\Gamma)$ and
 $\NSOL(\Gamma')\aple\NSOL(\Gamma)$. Thus, if
 $\cc{\Gamma'} = \cc\Gamma$ is satisfied, then the equivalences
 $\dNSOL(\Gamma)\meq\dNSOL(\Gamma')$ and
 $\NSOL(\Gamma)\apeq\NSOL(\Gamma')$ hold.
\end{corollary}

Next we prove that, in certain cases, unit clauses in the formula do not
change the complexity of $\NSOL$.

\begin{proposition}\label{prop:unary}
  Let $\Gamma$ be a constraint language such that feasible solutions
  of\/ $\NSOL(\Gamma)$ can be found in polynomial time. Then we have
  $\NSOL(\Gamma) \apeq \NSOL(\Gamma\cup \set{[x], [\neg x]})$.
\end{proposition}
\begin{proof}
  The direction from left to right is obvious.  For the other
  direction, we give an $\AP$-reduction from
  $\NSOL(\Gamma\cup \set{[x], [\neg x]})$ to
  $\NSOL(\Gamma\cup \set{\eq})$. The latter is $\AP$-equivalent
  to $\NSOL(\Gamma)$ by Proposition~\ref{prop:equality}.

  The idea of the construction is to introduce two sets of variables
  $y_1, \ldots, y_{n^2}$ and $z_1, \ldots, z_{n^2}$ such that in any
  feasible solution all $y_i$ and all $z_i$ take the same value. By
  setting $m(y_i)=1$ and $m(z_i)=0$ for each $i$, any feasible
  solution $m'$ of small Hamming distance to $m$ will have $m'(y_i)=1$
  and $m'(z_i)=0$ for all $i$ as well, because deviating from this
  would be prohibitively expensive. Finally, we simulate the unary
  relations $x$ and $\neg x$ by $x\eq y_1$ and $x\eq z_1$,
  respectively. We now describe the reduction formally.

  Consider a formula~$\varphi$ over $\Gamma \cup \set{[x], [\neg x]}$
  with the variables $x_1, \ldots, x_n$ and an assignment~$m$.
  If~$(\varphi,m)$ fails to have feasible solutions, i.e.,
  if~$\varphi$ is unsatisfiable, we can detect this in polynomial time
  by the assumption of the lemma and return the generic unsatisfiable
  instance~$\bot$. Otherwise, we construct a
  $(\Gamma \cup \set{\eq})$-formula~$\varphi'$ over the
  variables $x_1, \ldots x_n$, $y_1, \ldots, y_{n^2}$,
  $z_1, \ldots, z_{n^2}$ and an assignment~$m'$. We obtain~$\varphi'$
  from~$\varphi$ by replacing every occurrence of a
  constraint~$[x]$ by $x \eq y_1$ and
  every occurrence of~$[\neg x]$ by
  $x \eq z_1$. Finally, we add the atoms $y_i \eq y_1$
  and $z_i \eq z_1$
  for all $i \in \set{2, \ldots, n^2}$. Let~$m'$ be
  the assignment of the variables of~$\varphi'$ given by
  $m'(x_i) = m(x_i)$ for each $i \in \set{1, \ldots, n}$, and
  $m'(y_i)=1$
  and $m'(z_i)=0$
  for all $i \in \set{1, \ldots, n^2}$. To any feasible solution~$m''$
  of~$\varphi'$ we assign $g(\varphi, m, m'')$ as follows.
  \begin{compactenum}
  \item \label{case1}
    If $\varphi$ is satisfied by~$m$, we define $g(\varphi, m,
    m'')$ to be equal to~$m$.
  \item \label{case2}
    Else if $m''(y_i)=0$ holds for all $i \in \set{1, \ldots, n^2}$
    or $m''(z_i)=1$ for all $i \in \set{1, \ldots, n^2}$, we define
    $g(\varphi, m, m'')$ to be any satisfying assignment of~$\varphi$.
  \item \label{case3}
    Otherwise, we have $m''(y_i)=1$ and $m''(z_i)=0$ for all
    $i \in \set{1, \ldots, n^2}$. In this case we define
    $g(\varphi, m, m'')$ to be the restriction of~$m''$ onto
    $x_1, \ldots, x_n$.
  \end{compactenum}
  Observe that all variables~$y_i$ and all~$z_i$ are forced to take
  the same value in any feasible solution, respectively, so
  $g(\varphi, m, m'')$ is always well-defined.  The construction is an
  $\AP$-reduction. Assume that~$m''$ is an $r$-approximate
  solution. We will show that $g(\varphi, m, m'')$ is also an
  $r$-approximate solution.

  \paragraph{Case~\ref{case1}:} $g(\varphi, m, m'')$ computes the
  optimal solution, so there is nothing to show.

  \paragraph{Case~\ref{case2}:}
  Observe first that~$\varphi$ has a solution because otherwise it
  would have been mapped to~$\bot$ and~$m''$ would not exist.
  Thus, $g(\varphi, m, m'')$ is well-defined and feasible by
  construction. Observe that~$m'$ and~$m''$ disagree on all~$y_i$ or
  on all~$z_i$, so we have $\hd(m', m'')\ge n^2$. Moreover,
  since~$\varphi$ has a feasible solution, it follows that
  $\OPT(\varphi', m') \leq n$. Since~$m''$ is an $r$-approximate
  solution, we have that
  $n\OPT(\varphi',m') \leq n^2\leq \hd(m',m'')\leq r\OPT(\varphi',m')$.
  If $\OPT(\varphi',m')=0$, then $m'$ would have to be a model of
  $\varphi'$, and so would be its restriction to the~$x_i$, i.e.~$m$, a
  model of~$\varphi$. This is handled in the first case, which is
  disjoint from the current one; hence, we infer $n\leq r$. Consequently,
  the distance $\hd(m, g(\varphi, m, m''))$ is bounded above by
  $n \leq r \leq r \cdot \OPT(\varphi,m)$, where the last inequality
  holds because~$\varphi$ is not satisfied by~$m$ and thus the
  distance of any optimal solution from~$m$ is at least~$1$.

  \paragraph{Case~\ref{case3}:}
  The variables~$x_i$, for which the relation~$[x_i]$ is a constraint,
  all satisfy $g(\varphi,m,m'')(x_i)=1$ by construction. Moreover, we
  have $g(\varphi, m, m'')(x_i)=0$ for all~$x_i$ for which
  $[\neg x_i]$ is a constraint of~$\varphi$. Consequently,
  $g(\varphi, m, m'')$ is feasible. Again,
  $\OPT(\varphi', m') \leq n$, so any optimal solution to
  $(\varphi', m')$ must set all variables~$y_i$ to~$1$ and all~$z_i$
  to~$0$. It follows that $\OPT(\varphi,m) = \OPT(\varphi', m')$. Thus
  we get
  \begin{displaymath}
    \hd(m, g(\varphi, m, m'')) = \hd(m', m'') \leq r \cdot
    \OPT(\varphi',m') = r \cdot \OPT(\varphi,m),
  \end{displaymath}
  which completes the proof.
\end{proof}

\subsection{Inapplicability of Clone Closure}
\label{sec:inap}

Corollary~\ref{cor:coclones} shows that the complexity of~$\NSOL$ is
not affected by existential quantification by giving an explicit
reduction from~$\NSOLpp$ to~$\NSOL$. It does not seem possible to
prove the same for $\XSOL$ and $\MSD$.  However, similar results hold
for the conjunctive closure; thus we resort to minimal or irredundant
weak bases of co-clones instead of usual bases.

\begin{proposition}\label{prop:coclonesmincd}\label{prop:coclonesminhd}
  Let~$\Gamma$ and~$\Gamma'$ be constraint languages.  If
  $\Gamma'\subseteq \ccc \Gamma$ holds then we have the reductions
  $\dXSOL(\Gamma')\mle\dXSOL(\Gamma)$ and
  $\XSOL(\Gamma')\aple\XSOL(\Gamma)$, as well as
  $\dMSD(\Gamma')\mle\dMSD(\Gamma)$ and
  $\MSD(\Gamma')\aple\MSD(\Gamma)$.
\end{proposition}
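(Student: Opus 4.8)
The plan is to turn any instance over $\Gamma'$ into an equivalent instance over $\Gamma$ by syntactically replacing each atom, exploiting that weak co-clone closure permits only conjunction and variable identification, but crucially \emph{not} existential quantification. The whole point is that, unlike the full co-clone closure used for $\NSOL$ in Corollary~\ref{cor:coclones}, the weak closure never introduces fresh variables. This matters because $\XSOL$ and $\MSD$ measure Hamming distance over \emph{all} variables of the formula, so a construction that preserves the variable set also preserves every Hamming distance, and hence the optimal value, exactly.

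Concretely, since $\Gamma'\subseteq\cce{\Gamma}$ and $\Gamma'$ is finite, each relation $R'\in\Gamma'$ of arity $s$ admits a fixed representation $R'(x_1,\ldots,x_s)=\bigwedge_j R_j(\vec{z_j})$, where every $R_j\in\Gamma$ and every variable occurring in $\vec{z_j}$ is among $x_1,\ldots,x_s$ (variable identification, no new variables). I would precompute these representations once, as $\Gamma$ and $\Gamma'$ belong to the fixed problem definition. Given an instance with formula $\varphi'$ over $\Gamma'$, I replace each atom $R'(\vec{x})$ by the corresponding conjunction of $\Gamma$-atoms to obtain a $\Gamma$-formula $\varphi$. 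Because each replacement expands one atom into a constant number of atoms, this is a polynomial-time construction, and $\varphi$ has exactly the same variables as $\varphi'$.

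The key verification is that $[\varphi]=[\varphi']$ as relations over the common variable set: an assignment satisfies $R'(\vec x)$ iff it satisfies $\bigwedge_j R_j(\vec{z_j})$, and this holds for all atoms simultaneously. Hence the identity on assignments is the reduction. For $\dXSOL$ I keep the given satisfying assignment $m$ and the bound $k$; since the solution sets coincide and $m$ remains a solution, ``another solution within distance $k$'' transfers verbatim, giving $\dXSOL(\Gamma')\mle\dXSOL(\Gamma)$. For the optimization version, the identity map sends feasible solutions of the $\Gamma'$-instance to feasible solutions of the $\Gamma$-instance while preserving $\hd(m,\cdot)$, so $\OPT$ is unchanged and any $r$-approximate solution pulls back to an $r$-approximate solution; this is an $\AP$-reduction with $\alpha=1$, yielding $\XSOL(\Gamma')\aple\XSOL(\Gamma)$. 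The argument for $\MSD$ is identical, tracking a pair $(m,m')$ of solutions instead of a single $m$ against a fixed target.

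There is no serious obstacle here; the content lies entirely in the observation that forbidding existential quantification is precisely what keeps the variable set (and therefore every Hamming distance) fixed, so that the naive atom-by-atom substitution already works. The only points requiring care are polynomiality, which follows from the finiteness of $\Gamma'$, and confirming that the ``another solution'' requirement of $\XSOL$ and the two-solution structure of $\MSD$ survive an identity-on-variables reduction, both of which are immediate once $[\varphi]=[\varphi']$ is established.
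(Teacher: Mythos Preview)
Your proposal is correct and follows essentially the same approach as the paper: replace each $\Gamma'$-atom by its fixed quantifier-free $\Gamma$-implementation, observe that no new variables appear so the solution sets and all Hamming distances are preserved verbatim, and conclude that the identity map is simultaneously a many-one and an $\AP$-reduction. The paper's proof is terser but makes exactly the same points.
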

\begin{proof}
  We prove only the part that $\Gamma'\subseteq \ccc \Gamma$ implies
  $\XSOL(\Gamma')\aple\XSOL(\Gamma)$. The other results will be clear
  from that reduction since the proof is generic and therefore holds for
  both $\XSOL$ and $\MSD$, as well as for their decision variants.

  Let a $\Gamma'$-formula~$\varphi$ be an instance of $\XSOL(\Gamma')$.
  Since $\Gamma' \subseteq \ccc \Gamma$, every constraint
  $R(x_1, \ldots, x_k)$ of~$\varphi$ can be written as a conjunction
  of constraints over relations from~$\Gamma$. Substitute the latter
  into~$\varphi$, obtaining $\varphi'$. Now $\varphi'$ is an instance
  of $\XSOL(\Gamma)$, where~$\varphi'$ is only polynomially larger
  than~$\varphi$.  As~$\varphi$ and~$\varphi'$ have the same variables
  and hence the same models, also the closest distinct models
  of~$\varphi$ and~$\varphi'$ are the same.
\end{proof}

\subsection{Duality}
\label{sec:duality}

Given a relation $R \subseteq \set{0,1}^n$, its \emph{dual} relation is
$\dual(R) = \Set{\cmpl m}{m \in R}$, i.e., the relation containing the
complements of tuples from~$R$. Duality naturally extends to sets of
relations and co-clones. We define
$\dual(\Gamma) = \Set{\dual(R)}{R \in \Gamma}$ as the set of dual
relations to~$\Gamma$. Since taking complements is involutive, duality
is a symmetric relation. If a relation~$R'$ (a set of
relations~$\Gamma'$) is a dual relation to~$R$ (a set of dual
relations to~$\Gamma$), then~$R$ ($\Gamma$) is also dual to~$R'$
(to~$\Gamma'$).  By a simple inspection of the bases of co-clones in
Table~\ref{tab:clones}, we can easily see that many co-clones are dual
to each other. For instance $\iE_2$ is dual to~$\iV_2$.  The following
proposition shows that it is sufficient to consider only one half of
Post's lattice of co-clones.

\begin{proposition}\label{prop:dual}
  For any constraint language~$\Gamma$ we have
  $\dNSOL(\Gamma)\meq\dNSOL(\dual(\Gamma))$ and
  $\NSOL(\Gamma)\apeq\NSOL(\dual(\Gamma))$,
  $\dXSOL(\Gamma)\meq\dXSOL(\dual(\Gamma))$ and
  $\XSOL(\Gamma)\apeq\XSOL(\dual(\Gamma))$,
  as well as
  $\dMSD(\Gamma)\meq\dMSD(\dual(\Gamma))$ and
  $\MSD(\Gamma)\apeq\MSD(\dual(\Gamma))$.
\end{proposition}
\begin{proof}
  Let~$\varphi$ be a $\Gamma$-formula and~$m$ an assignment
  to~$\varphi$. We construct a $\dual(\Gamma)$-formula $\varphi'$ by
  substitution of every atom $R(\vec x)$ by $\dual(R)(\vec x)$. The
  assignment~$m$ satisfies~$\varphi$ if and only if~$\cmpl m$
  satisfies~$\varphi'$, where~$\cmpl m$ is the pointwise complement
  of~$m$. Moreover, $\hd(m, m') = \hd(\cmpl m, \cmpl m')$.
\end{proof}

\section{Finding the Nearest Solution}
\label{sec:proofsNSOL}

This section contains the proof of Theorem~\ref{thm:NSol}. We first
consider the polynomial-time cases followed by the cases of higher
complexity.

\subsection{Polynomial-Time Cases}
\label{ssec:proofsNSOL-PO}

\begin{proposition}\label{prop:NSOL-iD1}
  If a constraint language~$\Gamma$ is both bijunctive and affine
  $(\Gamma\subseteq\iD_1)$,
  then $\NSOL(\Gamma)$ can be solved in polynomial time.
\end{proposition}
\begin{proof}
  Since~$\Gamma\subseteq \iD_1= \cc{\Gamma'}$ with
  $\Gamma':=\set{[x\oplus y], [x]}$, we have the reduction
  $\NSOL(\Gamma)\aple \NSOL(\Gamma')$ by Corollary~\ref{cor:coclones}.
  Every $\Gamma'$-formula~$\varphi$ is equivalent to a linear system
  of equations over the Boolean ring~$\ZZ_2$ of type
  $x\oplus y = 1$ and $x=1$. Substitute the fixed values $x=1$
  into the equations of the type $x\oplus y = 1$ and
  propagate. If a contradiction is found thereby, reject the input.
  After an exhaustive application of this rule only
  equations of the form $x\oplus y = 1$ remain. For each of them
  put an edge $\set{x,y}$ into~$E$, defining an undirected graph
  $G=(V,E)$, whose vertices $V$ are the unassigned variables.
  If $G$ is not bipartite, then $\varphi$ has no solutions, so we can
  reject the input. Otherwise, compute a bipartition
  $V = L \dot\cup R$.  We assume that~$G$ is connected; if not perform
  the following algorithm for each connected component (cf.\
  Lemma~\ref{lem:split}). Assign the
  value~$0$ to each variable in~$L$ and the value~$1$ to each variable
  in~$R$, giving the satisfying assignment~$m_1$. Swapping the roles
  of $0$ and $1$ w.r.t.\ $L$ and $R$ we get a model~$m_2$.  Return
  $\min\set{\hd(m,m_1),\hd(m,m_2)}$.
\end{proof}

\begin{proposition}\label{prop:NSOL-iM2}
  If a constraint language~$\Gamma$ is monotone
  $(\Gamma\subseteq\iM_2)$, then
  $\NSOL(\Gamma)$ can be solved in polynomial time.
\end{proposition}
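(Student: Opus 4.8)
The plan is to exploit the very restricted syntactic shape of monotone relations and to reduce $\NSOL(\Gamma)$ to a minimum $s$--$t$ cut computation. First I would establish that, since each $R \in \Gamma$ is both Horn and dual Horn, $R$ is closed under componentwise conjunction and under componentwise disjunction of its tuples. Consequently every prime implicate of $R$ is simultaneously a Horn clause (at most one positive literal) and a dual Horn clause (at most one negative literal). Such a clause has at most one positive and at most one negative literal, hence is a unit literal $x$ or $\neg x$, or an implication $\neg x \lor y$, i.e.\ $x \to y$. As $\Gamma$ is finite and fixed, I would precompute these prime-implicate representations once, so that any $\Gamma$-formula $\varphi$ is rewritten in polynomial time into an equivalent conjunction of unit literals and implications.

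Next I would propagate the unit literals along the implications: a forced value $x=1$ together with $x\to y$ forces $y=1$, and $y=0$ together with $x\to y$ forces $x=0$. I would close this process under all implications and reject if some variable is forced to both values; this is exactly Schaefer's polynomial satisfiability test for monotone~$\Gamma$. The remaining free variables carry only implication constraints $x \to y$, which under the identification $0<1$ merely demand $m'(x) \le m'(y)$ in any solution.

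For the optimization itself I would build a directed capacitated graph with a source~$s$, a sink~$t$, and one node~$v_x$ per free variable, interpreting $v_x$ on the source side as $m'(x)=1$ and on the sink side as $m'(x)=0$. Each disagreement with the target~$m$ is charged by a unit-capacity arc: an arc $s \to v_x$ when $m(x)=1$ (cut iff $m'(x)=0$) and an arc $v_x \to t$ when $m(x)=0$ (cut iff $m'(x)=1$). Each implication $x\to y$ becomes an arc $v_x \to v_y$ of infinite capacity, which is cut exactly when $m'(x)=1$ and $m'(y)=0$, the single combination it forbids. Finite $s$--$t$ cuts then correspond bijectively to feasible assignments of the free variables, and the capacity of such a cut equals the Hamming distance contributed by the free variables, while the fixed variables add a constant determined during propagation. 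A minimum cut, computable in polynomial time via max-flow, therefore yields an optimal solution of $\NSOL(\Gamma)$, and its value together with the constant gives the minimum $\hd(m,m')$.

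The one point requiring care is the first step: arguing that being both Horn and dual Horn forces the prime-implicate CNF to consist solely of implications and unit clauses. The cleanest justification goes through closure properties, since Horn is equivalent to closure under componentwise conjunction and dual Horn to closure under componentwise disjunction, together with the fact that the prime implicates of a relation closed under conjunction (resp.\ disjunction) are Horn (resp.\ dual Horn) clauses. Once this canonical form is secured, the remainder is the standard correspondence between monotone $0/1$-labellings and minimum cuts, and no genuine difficulty remains.
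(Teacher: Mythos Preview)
Your proposal is correct and reaches the same conclusion as the paper, but by a genuinely different route. Both arguments first reduce an arbitrary $\Gamma$-formula with $\Gamma\subseteq\iM_2$ to a conjunction of unit literals and implications $x\to y$: the paper does this by invoking the co-clone machinery (Corollary~\ref{cor:coclones}) to replace~$\Gamma$ by the base $\{x\to y,\neg x,x\}$ of~$\iM_2$, whereas you derive the same canonical form directly, using that a relation closed under both componentwise $\land$ and $\lor$ has only prime implicates with at most one positive and one negative literal. For the optimization step the paths diverge: the paper encodes the remaining implications as the integer program $y\ge x$, $x\in\{0,1\}$, with objective $\sum_{m(x_i)=0}x_i+\sum_{m(x_j)=1}(1-x_j)$, and appeals to total unimodularity of the constraint matrix to solve it in polynomial time; you instead build the standard min $s$--$t$ cut instance with infinite-capacity arcs for implications and unit-capacity arcs charging disagreements with~$m$. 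These are two presentations of the same underlying structure (the IP has a network constraint matrix, and the min cut is exactly its integral LP optimum). Your version is more self-contained---it avoids both the detour through Corollary~\ref{cor:coclones} and the citation of unimodularity---and it immediately hands you the optimizing assignment from the cut partition; the paper's version is terser once one is willing to quote the IP result.
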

\begin{proof}
  We have $\iM_2 = \cc{\Gamma'}$ where
  $\Gamma':=\set{[x\to y], [\neg x], [x]}$. Thus
  Corollary~\ref{cor:coclones} and $\Gamma\subseteq\cc{\Gamma'}$ imply
  $\NSOL(\Gamma)\aple\NSOL(\Gamma')$.  The relations $[\neg x]$
  and~$[x]$ determine a unique value for the respective variable,
  therefore we can eliminate unit clauses and propagate the values. If a
  contradiction occurs, we reject the input.
  It thus remains to consider formulas~$\varphi$ containing only
  binary implicative clauses of type $x \to y$.

  Let~$V$ be the set of variables in~$\varphi$, and for $i\in\set{0,1}$
  let $V_i = \Set{x \in V}{m(x) = i}$ be the variables mapped to
  value~$i$ by assignment~$m$. We transform the
  formula~$\varphi$ to a linear programming problem as
  follows. For each clause $x \to y$ we add the inequality $y \geq x$,
  and for each variable $x \in V$ we add the constraints $x \geq 0$
  and $x \leq 1$.  As linear objective function we use
  $f(\vec x) = \sum_{x \in V_0} x + \sum_{x \in V_1} (1 - x)$.
  For an arbitrary solution~$m'$, it returns the number of variables
  that change their parity between~$m$ and~$m'$, i.e.,
  $f(m')=\hd(m,m')$. This way we obtain the (integer) linear programming
  problem $(f, A \vec x \geq \vec b)$, where~$A$ is a
  totally unimodular matrix and~$\vec b$ is an integral column
  vector.

  The rows of~$A$ consist of the left-hand sides of inequalities
  $y - x \geq 0$, $x \geq 0$, and $-x \geq -1$, which constitute the
  system $A \vec x \geq \vec b$. Every entry in~$A$ is~$0$, $+1$,
  or~$-1$. Every row of~$A$ has at most two non-zero entries. For the
  rows with two entries, one entry is~$+1$, the other
  is~$-1$. According to Condition~(iv) in Theorem~19.3
  in~\cite{Schrijver-86}, this is a sufficient condition for~$A$ being
  totally unimodular.
  As~$A$ is totally unimodular and~$\vec b$ is an integral vector,
  $f$ has integral minimum points, and one of them can be computed in polynomial
  time (see~e.g.~\cite[Chapter~19]{Schrijver-86}).
\end{proof}

\subsection{Hard Cases}
\label{ssec:proofsNSOL-hard}

We start off with an easy corollary of Schaefer's dichotomy.

\begin{proposition}\label{prop:NSOL-iN_2-BR}
  Let $\Gamma$ be a finite set of Boolean relations. If\/
  $\iN_2\subseteq \cc \Gamma$, then it is $\NP$-complete to decide whether
  a feasible solution exists for $\NSOL(\Gamma)$; otherwise,
  $\NSOL(\Gamma)\in \pAPX$.
\end{proposition}
\begin{proof}
  If $\iN_2\subseteq \cc \Gamma$ holds, checking the existence of
  feasible solutions for $\NSOL(\Gamma)$-instances is $\NP$-hard by
  Schaefer's theorem~\cite{Schaefer-78}.

  Let $(\varphi,m)$ be an instance of $\NSOL(\Gamma)$.  We give an
  $n$-approximate algorithm for the other cases, where $n$ denotes the
  number of variables in~$\varphi$.
  If~$m$ satisfies~$\varphi$,
  return~$m$. Otherwise compute an arbitrary solution~$m'$
  of~$\varphi$, which can be done in polynomial time by Schaefer's
  theorem.  This algorithm is $n$-approximate: If~$m$
  satisfies~$\varphi$, the algorithm returns the optimal solution;
  otherwise we have $\OPT(\varphi, m) \geq 1$ and $\hd(m,m') \leq n$,
  hence the answer~$m'$ of the algorithm is $n$-approximate.
\end{proof}

\subsubsection{APX-Complete Cases}

We start with reductions from the optimization version of vertex
cover. Since the relation $[x \lor y]$ is a straightforward Boolean
encoding of vertex cover, we immediately get the following result.

\begin{proposition}\label{prop:NSOL-iS0^2-iS1^2}
  $\NSOL(\Gamma)$ is $\APX$-hard for every constraint
  language~$\Gamma$ satisfying
  $\iS_0^2 \subseteq \cc \Gamma$ or $\iS_1^2 \subseteq \cc \Gamma$.
\end{proposition}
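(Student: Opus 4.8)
The plan is to collapse the proposition to the single relation $[x \lor y]$ and then to the classical minimum vertex cover problem. Since $\iS_0^2$ is precisely the co-clone generated by $[x \lor y]$, the assumption $\iS_0^2 \subseteq \cc \Gamma$ is equivalent to $[x \lor y] \in \cc \Gamma$, i.e. $\{[x\lor y]\} \subseteq \cc \Gamma$. Corollary~\ref{cor:coclones} then yields $\NSOL([x\lor y]) \aple \NSOL(\Gamma)$, so it suffices to prove that $\NSOL([x\lor y])$ is $\APX$-hard.

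For this I would exhibit an $\AP$-reduction from minimum vertex cover, which is $\APX$-complete. Given a graph $G=(V,E)$, I introduce one variable $x_v$ per vertex and form the conjunctive formula $\varphi = \bigwedge_{\{u,v\}\in E}(x_u \lor x_v)$, paired with the all-zero input assignment $m = 0\cdots 0$. An assignment $m'$ satisfies $\varphi$ exactly when $\Set{v}{m'(x_v)=1}$ is a vertex cover of $G$, and for every such $m'$ we have $\hd(m,m') = \hw(m')$, so that the Hamming distance equals the size of the cover. The feasible solutions of the two instances are thus in cost-preserving bijection: the optima coincide, and any $r$-approximate solution of the $\NSOL$ instance maps back to an $r$-approximate vertex cover. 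This is an $\AP$-reduction with $\alpha=1$, so $\NSOL([x\lor y])$ inherits $\APX$-hardness, settling the case $\iS_0^2 \subseteq \cc \Gamma$.

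The remaining case $\iS_1^2 \subseteq \cc \Gamma$ I would dispatch by duality instead of repeating the construction. Since $\iS_1^2$ is the dual co-clone of $\iS_0^2$ and dualization commutes with $\cc$, the hypothesis $\iS_1^2 \subseteq \cc \Gamma$ gives $\iS_0^2 \subseteq \cc{\dual(\Gamma)}$; by the first part $\NSOL(\dual(\Gamma))$ is $\APX$-hard, and Lemma~\ref{lem:dual} provides $\NSOL(\Gamma) \apeq \NSOL(\dual(\Gamma))$, transferring the hardness back to $\NSOL(\Gamma)$. Concretely, $\dual([x\lor y]) = [\neg x \lor \neg y]$, so one could alternatively rerun the vertex-cover reduction with the complemented input vector $\cmpl m$.

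I do not expect a genuine obstacle here: the reduction is a textbook encoding, and the only point requiring care is that it is \emph{strictly} cost-preserving (yielding $\alpha=1$), which follows from the exact identity $\hd(0\cdots 0, m') = \hw(m')$ together with the bijection between satisfying assignments and vertex covers. The one bookkeeping fact worth stating explicitly is the identification $\iS_0^2 = \cc{[x\lor y]}$ that licenses the appeal to Corollary~\ref{cor:coclones}.
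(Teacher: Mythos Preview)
Your proposal is correct and essentially matches the paper's proof: both reduce from minimum vertex cover via the clause $(x\lor y)$ per edge with the all-zero reference assignment, then lift to arbitrary~$\Gamma$ via Corollary~\ref{cor:coclones}. The only cosmetic difference is that the paper handles the $\iS_1^2$ case by an explicit symmetric argument (complemented reference vector) plus Corollary~\ref{cor:coclones}, whereas you invoke Lemma~\ref{lem:dual}; both are immediate.
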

\begin{proof}
  We have $\iS_0^2 = \cc{\set{[x\lor y]}}$ and
  $\iS_1^2 = \cc{\set{[\neg x \lor \neg y]}}$.  We discuss the
  former case, the latter one being symmetric and provable from the
  first one by Proposition~\ref{prop:dual}.

  We encode $\prbname{VertexCover}$ into $\NSOL(\set{[x\lor y]})$.
  For each edge $\set{x,y} \in E$ of a graph $G=(V,E)$ we add the
  clause $(x \lor y)$ to the formula~$\varphi_G$.  Every model $m'$
  of~$\varphi_G$ yields a vertex cover $\set{v\in V\mid m'(v) = 1}$,
  and conversely, the characteristic function of any vertex cover
  satisfies~$\varphi_G$. Moreover, we choose $m = \vec{0}$. Then
  $\hd(\vec{0}, m')$ is minimal if and only if the number of~$1$s
  in~$m'$ is minimal, i.e., if~$m'$ is a minimal model of~$\varphi_G$,
  i.e., if~$m'$ represents a minimal vertex cover of~$G$.  Since
  $\prbname{VertexCover}$ is $\APX$-complete (see
  e.g.~\cite{AusielloCGKMSP-99}) and
  $\NSOL(\set{[x\lor y]})\aple\NSOL(\Gamma)$ (see
  Corollary~\ref{cor:coclones}), the result follows.
\end{proof}

\begin{proposition}\label{prop:NSOL-D_2}
  We have $\NSOL(\Gamma) \in \APX$ for constraint languages
  $\Gamma\subseteq \iD_2$.
\end{proposition}
\begin{proof}
  $\Gamma':=\set{[x \oplus y], [x\to y]}$ is a base of $\iD_2$. By
  Corollary~\ref{cor:coclones} it suffices to show that $\NSOL(\Gamma')$
  is in $\APX$. Let $(\varphi, m)$ be an instance of
  this problem. Feasibility for $\varphi$ can be encoded as an integer
  program as follows: Every constraint $x \oplus y$ induces an
  equation $x + y=1$, every constraint $x \to y$ an inequality
  $x\le y$. If we restrict all variables to $\set{0,1}$ by the
  appropriate inequalities, it is clear that an assignment~$m'$
  satisfies~$\varphi$ if it satisfies the linear system with
  inequality side conditions.  As objective function we use
  $f(\vec x):=\sum_{x\in V_0} x + \sum_{x\in V_1} (1-x)$, where $V_i$
  is the set of variables mapped to~$i$ by~$m$.  Clearly, for every
  solution~$m'$ we have $f(m') = \hd(m, m')$.  The $2$-approximation
  algorithm from~\cite{HochbaumMNT-93} for integer linear programs,
  where every inequality contains at most two variables, completes the
  proof.
\end{proof}

\begin{proposition}\label{prop:NSOL-S_00}
  We have $\NSOL(\Gamma) \in \APX$ for constraint languages
  $\Gamma\subseteq\iS_{00}^\ell$ with $\ell\geq2$.
\end{proposition}
\begin{proof}
  $\Gamma':=\set{[x_1\lor \cdots \lor x_\ell], [x\to y], [\neg x], [x]}$
  is a base of $\iS_{00}^\ell$. By Corollary~\ref{cor:coclones} it
  suffices to show that $\NSOL(\Gamma')$ is in $\APX$.
  Let $(\varphi,m)$ be an instance of this problem. We use an approach
  similar to the one for the corresponding case
  in~\cite{KhannaSTW-01}, again writing~$\varphi$ as an integer
  program. We write constraints $x_1 \lor \cdots \lor x_\ell$ as
  inequalities $x_1+\cdots + x_\ell \ge 1$, constraints $x \to y$ as
  $x\le y$, $\neg x$ as $x=0$, and $x$ as $x=1$. Moreover, we add
  $x\geq 0$ and $x\leq 1$ for each variable~$x$.  It is easy to check
  that the feasible Boolean solutions of $\varphi$ and of the linear
  system coincide. As objective function we use
  $f(\vec x):=\sum_{x\in V_0} x + \sum_{x\in V_1} (1-x)$, where $V_i$
  is the set of variables mapped to~$i$ by~$m$.  Clearly, for every
  solution~$m'$ we have $f(m') = \hd(m, m')$.  Therefore it suffices
  to approximate the optimal solution for the integer linear program.

  To this end, let~$m''$ be a (generally non-integer) solution to the
  relaxation of the linear program, which can be computed in
  polynomial time. We construct~$m'$ by setting $m'(x) = 0$ if
  $m''(x)< 1 / \ell$ and $m'(x) = 1$ if $m''(x) \geq 1 / \ell$. As
  $\ell\geq 2$, we get
  $\hd(m,m')=f(m') \leq \ell f(m'') \leq \ell \cdot \OPT(\varphi, m)$.
  It is easy to check that~$m'$ is a feasible solution, which
  completes the proof.
\end{proof}


\subsubsection{NearestCodeword-Complete Cases}

This section essentially uses the facts that $\OptMinOnes$ is
$\NCW$-complete for the co-clone~$\iL_2$ and that it is a special case
of $\NSOL$. The following result was stated by Khanna et al. for
completeness via $\A$-reductions~\cite[Theorem 2.14]{KhannaSTW-01}. A
closer look at the proof reveals that it also holds for the stricter
notion of completeness via $\AP$-reductions that we use.

\begin{proposition}
  \label{prop:MinOnes-NCW-complete}
  The problem $\OptMinOnes(\Gamma)$ is $\NCW$-complete via
  $\AP$-reductions for constraint languages~$\Gamma$ satisfying
  $\cc{\Gamma} = \iL_2$.
\end{proposition}
\begin{proof}
  According to~\cite[Lemma 8.13]{KhannaSTW-01}, $\OptMinOnes(\Gamma)$
  is $\NCW$-hard for $\iL \subseteq \cc \Gamma$.  The proof uses
  $\AP$-reductions, i.e., we have $\NCW\aple\OptMinOnes(\Gamma)$.

  Regarding the other direction, $\OptMinOnes(\Gamma)\aple\NCW$, we
  first observe that $\odd^3=\SET{(a_1,a_2,a_3)\in\set{0,1}^3}{$\sum_i
    a_i$ odd}$ and $\even^3=\SET{(a_1,a_2,a_3)\in\set{0,1}^3}{$\sum_i
    a_i$ even}$ perfectly implement every constraint in $\iL_2$, i.e.,
  $\cc{\set{\odd^3,\even^3}}=\iL_2$~\cite[Lemma 7.6]{KhannaSTW-01}.
  Therefore, for $\Gamma\subseteq\iL_2$, the problem
  $\OptWMinOnes(\Gamma)$ $\AP$-reduces to
  $\OptWMinOnes(\set{\odd^3,\even^3})$~\cite[Lemma 3.9]{KhannaSTW-01}.
  The latter problem $\AP$-reduces to
  $\WMinCSP(\set{\odd^3,\even^3,[\lnot x]})$~\cite[Lemma
  8.1]{KhannaSTW-01}, which further $\AP$-reduces to
  $\WMinCSP(\set{\odd^3,\even^3})$ because of
  $[\lnot x]=[\even^3(x,x,x)]$.  In total we thus have that
  $\OptWMinOnes(\Gamma)$ $\AP$-reduces to
  $\WMinCSP(\set{\odd^3,\even^3})$.
  We conclude by observing that
  $\OptMinOnes$ is a particular case of~$\OptWMinOnes$ and that $\NCW$
  is the same as $\WMinCSP(\set{\odd^3,\even^3})$, yielding
  $\OptMinOnes(\Gamma)\aple\NCW$.
\end{proof}

\begin{lemma}\label{lem:MinOnes-aple-NSOL}%
  We have $\OptMinOnes(\Gamma)\aple \NSOL(\Gamma)$ for any constraint
  language~$\Gamma$.
\end{lemma}
\begin{proof}
  $\OptMinOnes(\Gamma)$ is a special case of $\NSOL(\Gamma)$ where~$m$
  is the constant $\vec 0$-assignment.%
\end{proof}

\begin{corollary}\label{cor:NSOL-NCW-hard-L}
  $\NSOL(\Gamma)$ is $\NCW$-hard for constraint languages~$\Gamma$
  satisfying $\iL \subseteq \cc \Gamma$.
\end{corollary}
\begin{proof}
  $\Gamma':=\set{\even^4,[x],[\neg x]}$ is a base of $\iL_2$.
  By Proposition~\ref{prop:MinOnes-NCW-complete},
  $\OptMinOnes(\Gamma')$ is $\NCW$-complete.  By
  Lemma~\ref{lem:MinOnes-aple-NSOL}, $\OptMinOnes(\Gamma')$ reduces to
  $\NSOL(\Gamma')$.  By Proposition~\ref{prop:unary}, $\NSOL(\Gamma')$
  is $\AP$-equivalent to $\NSOL(\set{\even^4})$.  Finally, because of
  $\even^4\in\iL\subseteq\cc{\Gamma}$ and
  Corollary~\ref{cor:coclones}, $\NSOL(\set{\even^4})$ reduces to
  $\NSOL(\Gamma)$.
\end{proof}

\begin{proposition}\label{prop:NSOL-MinOnes-to-weak_base-L_2}
  We have
  $\NSOL(\Gamma)\aple\OptMinOnes(\set{\even^4, [\neg x], [x]})$ for
  constraint languages~$\Gamma\subseteq\iL_2$.
\end{proposition}
\begin{proof}
  $\Gamma':=\set{\even^4, [\neg x], [x]}$ is a base of
  $\iL_2$. By Corollary~\ref{cor:coclones} it suffices
  to show $\NSOL(\Gamma') \aple \OptMinOnes(\Gamma')$.

  We proceed by reducing $\NSOL(\Gamma')$ to a subproblem of
  $\NSOLpp(\Gamma')$, where only instances $(\varphi,\vec{0})$ are
  considered. Then, using Proposition~\ref{prop:quantifiers} and
  Remark~\ref{rem:zero}, this reduces to a subproblem of
  $\NSOL(\Gamma')$ with the same restriction on the assignments, which
  is exactly $\OptMinOnes(\Gamma')$.  Note that $[x \oplus y]$ is
  equal to
  $\left[\exists z \exists z' (\even^4(x,y, z, z') \land \neg z \land
    z')\right]$ so we can freely use $[x \oplus y]$ in any
  $\Gamma'$-formula.  Let formula~$\varphi$ and assignment~$m$ be an
  instance of $\NSOL(\Gamma')$. We copy all clauses of~$\varphi$
  to~$\varphi'$.  For each variable~$x$ of~$\varphi$ for which
  $m(x)=1$, we take a new variable~$x'$ and add the constraint
  $x \oplus x'$ to~$\varphi'$. Moreover, we existentially
  quantify~$x$. Clearly, there is a bijection~$I$ between the
  satisfying assignments of~$\varphi$ and those of~$\varphi'$: For
  every solution~$s$ of~$\varphi$ we get a solution~$I(s)$
  of~$\varphi'$ by setting for each~$x'$ introduced in the
  construction of~$\varphi'$ the value $I(s)(x')$ to the complement of
  $s(x)$. Moreover, we have that $\hd(m, s) = \hd(\vec 0, I(s))$. This
  yields a trivial $\AP$-reduction with $\alpha=1$.
\end{proof}

\subsubsection{MinHornDeletion-Complete Cases}

\begin{proposition}[Khanna et al.~\cite{KhannaSTW-01}]\label{prop:kstw-minhd}
  The optimization problems
  $\OptMinOnes(\set{x\lor y\lor \neg z, x,\neg x})$ and
  $\OptWMinOnes(\set{x\lor y\lor \neg z, x\lor y})$ are
  $\MinHD$-complete via $\AP$-reductions.
\end{proposition}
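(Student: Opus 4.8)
The plan is to establish the two $\AP$-equivalences $\OptMinOnes(\Gamma_0)\apeq\MinHD$ and $\OptWMinOnes(\Gamma_0)\apeq\MinHD$ for $\Gamma_0=\set{x\lor y\lor \neg z, x\lor y}$, which together are exactly what ``$\MinHD$-complete'' abbreviates. At the conceptual level the statement is the dual-Horn mirror of Khanna et al.'s result that weighted and unweighted \textsc{Min Ones} over a weakly-negative basis are \textsc{Min Horn Deletion}-complete: the relation $x\lor y\lor\neg z$ is dual Horn, and $\MinHD$ is, up to complementing every variable, exactly their \textsc{Min Horn Deletion} problem. Hence one route is simply to invoke their classification together with the duality of Lemma~\ref{lem:dual}. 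I would nonetheless spell out the reductions, since it is the cycle $\OptMinOnes(\Gamma_0)\aple\OptWMinOnes(\Gamma_0)\aple\MinHD\aple\OptMinOnes(\Gamma_0)$ that yields both completeness claims at once.

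For $\MinHD\aple\OptWMinOnes(\Gamma_0)$ I would turn ``minimize unsatisfied conjuncts'' into ``minimize weight'' by attaching to each conjunct $C_i$ a fresh \emph{error variable} $z_i$ of weight~$1$, while all original variables get weight~$0$. A unit conjunct $x$ becomes $x\lor z_i$, and $\neg x$ becomes $z_i\lor z_i\lor\neg x$, i.e.\ $\neg x\lor z_i$, obtained from $x\lor y\lor\neg z$ by identifying the two positive slots. A ternary conjunct $a\lor b\lor\neg c$ becomes the pair $(a\lor b\lor\neg t_i)\land(t_i\lor z_i\lor\neg c)$ with a fresh auxiliary $t_i$ of weight~$0$; a short case check shows this conjunction is equivalent to $a\lor b\lor\neg c\lor z_i$, that $t_i$ and $z_i$ can both be set to~$0$ whenever $C_i$ is satisfied, and that $z_i$ is forced to~$1$ exactly when $C_i$ is violated. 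Thus the optimum weight equals the minimum number of unsatisfied conjuncts, an exact and hence approximation-preserving reduction with $\alpha=1$. The reverse reduction $\OptWMinOnes(\Gamma_0)\aple\MinHD$ keeps every constraint of the input as a \emph{hard} conjunct (expressing $x\lor y$ as $x\lor y\lor\neg z$ with $z$ pinned to~$1$, and pinning variables through the unit relations $x,\neg x$ available in $\MinHD$) and encodes the objective by soft unit conjuncts $\neg x$, one per unit of weight on~$x$, so that the number of violated soft conjuncts equals $\sum_{x:m(x)=1}w(x)$; duplicating the hard conjuncts sufficiently often forces any near-optimal $\MinHD$ solution to satisfy the original formula. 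Finally $\OptMinOnes(\Gamma_0)\aple\OptWMinOnes(\Gamma_0)$ is the trivial unit-weight embedding, and to close the cycle I would give $\MinHD\aple\OptMinOnes(\Gamma_0)$ by the same error-variable gadget, now \emph{amplifying} each $z_i$ into a block of $M=\mathrm{poly}(n)$ copies forced equal through $z\equiv z'$ chains (which are dual Horn and hence lie in $\cc{\Gamma_0}$); choosing $M$ larger than the number of original variables makes the unit-weight objective $M\cdot(\#\text{errors})+\Lo(\OPT)$, recovering $\MinHD$ up to the $1+o(1)$ slack permitted by $\AP$-reductions.

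The step I expect to cause the most trouble is making $\OptWMinOnes(\Gamma_0)\aple\MinHD$ genuinely polynomial: since the weights $w(x)$ are given in binary, literally emitting $w(x)$ soft copies of $\neg x$ (and $\sum_x w(x)+1$ copies of each hard conjunct) is an exponential blow-up, which is precisely the obstruction separating weighted from unweighted minimization. I would resolve it by the standard weight-reduction argument for these families, replacing the weights with polynomially bounded surrogates that preserve approximate optima, or, more economically, by citing the corresponding lemma of Khanna et al., whose very point is that weighted and unweighted \textsc{Min Ones} over a dual-Horn basis are $\AP$-equivalent and both $\MinHD$-complete. Everything else is routine once the dual-Horn encodings above are in place; the only remaining subtlety is the degenerate case $\OPT=0$ (a satisfiable formula, respectively one all of whose conjuncts can be met), which is decided in polynomial time by Schaefer's algorithm for dual-Horn $\SAT$ and is treated separately so as not to divide by zero in the approximation ratio.
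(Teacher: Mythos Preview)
The paper does not prove this proposition at all: it is stated as a result of Khanna, Sudan, Trevisan and Williamson and is simply cited from~\cite{KhannaSTW-01} without argument. Your proposal therefore goes well beyond what the paper does; the intended ``proof'' here is a one-line reference to the literature (possibly together with the observation you already make, that the formulation used here is the dual-Horn counterpart of the \textsc{Min Horn Deletion} setting in~\cite{KhannaSTW-01}).

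That said, your sketch is largely sound as an independent argument, with one soft spot worth flagging. In the step $\MinHD\aple\OptMinOnes(\Gamma_0)$ you claim that amplifying each error variable $z_i$ to a block of $M$ copies makes the Hamming-weight objective equal $M\cdot(\#\text{errors})+o(\OPT)$. This is not quite right as written: the original variables $x_j$ and the auxiliaries $t_i$ also contribute to the Hamming weight, and their total contribution can be of order $n+C$ (with $C$ the number of conjuncts), which need not be $o(\OPT)$ when $\OPT$ is small. You can repair this by taking $M$ to be, say, $(n+C)^2$ and carrying the additive $r(n+C)/M$ term through the AP-reduction bookkeeping, but the inequality you need is $rk + r(n+C)/M \le (1+(r-1)\alpha)k$, not the $o(\OPT)$ shortcut. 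The binary-weight issue in $\OptWMinOnes(\Gamma_0)\aple\MinHD$ is real and, as you say, is precisely the nontrivial lemma proved in~\cite{KhannaSTW-01}; citing it is the honest move, since reproving polynomial weight reduction here would dwarf the rest of the argument.
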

\begin{proof}
  These results are stated in~\cite[Theorem 2.14]{KhannaSTW-01}
  for completeness via $\A$-reductions. The actual proof
  in~\cite[Lemma 8.7 and Lemma 8.14]{KhannaSTW-01}, however,
  uses $\AP$-reductions, hence the results also hold for our
  stricter notion of completeness.
\end{proof}

\begin{lemma}\label{lem:horncontain}
  $\NSOL(\set{x\lor y\lor \neg z})
   \aple \OptWMinOnes(\set{x\lor y\lor \neg z, x\lor y})$.
\end{lemma}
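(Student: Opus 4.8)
The plan is to turn the objective of $\NSOL$, which measures distance to an arbitrary reference assignment~$m$, into a non-negative weighted count of ones, which is exactly what $\OptWMinOnes$ optimizes. Following the notation of Proposition~\ref{prop:NSOL-iM2}, split the variables of the input formula~$\varphi$ into $V_0 = \Set{x}{m(x)=0}$ and $V_1 = \Set{x}{m(x)=1}$. For any satisfying assignment~$m'$ of~$\varphi$ we have
\begin{eqnarray*}
  \hd(m,m') &=& \sum_{x\in V_0} m'(x) + \sum_{x\in V_1}\bigl(1-m'(x)\bigr).
\end{eqnarray*}
The first sum is already a weighted sum of ones; the difficulty, and the crux of the reduction, is to express the second sum, which counts the variables of~$V_1$ that are set to~$0$, as a sum of ones as well.

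To do this I would introduce, for each $x\in V_1$, a fresh variable~$x'$ together with the clause $x\lor x'$, which is available since $[x\lor y]$ belongs to the target constraint language. Define the $\OptWMinOnes$ instance~$\psi$ by copying all clauses of~$\varphi$ and adding these new clauses, and set the weights $w(x)=1$ for $x\in V_0$, $w(x')=1$ for every new variable, and $w(x)=0$ for all $x\in V_1$. The clause $x\lor x'$ forces $x'\ge 1-x$, so $x'$ is pinned to~$1$ whenever $x$ is set to~$0$, while it is free (and hence set to~$0$ by a minimizing oracle) whenever $x$ is set to~$1$. Thus at the optimum $x'$ realizes exactly the quantity $1-m'(x)$, and the weighted objective of~$\psi$ coincides with $\hd(m,m')$.

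The back-translation takes any feasible solution~$m''$ of~$\psi$ to its restriction~$m'$ onto the variables of~$\varphi$; since~$\psi$ contains all clauses of~$\varphi$, this~$m'$ is feasible, establishing the feasibility clause of the $\AP$-reduction. For the ratio, I would first observe that the two optima coincide: every solution of~$\varphi$ extends to a solution of~$\psi$ of equal objective by putting $x'=1-m'(x)$, and conversely every solution of~$\psi$ restricts to a solution of~$\varphi$. Because~$m''$ satisfies every clause $x\lor x'$, we always have $\hd(m,m')\le \sum_{x:\,m''(x)=1} w(x)$, so if~$m''$ is $r$-approximate then $\hd(m,m')\le r\cdot\OPT(\psi,w)=r\cdot\OPT(\varphi,m)$, giving an $\AP$-reduction with $\alpha=1$.

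The main obstacle is the degenerate case $\OPT(\varphi,m)=0$, which occurs precisely when~$m$ already satisfies~$\varphi$ and where the approximation ratio is undefined. I would handle it up front: test in polynomial time whether~$m$ satisfies~$\varphi$ and, if so, output~$m$ directly, so that in the reduction we may assume $\OPT(\varphi,m)\ge 1$. The only other point requiring care is confirming that replacing each~$x'$ by its minimal feasible value can only decrease the objective, so that restricting an approximate~$m''$ never increases $\hd(m,m')$ beyond the objective value of~$m''$.
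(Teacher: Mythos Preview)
Your proposal is correct and follows essentially the same approach as the paper: both introduce a fresh variable~$x'$ with the clause $x\lor x'$ for every variable with $m(x)=1$, assign weight~$0$ to those variables and weight~$1$ to everything else, and handle the trivial case $\OPT(\varphi,m)=0$ up front. The only cosmetic difference is that the paper normalizes an approximate solution by setting redundant~$x'$ to~$0$ before arguing equality of objectives, whereas you use the inequality $\hd(m,m')\le w(m'')$ directly; both yield an $\AP$-reduction with $\alpha=1$.
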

\begin{proof}
  Let formula~$\varphi$ and assignment~$m$ be an instance of
  $\NSOL(\set{x\lor y\lor \neg z})$ over the variables $x_1, \ldots,x_n$.
  Let $V_1$ be the set of variables $x_i$ with $m(x_i)=1$. We construct a
  $\set{x\lor y\lor \neg z, x\lor y}$-formula $\varphi'$ by adding to~$\varphi$
  for each $x_i\in V_1$ the constraint $x_i\lor x_i'$ where $x_i'$ is
  a new variable. We set the weights of the variables of~$\varphi'$ as
  follows. For $x_i\in V_1$ we set $w(x_i) = 0$, all other variables
  get weight~$1$. To each satisfying assignment $m'$ of $\varphi'$ we
  construct the assignment $m''$ which is the restriction of $m'$ to
  the variables of $\varphi$.  This construction is an
  $\AP$-reduction.

  Note that~$m''$ is feasible if $m'$ is. Let~$m'$ be an
  $r$-approximation of $\OPT(\varphi')$. Note that whenever for
  $x_i\in V_1$ we have $m'(x_i)=0$ then $m'(x_i')= 1$. The other way
  round, we may assume that whenever $m'(x_i)=1$ for $x_i\in V_1$
  then $m'(x_i') = 0$. If this is not the case, then we can
  change~$m'$ accordingly, decreasing the weight that way. It follows
  that $w(m') = n_0 + n_1$ where we have
  \begin{align*}
  n_0 &= \card{\Set{i}{x_i \in V_1, m'(x_i)=0}} = \card{\Set{i}{x_i
      \in V_1, m'(x_i)\neq m(x_i)}}\\
  n_1 &= \card{\Set{i}{x_i \notin V_1, m'(x_i)=1}} =
  \card{\Set{i}{x_i \notin V_1, m'(x_i)\neq m(x_i)}},
  \end{align*}
  which means that $w(m')$ equals $\hd(m,m'')$.  Analogously, any model
  $s\in[\varphi]$ can be extended to a model $m'\in[\varphi']$ by putting
  $m'(x_i') =1$ if $x_i\in V_1$ and $s(x_i)=0$, and $m'(x_i') =0$ for
  the remaining $x_i\in V_1$; thereby $w(m')=\hd(m,s)$. Consequently, the
  optima in both problems correspond, that is, we get
  $\OPT(\varphi') = \OPT(\varphi, m)$.  Hence we deduce
  $\hd(m, m'') = w(m') \leq r \OPT(\varphi')= r\OPT(\varphi,m)$.
\end{proof}

\begin{proposition}\label{prop:iV_2_to_MinOnes}
  For every dual Horn constraint language $\Gamma \subseteq \iV_2$ we
  have the reduction
  $\NSOL(\Gamma)\aple\OptWMinOnes(\set{x\lor y\lor \neg z, x\lor y})$.
\end{proposition}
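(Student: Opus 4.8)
The plan is to obtain the statement by chaining three approximation-preserving reductions that are already at our disposal, relying on the transitivity of $\aple$. Concretely, I would prove
\[
  \NSOL(\Gamma)\aple\NSOL(\set{x\lor y\lor\neg z, x, \neg x})\apeq\NSOL(x\lor y\lor\neg z)\aple\OptWMinOnes(\set{x\lor y\lor\neg z, x\lor y}),
\]
where the first reduction comes from the co-clone machinery, the middle equivalence removes the two unit constraints, and the last reduction is precisely Lemma~\ref{lem:horncontain}.

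For the first step I would use that $\set{x\lor y\lor\neg z, x, \neg x}$ is a base of $\iV_2$. The inclusion $\cc{x\lor y\lor\neg z, x, \neg x}\subseteq\iV_2$ is immediate, since all three relations are dual Horn and co-clones preserve dual Horn-ness. For the converse it suffices to pp-define every dual Horn clause from these relations: a positive clause $x_1\lor\cdots\lor x_k$ and a mixed clause $x_1\lor\cdots\lor x_k\lor\neg z$ are obtained by chaining copies of $x\lor y\lor\neg z$ through fresh existentially quantified variables and fixing the auxiliary variables with the constants, and an arbitrary dual Horn relation is then the conjunction of such clauses. Hence $\Gamma\subseteq\iV_2=\cc{x\lor y\lor\neg z, x, \neg x}$, and Corollary~\ref{cor:coclones} yields the reduction $\NSOL(\Gamma)\aple\NSOL(\set{x\lor y\lor\neg z, x, \neg x})$.

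For the middle equivalence I would apply Lemma~\ref{lem:unary} with $\Gamma=\set{x\lor y\lor\neg z}$, identifying the unit relations $[x]$ and $[\neg x]$ with the constraints $x$ and $\neg x$. Its hypothesis holds because dual Horn satisfiability, and hence the computation of a feasible solution, is polynomial by Schaefer's theorem; the lemma then gives $\NSOL(x\lor y\lor\neg z)\apeq\NSOL(\set{x\lor y\lor\neg z, x, \neg x})$. Composing the three reductions completes the proof. The argument is mostly bookkeeping, with all the genuine optimization content already contained in Lemma~\ref{lem:horncontain}; the only point that needs care, and which I expect to be the main (if minor) obstacle, is the explicit pp-definition showing that $x\lor y\lor\neg z$ together with the constants generates all of $\iV_2$, together with checking the polynomial-time feasibility hypothesis of Lemma~\ref{lem:unary}.
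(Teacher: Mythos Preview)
Your proposal is correct and follows essentially the same approach as the paper: reduce to the base $\set{x\lor y\lor\neg z, x, \neg x}$ via Corollary~\ref{cor:coclones}, strip the unit constraints via Lemma~\ref{lem:unary}, and finish with Lemma~\ref{lem:horncontain}. The only difference is that the paper simply cites the fact that $\set{x\lor y\lor\neg z, x, \neg x}$ is a base of $\iV_2$ (it appears in Table~\ref{fig:clones}), whereas you propose to verify it by hand; this extra work is unnecessary but harmless.
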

\begin{proof}
  Since $\set{x\lor y\lor \neg z, x, \neg x}$ is a base of~$\iV_2$,
  by Corollary~\ref{cor:coclones} it suffices to prove the reduction
  $\NSOL(\set{x\lor y\lor \neg z, x, \neg x}) \aple
  \OptWMinOnes(\set{x\lor y\lor \neg z, x\lor y})$. To this end, first
  reduce $\NSOL(\set{x\lor y\lor \neg z, x, \neg x})$ to
  $\NSOL(x\lor y\lor \neg z)$ by Proposition~\ref{prop:unary} and then
  use Lemma~\ref{lem:horncontain}.
\end{proof}

\begin{proposition}\label{prop:MinOnes_to_iV_2}
  $\NSOL(\Gamma)$ is $\MinHD$-hard for finite~$\Gamma$
  with $\iV_2\subseteq \cc \Gamma$.
\end{proposition}
\begin{proof}
  For $\Gamma':=\set{x\lor y\lor\neg z, x,\neg x}$ we have
  $\MinHD\apeq \OptMinOnes(\Gamma')$ by
  Proposition~\ref{prop:kstw-minhd}.  Now it follows
  $\OptMinOnes(\Gamma')\aple \NSOL(\Gamma')\aple \NSOL(\Gamma)$ using
  Lemma~\ref{lem:MinOnes-aple-NSOL} and Corollary~\ref{cor:coclones}
  on the assumption $\Gamma'\subseteq\iV_2\subseteq \cc{\Gamma}$.
\end{proof}

\subsubsection{Poly-APX-Hardness}

\begin{proposition}\label{prop:duphardnc}
    The problem
    $\NSOL(\Gamma)$ is $\pAPX$-hard for constraint languages~$\Gamma$
    satisfying $\iN\subseteq \cc{\Gamma} \subseteq \iI_0$ or
    $\iN\subseteq \cc{\Gamma}\subseteq \iI_1$.
\end{proposition}
\begin{proof}
  The constraint language
  $\Gamma_1:=\set{\even^4, [x\to y], [x]}$ is a base of~$\iI_1$.
  $\OptMinOnes(\Gamma_1)$ is $\pAPX$-hard by Theorem~2.14
  of~\cite{KhannaSTW-01} and reduces to $\NSOL(\Gamma_1)$ by
  Lemma~\ref{lem:MinOnes-aple-NSOL}.
  Since
  $[x\to y] = [\dup^3(x,y,1)] = [\exists z (\dup^3(x, y, z) \land z)]$,
  as well as $\cc{\set{\even^4}} = \iL$, $\cc{\set{\dup^3}} = \iN$,
  and $\iL \subseteq \iN$,
  we have the reductions
  \begin{displaymath}
    \NSOL(\Gamma_1)\aple \NSOL(\Gamma_1\cup\set{\dup^3})
    \aple \NSOL(\set{\even^4,\dup^3,x}) \apeq \NSOL(\set{\dup^3,x})
  \end{displaymath}
  by Corollary~\ref{cor:coclones}. The problem of finding feasible
  solutions of $\NSOL(\Gamma)$, where
  $\iN\subseteq \cc{\Gamma} \subseteq \iI_0$ or
  $\iN\subseteq \cc{\Gamma} \subseteq \iI_1$,
  is polynomial-time decidable.
  Indeed, such~$\Gamma$ is $0$- or $1$-valid, therefore the all-zero or
  all-one tuple is always guaranteed to exist as a feasible solution.
  Therefore Proposition~\ref{prop:unary} implies
  $\NSOL(\set{\dup^3,x})\apeq \NSOL(\set{\dup^3})$;
  the latter problem reduces to $\NSOL(\Gamma)$ because
  $\dup^3 \in \iN\subseteq \cc{\Gamma}$
  and Corollary~\ref{cor:coclones}.
\end{proof}

\section{Finding Another Solution Closest to the Given One}
\label{sec:proofsXSOL}

In this section we study the optimization problem $\NextSol$. We first
consider the polynomial-time cases and then the cases of higher
complexity.

\subsection{Polynomial-Time Cases}
\label{ssec:proofsXSOL-PO}

Since we cannot take advantage of clone closure, we must proceed
differently. We use the following result
based on a theorem by Baker and Pixley~\cite{BakerP-75}.

\begin{proposition}[Jeavons et al.~\cite{JeavonsCG-97}]\label{prop:BakerPixley}
  Every bijunctive constraint $R(x_1, \ldots, x_n)$
  is equivalent to the conjunction $\bigwedge_{1 \leq i \leq j} R_{ij}(x_i,x_j)$,
  where~$R_{ij}$ is the projection of~$R$ to the
  coordinates~$i$ and~$j$.
\end{proposition}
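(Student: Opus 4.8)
The plan is to derive this from the algebraic characterization of bijunctive relations together with a Baker--Pixley-style closure argument. Recall that a Boolean relation $R$ is bijunctive if and only if it is preserved by the ternary \emph{majority} operation $\mathrm{maj}$, applied coordinatewise, where $\mathrm{maj}(a,b,c)$ returns the bit occurring at least twice among $a,b,c$; this is the standard clone-theoretic description of $\iD_2$ (see~\cite{Schaefer-78,CreignouKS-01}). I would therefore assume throughout that $R$ is closed under $\mathrm{maj}$, i.e.\ whenever $t_1,t_2,t_3\in R$, then their coordinatewise majority also lies in $R$.

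One inclusion is immediate: if $(a_1,\dots,a_n)\in R$, then for every pair $i\le j$ the projection $(a_i,a_j)$ lies in $R_{ij}$ by definition, so $(a_1,\dots,a_n)$ satisfies $\bigwedge_{i\le j}R_{ij}(x_i,x_j)$. The substance of the proposition is the converse, and this is exactly where the majority operation is needed.

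For the converse I would fix a tuple $a=(a_1,\dots,a_n)$ satisfying all the binary constraints $R_{ij}(x_i,x_j)$ and prove, by induction on $m$, the following claim: for every set $S$ of coordinates with $\card S=m\ge 2$ there is a tuple of $R$ agreeing with $a$ on all coordinates in $S$. The base case $m=2$ is exactly the hypothesis $(a_i,a_j)\in R_{ij}$, which by definition of the projection means that some tuple of $R$ matches $a$ on $\set{i,j}$. For the inductive step, given $S=\set{i_1,\dots,i_{m+1}}$ of size $m+1\ge 3$, I would single out three of its elements and remove one each to form three subsets $S_1,S_2,S_3$ of size $m$; by the induction hypothesis pick witnesses $t_1,t_2,t_3\in R$ agreeing with $a$ on $S_1,S_2,S_3$, respectively. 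Since the three removed coordinates are distinct, every coordinate of $S$ belongs to at least two of $S_1,S_2,S_3$, so on each such coordinate at least two of the $t_k$ equal the corresponding entry of $a$; hence $\mathrm{maj}(t_1,t_2,t_3)$ agrees with $a$ on all of $S$, and it lies in $R$ by closure under $\mathrm{maj}$. Taking $S$ to be the full coordinate set yields a tuple of $R$ equal to $a$, i.e.\ $a\in R$, which establishes the reverse inclusion.

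The only genuinely delicate point is the counting observation in the inductive step---that deleting three \emph{distinct} coordinates leaves every coordinate present in a majority of the three subsets---which is precisely what lets the ternary near-unanimity operation reconstruct membership from binary data; everything else is bookkeeping. I would remark that the argument uses only that $R$ is closed under a ternary majority, and so it would apply verbatim to any near-unanimity operation of arity three, explaining why the projection bound is $2$.
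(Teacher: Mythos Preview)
Your argument is correct and is precisely the standard Baker--Pixley argument for ternary near-unanimity (majority) operations: the forward inclusion is trivial, and the reverse inclusion follows by induction on the size of the coordinate set using closure under $\mathrm{maj}$, with the key combinatorial observation that removing three distinct coordinates from a set leaves every coordinate in at least two of the three resulting subsets.

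Note, however, that the paper does \emph{not} give its own proof of this proposition. It is stated as a citation of a known result from Jeavons et al.~\cite{JeavonsCG-97} (building on Baker and Pixley~\cite{BakerP-75}) and is used as a black box in the subsequent algorithms. So there is no ``paper's proof'' to compare against; your proposal simply supplies the standard argument that the cited references contain.
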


\begin{proposition}\label{prop:XSOL-iD2}
  If\/ $\Gamma$ is bijunctive $(\Gamma\subseteq\iD_2)$ then
  $\XSOL(\Gamma)$ is in $\PO$.
\end{proposition}
\begin{proof}
  According to Proposition~\ref{prop:BakerPixley} we may assume that
  the formula~$\varphi$ is a conjunction of atoms $R(x,y)$
  or a unary constraint $R(x,x)$ of the form $[x]$ or
  $[\neg x]$.

  Unary constraints fix the value of the constrained variable and can
  be eliminated by propagating the value to the other clauses. For
  each of the remaining variables, $x$, we attempt to construct a model~$m_x$
  of~$\varphi$ with $m_x(x)\neq m(x)$ such that $\hd(m_x,m)$ is
  minimal among all models with this property. This can be done in
  polynomial time as described below.  If the construction of~$m_x$
  fails for every variable~$x$, then~$m$ is the sole model of~$\varphi$ and the
  problem is not solvable. Otherwise choose one of the variables~$x$
  for which $\hd(m_x,m)$ is minimal and return~$m_x$ as second
  solution~$m'$.

  It remains to describe the computation of~$m_x$. Initially we set
  $m_x(x)$ to $1-m(x)$ and $m_x(y):=m(y)$ for all variables~$y\neq x$,
  and mark~$x$ as flipped. If~$m_x$ satisfies all atoms we are
  done. Otherwise let $R(u,v)$ be an atom falsified by~$m_x$.  If~$u$
  and~$v$ are marked as flipped, the construction fails, a model~$m_x$
  with the property $m_x(x)\neq m(x)$ does not exist. Otherwise
  $R(u,v)$ contains a uniquely determined variable~$v$ not marked as
  flipped. Set $m_x(v) := 1-m(v)$, mark~$v$ as flipped, and repeat
  this step. This process terminates after flipping every variable at
  most once.
\end{proof}

\begin{proposition}\label{prop:XSOL-iI00m}
  If\/ $\Gamma \subseteq \iS_{00}^k$ or $\Gamma \subseteq \iS_{10}^k$
  for some $k \geq 2$ then $\XSOL(\Gamma)$ is in $\PO$.
\end{proposition}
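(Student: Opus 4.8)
The plan is to reduce to the single co-clone $\iS_{00}^k$ and then give a direct combinatorial algorithm, crucially exploiting that the input assignment $m$ is already a solution. First I would dispose of the case $\Gamma\subseteq\iS_{10}^k$ by duality: since $\dual(\iS_{00}^k)=\iS_{10}^k$, Lemma~\ref{lem:dual} gives $\XSOL(\Gamma)\apeq\XSOL(\dual(\Gamma))$ with $\dual(\Gamma)\subseteq\iS_{00}^k$, so it suffices to treat $\Gamma\subseteq\iS_{00}^k$. By Lemma~\ref{lem:s00-exelim} each atom of the input formula can be rewritten, on the same variables, as a conjunction of positive clauses of length at most $k$, implications $x\to y$, and unit clauses $[x]$, $[\neg x]$. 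I would then eliminate the unit clauses by fixing and propagating the forced values; since every solution, in particular both $m$ and the sought $m'$, must respect these, the fixed variables agree in $m$ and $m'$ and contribute nothing to $\hd(m,m')$, so it is enough to optimise over the remaining free variables subject to positive clauses and implications only.

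The heart of the argument is a structural dichotomy. Given a solution $m'\neq m$, split its difference set into $D^+=\{x: m(x)=0,\ m'(x)=1\}$ and $D^-=\{x: m(x)=1,\ m'(x)=0\}$. I would first show that the \emph{pure increase} $m''$ obtained from $m$ by raising exactly the variables of $D^+$ to $1$ is again a solution: positive clauses only gain satisfying literals, and an implication $x\to y$ could be violated only if $x$ were raised while $y$ were not, contradicting that $m'$ satisfies $x\to y$. Consequently, if $D^+\neq\emptyset$ the pure increase $m''$ is feasible with $\hd(m,m'')=\lvert D^+\rvert\le\hd(m,m')$, while if $D^+=\emptyset$ then $m'$ itself is a \emph{pure decrease}. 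Hence an optimal solution is always realised either by a pure increase or by a pure decrease, and it remains to minimise each type separately.

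For pure increases, a set $S$ of currently-$0$ variables may be raised to $1$ iff it is forward-closed under the implications; the cheapest nonempty such set is the forward implication-closure of a single seed, so the optimum over pure increases is $\min_v\lvert\mathrm{fwd}(v)\rvert$ over seeds $v$ with $m(v)=0$, computable in polynomial time. For pure decreases, a set $S$ of currently-$1$ variables may be lowered to $0$ iff it is backward-closed under the implications and no positive clause has all of its currently-true variables inside $S$; since any feasible $S$ containing a seed $v$ contains its backward closure $B(v)$, and enlarging $S$ can only bring more positive clauses entirely to $0$, the optimum over pure decreases is $\min\{\lvert B(v)\rvert : m(v)=1,\ B(v)\text{ falsifies no positive clause}\}$, again polynomial. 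The algorithm returns the smaller of the two optima, reporting infeasibility when $m$ is the unique solution.

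The step I expect to require the most care is the structural dichotomy, specifically verifying that pure increases are unconditionally feasible and that restricting attention to single-seed forward and backward closures genuinely attains the minimum; the surrounding reductions (duality, unit elimination) and the closure computations are routine once this is in place.
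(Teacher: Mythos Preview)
Your proposal is correct and follows essentially the same approach as the paper: reduce to $\iS_{00}^k$ by duality and Lemma~\ref{lem:s00-exelim}, argue that an optimal $m'$ may be taken to differ from $m$ only by raising zeros or only by lowering ones, and then compute for each variable the forward (respectively backward) implication-closure and check feasibility. The paper phrases the dichotomy via the $\lor$-polymorphism (if $m'$ flips both ways then $m\lor m'$ is strictly closer), which is exactly your ``pure increase $m''=m\lor m'$ is feasible'' observation; your more explicit justification that single-seed closures attain the minimum is implicit in the paper's per-variable enumeration.
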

\begin{proof}
  We perform the proof only for~$\iS_{00}^k$.
  Proposition~\ref{prop:dual} implies the same result
  for~$\iS_{10}^k$.

  The co-clone $\iS_{00}^k$ is generated by
  $\Gamma':=\set{\bor^k, [x\rightarrow y], [x], [\neg x]}$.  In fact,
  $\Gamma'$ is even a \emph{plain base} of
  $\iS_{00}^k$~\cite{CreignouKZ-08}, meaning that every relation
  in~$\Gamma$ can be expressed as a conjunctive formula over relations
  in~$\Gamma'$, without existential quantification or explicit equalities.
  Hence we may assume that $\varphi$ is given as a conjunction of
  $\Gamma'$-atoms.

  Note that $x \lor y$ is a polymorphism of $\Gamma'$, i.e., for any
  two solutions $m_1$, $m_2$ of $\varphi$ their disjunction
  $m_1 \lor m_2$~-- defined by $(m_1\lor m_2)(x) = m_1(x)\lor m_2(x)$
  for all~$x$~-- is also a solution of~$\varphi$. Therefore we get the
  optimal solution~$m'$ of an instance $(\varphi,m)$ by flipping
  in~$m$ either some ones to zeros or some zeros to ones, but not
  both. To see this, assume the optimal solution~$m'$ flips both ones
  and zeros. Then $m' \lor m$ is a solution of~$\varphi$ that is
  closer to $m$ than $m'$, which contradicts the optimality of~$m'$.

  Unary constraints fix the value of the constrained variable and can
  be eliminated by propagating the value to the other clauses (including
  removal of disjunctions containing implied positive literals and
  shortening disjunctions containing implied negative literals).
  This propagation does not lead to contradictions since~$m$ is a model
  of~$\varphi$. For
  each of the remaining variables, $x$, we attempt to construct a model~$m_x$
  of~$\varphi$ with $m_x(x)\neq m(x)$ such that $\hd(m_x,m)$ is
  minimal among all models with this property. This can be done in
  polynomial time as described below.  If the construction of~$m_x$
  fails for every variable~$x$, then~$m$ is the sole model of~$\varphi$ and the
  problem is not solvable. Otherwise choose one of the variables~$x$
  for which $\hd(m_x,m)$ is minimal and return~$m_x$ as second
  solution~$m'$.

  It remains to describe the computation of~$m_x$. If $m(x)=0$, we
  flip~$x$ to~$1$ and propagate this change iteratively along the
  implications, i.e., if $x \to y$ is a constraint of~$\varphi$ and
  $m(y)=0$, we flip~$y$ to~$1$ and iterate. This kind of flip never
  invalidates any disjunctions, it could only lead to contradictions
  with conditions imposed by negative unit clauses (and since their
  values were propagated before such a contradiction would be immediate).
  For $m(x)=1$ we proceed dually, flipping~$x$ to~$0$, removing~$x$
  from disjunctions if applicable, and propagating
  this change backward along implications $y \to x$ where $m(y)=1$. This
  can possibly lead to immediate inconsistencies with already inferred
  unit clauses, or it can produce contradictions through empty
  disjunctions, or it can create the necessity for further flips
  from~$0$ to $1$ in order to obtain a solution (because in a disjunctive
  atom all variables with value~$1$ have been flipped, and thus removed).
  In all these three cases the resulting assignment does not
  satisfy~$\varphi$, and there is no model that differs from~$m$
  in~$x$ and that can be obtained by flipping in one way only.
  Otherwise, the resulting assignment satisfies~$\varphi$, and this is
  the desired~$m_x$.
  Our process terminates after flipping every variable at most once,
  since we flip only in one way (from zeros to ones or from ones to
  zeros). Thus, $m_x$ is computable in polynomial time.
\end{proof}

\subsection{Hard Cases}
\label{ssec:proofsXSOL-hard}

\begin{proposition}\label{prop:XSOL-iI_0-iI_1}
  Let $\Gamma$ be a constraint language. If\/
  $\iI_1\subseteq \cc \Gamma$ or $\iI_0 \subseteq \cc \Gamma$ holds
  then it is $\NP$-complete to decide whether a feasible solution for
  $\XSOL(\Gamma)$ exists. Otherwise, $\XSOL(\Gamma)\in \pAPX$.
\end{proposition}
\begin{proof}
  Finding a feasible solution to $\XSOL(\Gamma)$ is exactly the
  problem $\AnotherSat(\Gamma)$ which is $\NP$-hard if and only if
  $\iI_1\subseteq \cc \Gamma$ or $\iI_0 \subseteq \cc \Gamma$
  according to Juban~\cite{Juban-99}.  If $\AnotherSat(\Gamma)$ is
  polynomial-time decidable, we can always find a feasible solution
  for $\XSOL(\Gamma)$ if it exists. Obviously, every feasible solution
  is an $n$-approximation of the optimal solution, where $n$ is the
  number of variables in the input.
\end{proof}

\subsubsection{Tightness results}\label{sssec:xsoltight}

It will be convenient to consider the following decision problem
asking for another solution that is not the complement, i.e., that does
not have maximal distance from the given one.

\decproblem{$\AnotherSatNC(\Gamma)$}%
{A conjunctive formula~$\varphi$ over relations from~$\Gamma$ and an
  assignment~$m$ satisfying~$\varphi$.}%
{Is there another satisfying assignment~$m'$ of~$\varphi$, different
  from~$m$, such that $\hd(m,m') < n$, where~$n$ is the number of
  variables in~$\varphi$?}

\begin{remark}
  $\AnotherSatNC(\Gamma)$ is $\NP$-complete
  for $\iI_0 \subseteq \cc \Gamma$ and $\iI_1 \subseteq \cc \Gamma$,
  since already $\AnotherSat(\Gamma)$ is $\NP$-complete for these
  cases, as shown in~\cite{Juban-99}.  Moreover,
  $\AnotherSatNC(\Gamma)$ is polynomial-time decidable if~$\Gamma$
  is Horn $(\Gamma \subseteq \iE_2)$, dual Horn
  $(\Gamma \subseteq \iV_2)$, bijunctive $(\Gamma \subseteq \iD_2)$,
  or affine $(\Gamma \subseteq \iL_2)$, for the same reason as
  for $\AnotherSat(\Gamma)$: For
  each variable~$x_i$ we flip the value of $m[i]$, substitute
  $\cmpl m(x_i)$ for~$x_i$, and construct another satisfying
  assignment if it exists. Consider now the solutions which we get for
  every variable~$x_i$.  Either there is no solution for any variable,
  then $\AnotherSatNC(\Gamma)$ has no solution; or there are only
  the solutions which are the complement of~$m$, then
  $\AnotherSatNC(\Gamma)$ has no solution as well; or else we get a
  solution~$m'$ with $\hd(m,m') < n$, leading also to a solution for
  $\AnotherSatNC(\Gamma)$. Hence, taking into account
  Proposition~\ref{prop:AScomplhard} below, we obtain a dichotomy
  result also for $\AnotherSatNC(\Gamma)$.
\end{remark}

Note that $\AnotherSatNC(\Gamma)$ is not compatible with
existential quantification. Let $\varphi(y, x_1, \ldots, x_n)$ with
model~$m$ be an instance of
$\AnotherSatNC(\Gamma)$ and let $m'$ be a solution satisfying
$\hd(m,m') < n+1$. Now consider the formula $\varphi_1(x_1, \ldots, x_n) =
\exists y \, \varphi(y, x_1, \ldots, x_n)$, obtained by existentially
quantifying the variable~$y$, and the tuples $m_1$ and~$m'_1$ obtained from
$m$ and~$m'$ by omitting the first component. Both, $m_1$ and~$m'_1$,
are still solutions of~$\varphi'$, but we cannot guarantee
$\hd(m_1, m'_1) < n$.
Hence we need the equivalent of Proposition~\ref{prop:coclonesmincd}
for this problem, whose proof is analogous.

\begin{proposition}\label{prop:coclonesanothersat}
  $\AnotherSatNC(\Gamma') \mle \AnotherSatNC(\Gamma)$ for
  constraint languages $\Gamma$ and $\Gamma'$ satisfying
  $\Gamma'\subseteq \ccc \Gamma$.
\end{proposition}

\begin{proposition}\label{prop:AScomplhard}
  If a constraint language~$\Gamma$ satisfies $\cc \Gamma = \iI$ or
  $\iN \subseteq \cc \Gamma \subseteq \iN_2$, then
  $\AnotherSatNC(\Gamma)$ is $\NP$-complete.
\end{proposition}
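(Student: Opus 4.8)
The plan is to prove $\NP$-membership routinely and then establish $\NP$-hardness through the weak bases listed in Table~\ref{tab:weakbases}. Membership is immediate: on input $(\varphi,m)$ we guess an assignment $m'$ and verify in polynomial time that $m'$ satisfies $\varphi$, that $m'\neq m$, and that $\hd(m,m')<n$. For the hardness part I would first reduce the whole statement, via Theorem~\ref{thm:weakbases} and Proposition~\ref{prop:coclonesanothersat}, to three concrete relations: it suffices to show that $\AnotherSat_{<n}$ is $\NP$-hard for the weak base of $\iI$, of $\iN$, and of $\iN_2$ from Table~\ref{tab:weakbases}. Indeed, since only $\iN$ and $\iN_2$ lie in the interval $\iN\subseteq\cc\Gamma\subseteq\iN_2$, every $\Gamma$ covered by the hypothesis is a base of one of $\iI,\iN,\iN_2$; the corresponding weak base $\Gamma_w$ then satisfies $\Gamma_w\subseteq\cce\Gamma$ by Theorem~\ref{thm:weakbases}, so $\AnotherSat_{<n}(\Gamma_w)\mle\AnotherSat_{<n}(\Gamma)$ by Proposition~\ref{prop:coclonesanothersat}.

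Before the reductions I would record the reformulation that makes the distance bound transparent. Because $\hd(m,m')=n$ holds exactly when $m'=\cmpl m$, asking for a solution $m'\neq m$ with $\hd(m,m')<n$ is the same as asking for a solution outside $\{m,\cmpl m\}$. For a complementive language ($\iN$, $\iN_2$) the complement $\cmpl m$ is automatically a solution, so $\AnotherSat_{<n}$ becomes the question whether $\varphi$ possesses a third solution beyond the trivial complement pair $\{m,\cmpl m\}$. For the both-valid case I would fix the input assignment to the constant $0$-vector; since the constant $1$-vector is then a solution at distance exactly $n$, the problem becomes whether the both-valid formula $\varphi$ admits a \emph{non-constant} solution.

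For the complementive cases I would reduce from Not-All-Equal 3-Satisfiability, which is $\NP$-complete and, crucially, complement-symmetric (an assignment is NAE-satisfying iff its complement is). The task is to simulate, using only conjunction and variable identification over the weak base relation, a negation gadget $[x\neq y]$ and a ternary not-all-equal gadget, together with an anchor supplying a known trivial solution pair to hand over as $m$. For $\iN_2$ the negation gadget is readily available: identifying the coordinates of the $8$-ary weak base relation appropriately (its last four coordinates being the complements of the first four) yields exactly $[x\neq y]$, and from this together with the even-weight structure one assembles the NAE constraints so that the solutions outside the anchor pair correspond bijectively to NAE-satisfying assignments. For $\iN$ the absence of built-in complement coordinates makes expressing negation by identification alone the delicate point, and I would either build it from several copies of the weak base relation or give a tailored direct encoding of NAE-3SAT.

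For the both-valid case I would reduce from ordinary 3SAT, exploiting that conjunction and identification over the weak base of $\iI$ already yield the AND-gate $z\equiv x\land y$, the OR-gate $w\equiv x\lor y$, implication, and equality. Using a global ``true'' anchor $T$ and ``false'' anchor $F$, I would build the skeleton computing each clause, link complementary literal variables through their AND into $F$ and their OR into $T$, and wire everything so that every non-constant solution is forced to realise a genuine, clause-satisfying truth assignment, while conversely a satisfying assignment yields a non-constant solution. The main obstacle throughout is exactly this gadget engineering under the weak-co-clone restriction: without existential quantification we cannot project, and in the both-valid case we have neither constants nor negation as relations, so the correspondence between the \emph{extra} solutions and the solutions of the source SAT problem must be enforced entirely through identifications, with the anchors breaking precisely the trivial symmetry that the bound $\hd(m,m')<n$ is designed to remove.
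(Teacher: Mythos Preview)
Your overall framework---membership via guess-and-check, hardness via Theorem~\ref{thm:weakbases} and Proposition~\ref{prop:coclonesanothersat} reducing everything to the three weak bases, and the reformulation $\hd(m,m')<n\iff m'\neq\cmpl m$---matches the paper exactly. The divergence is in the choice of source problem and the resulting amount of gadgetry.

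You propose reducing from NAE-3SAT (for $\iN,\iN_2$) and 3SAT (for $\iI$), which forces you to synthesize not-all-equal clauses, AND/OR gates, and anchors using only conjunction and identification over the weak base relation. You yourself flag that for $\iN$ ``expressing negation by identification alone [is] the delicate point'', and for $\iN_2$ and $\iI$ the constructions are only sketched. These gadgets may well exist, but nothing in the proposal actually builds them, so the argument as written has a gap at precisely the point that carries all the weight.

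The paper sidesteps this entirely by reducing from Juban's $\AnotherSat$ hardness results rather than from (NAE-)SAT. The crucial observation is structural: for the weak base $R_{\iN}$ of $\iN$, fixing the last coordinate to~$0$ yields exactly $R\times\{0\}$ with $R=\{000,101,110\}$, a relation for which $\AnotherSat$ is already $\NP$-hard. The reduction then introduces a \emph{single} fresh global variable~$w$ and replaces each $R(x_i,x_j,x_k)$ by $R_{\iN}(x_i,x_j,x_k,w)$; complementivity guarantees that any non-constant solution of the new formula can be flipped to one with $w=0$, recovering a non-trivial solution of the original. The $\iN_2$ case piggybacks on this by doubling variables (since $R_{\iN_2}=\{m\cmpl m\mid m\in R_{\iN}\}$), and the $\iI$ case uses the analogous coordinate-restriction trick on its weak base. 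No clause gadgets, no gate constructions---just one anchor variable and Juban's theorem. If you want to complete your route, you would have to exhibit the NAE and 3SAT gadgets explicitly; the paper's route avoids that burden altogether.
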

\begin{proof}
  Containment in $\NP$ is clear, it remains to show
  hardness. Since $\AnotherSatNC$ is not compatible with
  existential quantification, we cannot use clone theory, but have to
  consider the three co-clones $\iN_2$, $\iN$, and
  $\iI$ separately and make use of minimal weak bases.

  \paragraph{Case $\cc \Gamma = \iN$:}
  Putting $R:=\set{000, 101, 110}$, we give a reduction from
  $\AnotherSat(\set{R})$, which is $\NP$-hard~\cite{Juban-99} as
  $\cc{\set{R}} = \iI_0$. The problem remains
  $\NP$-complete if we restrict it to instances $(\varphi, \vec 0)$,
  since $R$ is $0$-valid and any given model~$m$ other than the
  constant $0$-assignment admits the trivial solution $m'=\vec 0$.
  Thus we can perform a reduction from this restricted
  problem.

  Consider the relation $R_{\iN}=\set{0000,1010,1100,1111,0101,0011}$.
  Given a formula~$\varphi$ over~$R$, we construct a
  formula~$\psi$ over $R_{\iN}$ by replacing every constraint
  $R(x, y, z)$ with $R_{\iN}(x, y, z, w)$, where~$w$ is a
  new global variable. Moreover, we set~$m$ to the constant
  $0$-assignment. This construction is a many-one reduction from the
  restricted version of $\AnotherSat(\set R)$ to
  $\AnotherSatNC(\set{R_{\iN}})$.

  To see this, observe that the tuples in~$R_{\iN}$ that have a~$0$ in
  the last coordinate are exactly those in $R\times \set{0}$. Thus any
  solution of~$\varphi$ can be extended to a solution of~$\psi$ by
  assigning~$0$ to~$w$. Conversely we observe that any solution~$m'$ of the
  $\AnotherSatNC(\set{R_{\iN}})$ instance $(\psi,\vec0)$ is different from
  $\vec 0$ and~$\vec 1$. As $R_{\iN}$ is complementive, we may assume
  $m'(w)=0$. Then~$m'$ restricted to the variables of~$\varphi$
  solves the $\AnotherSat(\set{R})$ instance $(\varphi,\vec0)$.

  Finally, observe that $R_{\iN}$ is a minimal weak base
  and $\Gamma$ is a base of the co-clone~$\iN$, therefore
  we have $R_{\iN}\in\ccc\Gamma$ by Theorem~\ref{thm:weakbases}.
  Now the $\NP$-hardness of $\AnotherSatNC(\Gamma)$ follows
  from the one of $\AnotherSatNC(\set{R_{\iN}})$ by
  Proposition~\ref{prop:coclonesanothersat}.

  \paragraph{Case $\cc \Gamma = \iN_2$:} We give a reduction from
  $\AnotherSatNC(\set{R_{\iN}})$, which is $\NP$-hard by the previous
  case. By Theorem~\ref{thm:weakbases}, $\ccc\Gamma$ contains the
  relation $R_{\iN_2} = \Set{ m\cmpl m}{m \in R_{\iN}}$. For an
  $R_{\iN}$-formula~$\varphi(x_1, \ldots, x_n)$, we construct an
  $R_{\iN_2}$-formula $\psi(x_1, \ldots, x_n, x_1', \ldots, x_n')$ by
  replacing every constraint $R_{\iN}(x, y, z, w)$ with
  $R_{\iN_2}(x, y, z, w, x', y', z', w')$. Assignments~$m$ for
  $\varphi$ extend to assignments $M$ for~$\psi$ by setting
  $M(x'):= \cmpl{m}(x)$. Conversely, assignments for~$\psi$ yield
  assignments for~$\varphi$ by restricting them to the variables
  in~$\varphi$. Because every variable $x_1,\dotsc,x_n$ assigned by
  models of $\varphi$ actually occurs in some $R_{\iN}$-atom in
  $\varphi$ and hence in some $R_{\iN_{2}}$-atom of $\psi$, and
  because of the structure of $R_{\iN_{2}}$, any model of $\psi$ distinct
  from $M$ and $\cmpl{M}$ restricts to a model of $\varphi$ other than
  $m$ or $\cmpl{m}$.
  Consequently, this construction is again a reduction from
  $\AnotherSatNC(\set{R_{\iN}})$ to
  $\AnotherSatNC(\set{R_{\iN_2}})$, reducing itself to
  $\AnotherSatNC(\Gamma)$ by Proposition~\ref{prop:coclonesanothersat}.

  \paragraph{Case $\cc \Gamma = \iI$:} We proceed as in Case
  $\cc \Gamma = \iN$, but use $R_{\iI}=\set{0000,0011,0101,1111}$
  instead of~$R_{\iN}$, and $\set{000, 011, 101}$ for~$R$. Note that
  the $R_{\iI}$-tuples with first coordinate~$0$ are exactly those in
  $\set{0}\times R$.  The relation $R_{\iI}$ is not complementive, but
  (as every variable assigned by any model of $\psi$ occurs in some
  atomic $R_{\iI}$-constraint) the only
  solution $m'$ such that $m'(w)=1$ is the constant $1$-assignment,
  which is ruled out by the requirement $\hd(m,m') < n$. Hence we may
  again assume $m'(w)=0$.
\end{proof}

\begin{proposition}\label{prop:tightxsol}
  For a constraint language~$\Gamma$ satisfying
  $\cc \Gamma = \iI$ or $\iN \subseteq \cc \Gamma \subseteq \iN_2$ and
  any $\varepsilon>0$ there is no polynomial-time
  $n^{1-\varepsilon}$-approximation algorithm for $\XSOL(\Gamma)$,
  unless $\P = \NP$.
\end{proposition}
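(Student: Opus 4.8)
The plan is to read the objective, after fixing the given model to be the constant $0$-assignment $\vec 0$, as the minimum Hamming weight of a \emph{nonzero} solution: for $m=\vec 0$ we have $\OPT(\XSOL(\varphi,\vec 0))=\min\{\hw(m')\mid m'\in[\varphi],\ m'\neq\vec 0\}$, and every hard instance I produce will have this shape (as do those of Proposition~\ref{prop:AScomplhard}). The tempting first move is to amplify the $\NP$-completeness of $\AnotherSat_{<n}(\Gamma)$, but that proposition only certifies the threshold separation ``$\OPT=n$'' (the sole other solution being $\vec 1$) versus ``$\OPT\le n-1$'', a ratio of merely $n/(n-1)$. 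One can define a tensor-power formula on a $t$-dimensional grid of variables, imposing $\varphi$ on every axis-parallel line, and show that the minimum nonzero weight multiplies, i.e.\ equals the $t$-th power of the single-instance value (lower bound by a line/column argument, upper bound by a rank-one outer product). However, this raises the optimum ratio and the variable count $N=n^{t}$ to the same power, so the gap exponent $\log(\text{ratio})/\log N$ is invariant; tensoring therefore \emph{cannot manufacture} a polynomial gap out of the gapless hardness of $\AnotherSat_{<n}$. Hence a polynomial separation must be imported from a source that is already $n^{1-\varepsilon}$-hard.

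Accordingly, I would reduce, in a gap-preserving way, from a minimization problem known to be inapproximable within $n^{1-\varepsilon}$ (for instance Minimum Maximal Independent Set, or the chromatic number), building a $\Gamma$-formula $\Phi$ with model $\vec 0$ whose nonzero solutions correspond to the combinatorial objects of the source, so that the minimum nonzero weight of $\Phi$ equals the source optimum up to the ever-present solution $\vec 1$ of weight $N$, which is harmless because it is never the lightest witness below the threshold. The three co-clones are treated exactly as in Proposition~\ref{prop:AScomplhard}, through weak bases (Theorem~\ref{thm:weakbases}, Table~\ref{tab:weakbases}) and Proposition~\ref{prop:coclonesmincd}: I would realise the encoding first for $\iN$, then lift it to $\iN_2$ by the complementation gadget $R_{\iN_2}=\{m\cmpl m\mid m\in R_{\iN}\}$ and to $\iI$ by the restrict-the-first-coordinate gadget, whose only solution with the fresh variable set to $1$ is the constant $1$-assignment. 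Since $\XSOL$ is incompatible with existential quantification, every relation used must lie in $\cce\Gamma$, which is precisely what the weak-base machinery guarantees.

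The main obstacle is the gap preservation itself: the gadget must send a nonzero solution of weight $w$ to a solution of the source problem of size $\Theta(w)$ and conversely, while introducing no spurious light solutions that would collapse the separation. Once this is achieved, the $n^{1-\varepsilon}$ gap of the source survives as a ratio-$N^{1-\varepsilon}$ gap between the two regimes of $\OPT(\XSOL(\Phi,\vec 0))$, where $N$ is the number of variables of $\Phi$; consequently any polynomial-time $N^{1-\varepsilon}$-approximation for $\XSOL(\Gamma)$ would decide the source gap problem, forcing $\P=\NP$. I expect the genuinely hard part to be concentrated here, because the $n^{1-\varepsilon}$ hardness ultimately rests on the PCP theorem, whereas the constraint-theoretic tools available in this setting (the Schaefer and Juban dichotomies, weak bases) only deliver exact, gapless $\NP$-hardness; marrying the two while staying inside the rigid complementive, respectively both-valid, languages is the crux. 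The same construction, with the input model discarded and two solutions sought instead of one, yields the corresponding $\MSD$ statement of Theorem~\ref{thm:MSD}(iii).
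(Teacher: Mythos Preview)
Your diagnosis of the symmetric tensor is correct, but the conclusion you draw from it --- that no amplification from $\AnotherSat_{<n}(\Gamma)$ can manufacture a polynomial gap --- is wrong, and this is exactly where the paper's proof lives. The paper does not tensor symmetrically; it pads a \emph{single} variable. Given $(\varphi,m)$ on $n$ variables, pick one variable $x$, introduce $n^{k}-n$ fresh copies $x^{i}$ (with $k>1/\varepsilon$), and for every constraint containing $x$ add the corresponding constraint with $x$ replaced by each $x^{i}$; extend $m$ to $m'$ by $m'(x^{i})=m(x)$. The new instance has $N=n^{k}$ variables. In the no-case the only other solution is $\cmpl{m'}$, at distance $N=n^{k}$. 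In the yes-case there is a solution $m_{s}\notin\{m,\cmpl m\}$; since $\Gamma$ is complementive you may assume $m_{s}(x)=m(x)$, whence the lift $m_{s}'$ (setting all copies of $x$ to $m(x)$) satisfies $\varphi'$ with $\hd(m',m_{s}')<n$. Thus the gap ratio is $n^{k}/n=n^{k-1}=N^{1-1/k}$, and for $k>1/\varepsilon$ an $N^{1-\varepsilon}$-approximation would return a solution of distance at most $n\cdot(n^{k})^{1-\varepsilon}<n^{k}$, i.e.\ not the complement, deciding $\AnotherSat_{<n}(\Gamma)$. For $\cc\Gamma=\iI$ the complementivity step is unavailable, so one runs the padding once for each variable.

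The point you missed is that complementivity breaks the symmetry: the unwanted witness $\cmpl m$ disagrees with $m$ on \emph{every} variable, while the good witness can be taken to agree with $m$ on the chosen variable $x$. Padding $x$ therefore inflates only the bad witness's distance. Your proposed alternative --- importing a PCP-based $n^{1-\varepsilon}$ gap from chromatic number or MMIS and encoding it inside $\cce\Gamma$ for the weak bases of $\iN$, $\iN_{2}$, $\iI$ --- might be feasible, but you have not carried it out, and the gadgetry would be considerably more delicate than the three-line padding argument above. As it stands, your write-up identifies a difficulty, rules out one amplification scheme, overlooks the one that works, and then gestures at a harder route without executing it.
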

\begin{proof}
  Assume that there is a constant $\varepsilon>0$ with a
  polynomial-time $n^{1-\varepsilon}$-approximation algorithm for
  $\XSOL(\Gamma)$. We show how to use this algorithm to solve
  $\AnotherSatNC(\Gamma)$ in polynomial time.
  Proposition~\ref{prop:AScomplhard} completes the proof.

  Let $(\varphi, m)$ be an instance of $\AnotherSatNC(\Gamma)$
  with~$n$ variables. If $n=1$, then we reject the
  instance. Otherwise, we construct a new formula~$\varphi'$ and a new
  assignment~$m'$ as follows. Let~$k$ be the smallest integer greater
  than $1/\varepsilon$. Choose a variable~$x$ of~$\varphi$ and
  introduce $n^k-n$ new variables~$x^i$ for $i = 1, \ldots, n^k-n$.
  For every $i \in \set{1, \ldots, n^k-n}$ and every constraint
  $R(y_1, \ldots , y_\ell)$ in~$\varphi$, such that
  $x \in \set{y_1, \ldots, y_\ell}$, construct a new constraint
  $R(z_1^i, \ldots, z_\ell^i)$ by $z_j^i = x^i$ if $y_j = x$ and
  $z_j^i = y_j$ otherwise; add all the newly constructed constraints
  to~$\varphi$ in order to get~$\varphi'$.  Moreover, we extend~$m$ to
  a model of~$\varphi'$ by setting $m'(x^i)=m(x)$. Now run the
  $n^{1-\varepsilon}$-approximation algorithm for $\XSOL(\Gamma)$ on
  $(\varphi', m')$. If the answer is $\cmpl{m'}$ then reject,
  otherwise accept.

  We claim that the algorithm described above is a correct
  polynomial-time algorithm for the decision problem
  $\AnotherSatNC(\Gamma)$ when~$\Gamma$ is
  complementive. Polynomial runtime is clear. It remains to show its
  correctness. If the only solutions to~$\varphi$ are~$m$
  and~$\cmpl m$, then, as $n>1$, the only models of~$\varphi'$ are $m'$
  and $\cmpl{m'}$. Hence the approximation algorithm must
  answer $\cmpl{m'}$ and the output is correct. Now assume that there is
  a satisfying assignment~$m_s$ different from~$m$ and~$\cmpl m$. The
  relation~$[\varphi]$ is complementive, hence we may assume that
  $m_s(x)=m(x)$. It follows that~$\varphi'$ has a satisfying
  assignment~$m_s'$ for which $0<\hd(m_s', m')<n$ holds.  But then the
  approximation algorithm must find a satisfying assignment~$m''$
  for~$\varphi'$ with
  $\hd(m', m'') < n \cdot (n^k)^{1-\varepsilon} =
  n^{k(1-\varepsilon)+1}$. Since the inequality $k > 1/\varepsilon$
  holds, it follows that $\hd(m', m'')< n^k$. Consequently, $m''$ is
  not the complement of~$m'$ and the output of our algorithm is again
  correct.

  When~$\Gamma$ is not complementive but both $0$-valid and $1$-valid
  $(\cc \Gamma = \iI)$, we perform the expansion algorithm described
  above for each variable of the formula~$\varphi$ and reject if the
  result is the complement for each run. The runtime remains
  polynomial. If $[\varphi] = \set{m,\cmpl{m}}$, then indeed every run
  results in the corresponding $\cmpl{m'}$, and we correctly reject.
  Otherwise, we have a model
  $m_s\in [\varphi]\smallsetminus\set{m,\cmpl{m}}$, so there is a
  variable~$x$ of $\varphi$, where $m_s(x)\neq \cmpl{m}(x)$, i.e.\
  $m_s(x) = m(x)$. For this instance $(\varphi',m')$ the
  approximation algorithm does not return~$\cmpl{m'}$, wherefore we
  correctly accept.
\end{proof}

\subsubsection{MinDistance-Equivalent Cases}\label{sssec:XSOLaffine}

In this section we show that affine co-clones give rise to
problems equivalent to $\MinDist$.

\begin{lemma}\label{lem:affine-MinDist}
  For affine constraint languages $\Gamma$ $(\Gamma\subseteq
  \iL_2)$ we have $\XSOL(\Gamma)\aple\MinDist$.
\end{lemma}
\begin{proof}
  Let the formula~$\varphi$ and the satisfying assignment~$m$ be an
  instance of $\XSOL(\Gamma)$ over the variables $x_1, \ldots,
  x_n$. The input $\varphi$ can be written as $A \vec x = \vec b$,
  with~$m$ being a solution of this affine system. A tuple~$m'$ is a solution
  of $A\vec x = \vec b$ if and only if it can be written as
  $m' = m+m_0$ where~$m_0$ is a solution of $A \vec x = \vec 0$.
  The Hamming distance is invariant with respect to affine translations:
  namely we have $\hd(m',m) = \hd(m'+m'', m+m'')$ for any tuple $m''$, in
  particular, for $m'' = -m$ we obtain $\hd(m',m)=\hd(m'-m,\vec{0})$. Therefore
  $m'\neq m$ is a solution of $A \vec x = \vec b$ with minimal Hamming
  distance to~$m$ if and only if $m_0 = m'-m$ is a non-zero solution of the
  homogeneous system $A \vec x = \vec 0$ with minimum Hamming
  weight. Hence, the problem $\XSOL(\Gamma)$ for affine
  languages~$\Gamma$ is equivalent to computing the non-trivial solutions of
  homogeneous systems with minimal weight, which is exactly the
  $\MinDist$ problem.
\end{proof}

We need to express an affine sum of even number of variables by means
of the minimal weak base for each of the affine co-clones. In the
following lemma, the existentially quantified variables are uniquely
determined, therefore the existential quantifiers serve only to hide
superfluous variables and do not pose any problems as they were
mentioned before.

\begin{lemma}\label{lem:represent-sums}
For every $n\in\NN$, $n\geq 1$, the constraint
$x_1 \oplus x_2 \oplus\dotsm \oplus x_{2n} = 0$ can be equivalently
expressed by each of the following formulas:
\begin{enumerate}
\item\label{item:sumiL}
      $\exists y_0,\dotsc,y_n( y_0 = 0\land y_{n} = 0\land\
       \begin{alignedat}[t]{2}%
       &R_{\iL}(y_0,x_1,x_2,y_1) &\land\
       & R_{\iL}(y_1,x_3,x_4,y_2) \land \dotsm\land{}\\
       &R_{\iL}(y_{n-1},x_{2n-1},x_{2n},y_{n})),%
       \end{alignedat}$
\item $\exists y_0,\dotsc,y_{2n}(
       \begin{alignedat}[t]{2}%
       &R_{\iL_0}(y_0,x_1,y_1,y_0) &\land\
       &R_{\iL_0}(y_1,x_2,y_2,y_0) \land \dotsm\land\\
       &R_{\iL_0}(y_{2n-1},x_{2n},y_{2n},y_{2n})),%
       \end{alignedat}$
\item $\exists y_0,\dotsc,y_{2n}(
       \begin{alignedat}[t]{2}%
       &R_{\iL_1}(y_0,x_1,y_1,y_0) &\land\
       &R_{\iL_1}(y_1,x_2,y_2,y_0) \land \dotsm\land\\
       &R_{\iL_1}(y_{2n-1},x_{2n},y_{2n},y_{2n})),
       \end{alignedat}$
\item $\exists y_0,\dotsc,y_n,z_0,\dotsc,z_n,w_1,\dotsc,w_{2n}(
       y_0 = 0\land y_n = 0\land{}$\\
       \strut\hfill$\begin{alignedat}[t]{1}%
       &R_{\iL_3}(y_0,x_1,x_2,y_1,z_0,w_1,w_2,z_1) \land\
        R_{\iL_3}(y_1,x_3,x_4,y_2,z_1,w_3,w_4,z_2) \land \dotsm\land\\
       &R_{\iL_3}(y_{n-1},x_{2n-1},x_{2n},y_{n},
                  z_{n-1},w_{2n-1},w_{2n},z_{n})),
       \end{alignedat}$
\item\label{item:sumiL2}
      $\exists y_0,\dotsc,y_{2n},z_0,\dotsc,z_{2n},w_1,\dotsc,w_{2n}(%
       \begin{alignedat}[t]{1}
       &R_{\iL_2}(y_0,x_1,y_1,z_0,w_1,z_1,y_0,z_0) \land{}\\
       &R_{\iL_2}(y_1,x_2,y_2,z_1,w_2,z_2,y_0,z_0) \land \dotsm\land{}\\
       &R_{\iL_2}(y_{2n-1},x_{2n},y_{2n},z_{2n-1},w_{2n},z_{2n},
                 y_{2n},z_{2n})),%
       \end{alignedat}$
\end{enumerate}
where the number of existentially quantified variables is linearly
bounded in the length of the constraint. Note moreover that in each case
any model of $x_1 \oplus x_2 \oplus\dotsm \oplus x_{2n} = 0$ uniquely
determines the values of the existentially quantified variables.
\end{lemma}
\begin{proof}
  Write out the constraint relations following the existential
  quantifiers as (conjunctions of) equalities. From this uniqueness of
  valuations for the existentially quantified variables is easy to see,
  and likewise that any model of $\bigoplus_{i=1}^{2n} x_i = 0$ also
  satisfies each of the formulas~\ref{item:sumiL}. up
  to~\ref{item:sumiL2}. Adding up the equalities behind the existential
  quantifiers shows the converse direction.
\end{proof}

The following lemma shows that $\MinDist$ is $\AP$-equivalent to a
restricted version, containing only constraints generating the minimal
weak base, for each co-clone in the affine case.

\begin{lemma}\label{lem:MD-red-to-weak-base-and-zero}
For each co-clone $\mathcal{B}\in\set{\iL,\iL_0,\iL_1,\iL_2,\iL_3}$ we
have the reduction $\MinDist\aple \XSOL(\set{R_{\mathcal{B}},[\neg x]})$.
\end{lemma}
\begin{proof}
  Consider a co-clone $\mathcal{B}\in\set{\iL,\iL_0,\iL_1,\iL_2,\iL_3}$
  and a $\OptMinDist$-instance represented by a matrix
  $A\in\ZZ_2^{k\times l}$. If one of the columns of~$A$, say the $i$-th,
  is zero, then the $i$-th unit vector is an optimal solution to this
  instance with optimal value~$1$. Hence, we assume from now on that none
  of the columns equals a zero vector.
  \par
  Every row of $A$ expresses the fact that a sum
  of $n\leq l$ variables equals zero. If $n$ is odd, we extend
  this sum to one with $n+1$ summands, thereby introducing a new
  variable~$v$, which we existentially quantify and confine to zero using
  a unary $[\neg x]$-constraint. Then we replace the expanded sum by the
  existential formula from Lemma~\ref{lem:represent-sums} corresponding
  to the co-clone~$\mathcal{B}$ under consideration. This way we have
  introduced only linearly many new variables in~$l$ for every row, and
  for any feasible solution for the $\OptMinDist$-problem the values of
  the existential variables needed to encode it are uniquely determined.
  Thus, taking the conjunction over all these formulas we only
  have a linear growth in the size of the instance.
  \par
  Next, we show how to deal with the existential quantifiers:
  First we transform the expression to prenex normal form getting a
  formula~$\psi$ of the
  form~$\exists y_1,\dotsc,y_p(\varphi(y_1,\dotsc,y_p,x_1,\dotsc,x_l))$,
  which holds if and only if $A\vec{x} = \vec{0}$ for
  $\vec{x} = (x_1,\dotsc,x_l)$.
  We use the same blow-up construction regarding $x_1,\dotsc,x_l$ as in
  Proposition~\ref{prop:quantifiers} and Lemma~\ref{lem:Heindl} to
  make the influence of $y_1,\dotsc,y_p$ on the Hamming distance
  negligible.
  For this we put $J:=\set{1,\dotsc,t}$ and introduce
  new variables $x_i^j$ where $1\leq i\leq l$ and $j\in J$.
  If $u$ is among $x_1,\dotsc,x_l$, we define its blow-up set to be
  $B(u) = \Set{x_i^j}{j\in J}$, otherwise, for
  $u\in\set{y_1,\dotsc,y_p}$, we set $B(u) = \set{u}$. Now for each atom
  $R(u_1,\dotsc,u_q)$ of $\varphi$ we form the set of atoms
  $\Set{R(u_1',\dotsc,u_q')}{(u_1',\dotsc,u_q')\in\prod_{i=1}^q B(u_i)}$,
  and define the quantifier free formula $\varphi'$ to be the
  conjunction of all atoms in the union of these sets. Note that this
  construction takes time polynomial in the size of~$\psi$ and hence
  in the size of the input $\OptMinDist$-instance whenever~$t$ is
  polynomial in the input size because the atomic relations
  in~$\psi$ are at most octonary.

  If $s$ is an assignment of values to $\vec{x}$ making
  $A\vec{x} =\vec{0}$ true, we define $s'(x_i^j):= s(x_i)$ and extend
  this to a model of~$\varphi'$ assigning the uniquely determined values
  to $y_1,\dotsc,y_p$. Let~$m'$ be the model arising in this way from the
  zero assignment $m$. If~$s'$ is any model of~$\varphi'$, then for every
  $1\leq i\leq l$, all $j\in J$ and each atom $R(u_1,\dotsc,u_q)$
  of~$\varphi$, $s'$ satisfies, in particular, the conjunction
  $R(u_1',\dotsc,u_q')\land R(u_1'',\dotsc,u_q'')$ where for
  $u\in\set{u_1,\dotsc,u_q}$ we have
  $u' = u'' = u$ if $u \in\set{y_1,\dotsc,y_p}$,
  $u'=x_i^1$, $u'' = x_i^j$ if $u=x_i$, and
  $u'=u''=x_k^1$ if $u=x_k$ for some
  $k\in\set{1,\dotsc,l}\smallsetminus\set{i}$.
  Hence, the vectors $(s'(x_1^1),\dotsc,s'(x_l^1))$
  and
  $(s'(x_1^1),\dotsc,s'(x_{i-1}^1),s'(x_i^j),s'(x_{i+1}^1),\dotsc,
                                                          s'(x_l^1)))$
  both belong to the kernel of~$A$ and so does their difference, which
  is $s'(x_i^j) - s'(x_i^1)$ times the $i$-th unit vector. As the $i$-th
  column of~$A$ is non-zero, we must have $s'(x_i^j) = s'(x_i^1)$.
  This also implies that if $s'$ is zero on $x_1^1,\dotsc,x_n^1$, then
  it must be zero on all $x_i^j$ ($1\leq i\leq l$, $j\in J$) and thus it
  must coincide with $m'$. Therefore, every feasible solution to
  the~$\XSOL$-instance $(\varphi',m')$ yields a non-zero vector
  $(s'(x_1^1),\dotsc,s'(x_l^1))$ in the kernel of~$A$.
  \par
  Further, if $s'$ is an $r$-approximation to an optimal solution,
  i.e.\ we have $\hd(s',m')\leq r\OPT(\varphi',m')$, then, as
  $s'(x_i^1)=s'(x_i^j)$ holds for all $j\in J$ and all $1\leq i\leq l$,
  we obtain a solution to the $\MinDist$ problem with Hamming weight~$w$
  such that $t\cdot w \leq \hd(s',m')$. Also, any optimal solution to
  the $\MinDist$-instance can be extended to a not-necessarily optimal
  solution $s''$ of~$(\varphi',m')$, for which one can bound the distance
  to~$m'$ as follows:
  $\OPT(\varphi',m')\leq \hd(s'',m') \leq t\cdot\OPT(A) + p$. Combining
  these inequalities, we can infer
  $t\cdot w  \leq r\cdot t\cdot \OPT(A) + r\cdot p$, or
  $w\leq \OPT(A)\cdot(r+ r/\OPT(A)\cdot p/t)$. We noted above that $p$ is
  linearly bounded in the size of the input, thus choosing $t$ quadratic
  in the size of the input bounds $w$ by $\OPT(A)(r+ \Lo(1))$, whence we
  have an AP-reduction with $\alpha=1$.
\end{proof}

\begin{lemma}\label{lem:XSOL-reduce-constants}
  For constraint languages $\Gamma$, where one can decide the
  existence of (and then find) a feasible solution of\/~$\XSOL(\Gamma)$
  in polynomial time, we have
  $\XSOL(\Gamma) \aple
  \XSOL((\Gamma\smallsetminus\set{[x],[\neg x]})\cup\set{\eq})$.
\end{lemma}
\begin{proof}
If an instance $(\varphi,m)$ does not have feasible solutions, then it
does not have nearest other solutions either. So we map it to the
generic unsolvable instance~$\bot$. Consider now formulas $\varphi$ over
variables $x_1,\dotsc,x_n$ with models~$m$ where some feasible solution
$s_0\neq m$ exists (and has been computed).

We can assume $\varphi$ to be of the form
$\psi(x_1,\dotsc,x_n) \land \bigwedge_{i\in I_1} [x_i]
                      \land \bigwedge_{i\in I_0} [\neg x_i]$, where
$\psi$ is a $(\Gamma\smallsetminus\set{[x],[\neg x]})$-formula and
$I_1,I_0\subseteq \set{1,\dotsc,n}$. We transform $\varphi$ to
$\varphi':= \psi(x_1,\dotsc,x_n) \land \bigwedge_{i\in I_1} x_i \eq y_1
                                 \land \bigwedge_{i\in I_0} x_i \eq z_1
                                 \land \bigwedge_{i=1}^{1+n^2}
                                          (y_i \eq y_1 \land z_i \eq z_1)$
and extend models of~$\varphi$ to models of~$\varphi'$ in the natural
way. Conversely, if~$s'$ is a model of~$\varphi'$ and $s'(y_i) = 1$ and
$s'(z_i) = 0$ hold for all $1\leq i\leq 1+n^2$, %
then we can restrict it to a model of~$\varphi$. Other models
of~$\varphi'$ are not optimal and are mapped to $s_0$. It is not hard to
see that this provides an AP-reduction with $\alpha=1$.
%
\end{proof}

\begin{proposition}\label{prop:MinDist-hardness-XSOL}
  For every constraint language $\Gamma$ satisfying
  $\iL\subseteq \cc \Gamma \subseteq \iL_2$ we have
  $\MinDist \apeq \XSOL(\Gamma)$.
\end{proposition}
\begin{proof}
  Since we lack compatibility with existential quantification, we shall
  deal with each co-clone $\mathcal{B} = \cc\Gamma$ in the interval
  $\set{\iL,\iL_0,\iL_1,\iL_2,\iL_3}$ separately. First we perform the
  reduction from Lemma~\ref{lem:MD-red-to-weak-base-and-zero} to
  $\XSOL(\set{R_{\mathcal{B}}, [\neg x]})$. We need to find a reduction
  to $\XSOL(\set{R_{\mathcal{B}}})$ as this reduces to~$\XSOL(\Gamma)$
  by Proposition~\ref{prop:coclonesmincd} and
  Theorem~\ref{thm:weakbases}.

  This is simple in the case of $\iL_0$ and
  $\iL_2$ since
  $[\neg x] = \set{x\mid R_{\iL_0}(x,x,x,x)}\in\ccc{\set{R_{\iL_0}}}$
  (see Proposition~\ref{prop:coclonesmincd}) and
  $[\neg x] = \set{x\mid \exists y(R_{\iL_2}(x,x,x,y,y,y,x,y))}$, where
  the existential quantifier can be handled by an AP-reduction with
  $\alpha=1$ which drops the quantifier and extends every model by
  assigning $1$ to all previously existentially quantified variables.
  Thereby (optimal) distances between models do not change at all.
  \par

  In the remaining cases, we reduce
  $\XSOL(\set{R_{\mathcal{B}},[\neg x]})\aple
   \XSOL(\set{R_{\mathcal{B}},[x],[\neg x]})$ and the latter to
  $\XSOL(\set{R_{\mathcal{B}}, \eq})$ by
  Lemma~\ref{lem:XSOL-reduce-constants}, which now has to be reduced to
  $\XSOL(\set{R_{\mathcal{B}}})$. This is obvious for
  $\mathcal{B} = \iL$ where equality constraints $x\eq y$ can be expressed
  as $R_{\iL}(x,x,x,y) \in \ccc{\set{R_{\iL}}}$ (cf.\
  Proposition~\ref{prop:coclonesmincd}). For $\iL_1$ the same can be
  done using the formula $\exists z(R_{\iL_1}(x,y,z,z))$, where the
  existential quantifier can be removed by the same sort of simple
  AP-reduction with $\alpha=1$ as employed for $\iL_2$.
  Finally, for $\iL_3$ we want to express equality as
  $\exists u\exists v(R_{\iL_3}(x,x,x,y,u,u,u,v))$. Here, in an
  AP-reduction, the quantifiers cannot simply be disregarded, as the
  values of the existentially quantified variables are not constant for
  all models. They are uniquely determined by the values of $x$ and $y$
  for each particular model, though, which allows us to perform a
  similar blow-up construction as in the proof of
  Lemma~\ref{lem:MD-red-to-weak-base-and-zero}.
  \par

  In more detail, given a $\set{R_{\iL_3}, \eq}$-formula $\psi$ containing
  variables $x_1,\dotsc,x_l$, first note that each atomic
  $R_{\iL_3}$-constraint $R_{\iL_3}(x_1,\dotsc,x_8)$ can be represented
  as a linear system of equations, namely $\oplus_{i=1}^4 x_i = 0$ and
  $x_i \oplus x_{i+4} = 1$ for $1\leq i\leq 4$. Since equalities
  $x_i \eq x_j$ can be written as $x_i\oplus x_j = 0$, the formula $\psi$
  is equivalent to an expression of the form $A\vec{x} = \vec{b}$ where
  $\vec{x} = (x_1,\dotsc,x_l)$. Replacing each equality constraint by
  the existential formula above and bringing the result into prenex
  normal form, we get a formula
  $\exists y_1,\dotsc,y_p(\varphi(y_1,\dotsc,y_p,x_1,\dotsc,x_l))$,
  which is equivalent to $\psi$ and where $\varphi$ is a conjunctive
  $\set{R_{\iL_3}}$-formula. By construction any two models of $\varphi$
  that agree on $x_1,\dotsc,x_l$ must coincide. Thus, introducing
  variables $x_i^j$ for $1\leq i\leq l$ and $j\in J:=\set{1,\dotsc,t}$
  and defining $\varphi'$ in literally the same way as in the proof of
  Lemma~\ref{lem:MD-red-to-weak-base-and-zero}, any model~$s$
  of~$\psi$ yields a model~$s'$ of $\varphi'$ by putting
  $s'(x_i^j):=s(x_i)$ for $1\leq i\leq l$ and $j\in J$ and extending
  this with the unique values for $y_1,\dotsc,y_p$ satisfying
  $\varphi(y_1,\dotsc,y_p,x_1,\dotsc,x_l)$. In this way we obtain a
  model~$m'$ of~$\varphi'$ from a given solution~$m$ of~$\psi$.
  Besides, if $s'$ is any model of~$\varphi'$, then as
  in Lemma~\ref{lem:MD-red-to-weak-base-and-zero}, the vectors
  $(s'(x_1^1),\dotsc,s'(x_l^1))$ and
  $(s'(x_1^1),\dotsc,s'(x_{i-1}^1),s'(x_i^j),s'(x_{i+1}^1),\dotsc,
                                                          s'(x_l^1)))$
  both satisfy~$\psi$, and thus their difference is in the kernel of~$A$.
  Since the variable $x_i$ occurs in at least one of the atoms
  of~$\psi$, the $i$-th column of~$A$ is non-zero, implying that
  $s'(x_i^j) = s'(x_i^1)$ for $j\in J$ and all $1\leq i\leq l$. Thus,
  any  model $s'\neq m'$ of~$\varphi'$ gives a model~$s\neq m$
  of~$\psi$ by defining $s(x_i):= s'(x_i^1)$ for all $1\leq i\leq l$.

  The presented construction is an AP-reduction with $\alpha=1$, which
  can be proven completely analogously to the last paragraph of the
  proof of Lemma~\ref{lem:MD-red-to-weak-base-and-zero}, choosing~$t$
  quadratic in the size of~$\psi$.
\end{proof}

\subsubsection{MinHornDeletion-Equivalent Cases}

As in Proposition~\ref{prop:AScomplhard} the need to use conjunctive
closure instead of $\cc{\ }$ causes a case distinction in the proof of
the following result.

\begin{lemma}\label{lem:constructimpl}
  If\/~$\Gamma$ is exactly dual Horn
  $(\iV \subseteq \cc \Gamma \subseteq \iV_2)$ then one of the
  following relations is in $\ccc \Gamma$: $[x\to y]$,
  $[x\to y]\times \set{0}$, $[x\to y]\times \set{1}$, or
  $[x\to y]\times \set{01}$.
\end{lemma}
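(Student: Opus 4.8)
The plan is to combine the structure of Post's lattice with the weak bases listed in Table~\ref{tab:weakbases}. The interval between $\iE$ and $\iE_2$ in the lattice of co-clones consists of exactly the four Horn co-clones $\iE$, $\iE_0$, $\iE_1$, and $\iE_2$, so the hypothesis $\iE\subseteq\cc\Gamma\subseteq\iE_2$ forces $\cc\Gamma$ to be one of these four, and I would treat each case in turn. In every case $\Gamma$ is trivially a base of $\cc\Gamma$, so Theorem~\ref{thm:weakbases} guarantees that the weak base relation of $\cc\Gamma$ from Table~\ref{tab:weakbases} lies in $\cce\Gamma$. Since $\cce\Gamma$ is closed under variable identification, it then suffices to obtain the desired implication relation from the weak base relation by identifying coordinates only; crucially, no existential quantification is needed, which is exactly what makes the argument work inside the weak closure.

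The individual computations are short. For $\cc\Gamma=\iE$ the weak base relation is $R_\iE=[(x_1\equiv x_2\land x_3)\land(x_2\lor x_3\to x_4)]=\{0000,0001,0011,0101,1111\}$, and identifying the first three coordinates (forcing $x_1=x_2=x_3$) leaves exactly the pairs $\{00,01,11\}$ on the two remaining coordinates, i.e.\ the relation $[x\to y]$. For $\cc\Gamma=\iE_0$ the weak base is $R_\iE$ together with an extra coordinate fixed to $0$ by $\neg x_5$, and the same identification of the first three coordinates yields $[x\to y]\times\{0\}$. For $\cc\Gamma=\iE_1$ the weak base relation is $[(x_1\equiv x_2\land x_3)\land x_4]$, whose first three coordinates encode the ternary conjunction $x_1=x_2\land x_3$ and whose fourth coordinate is fixed to $1$; here I would identify only the first two coordinates, so that $x_1=x_2$ turns $x_1=x_2\land x_3$ into $x_1\le x_3$, producing $[x\to y]\times\{1\}$. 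Finally, for $\cc\Gamma=\iE_2$ the weak base relation is $[(x_1\equiv x_2\land x_3)\land\neg x_4\land x_5]$, with the fourth coordinate fixed to $0$ and the fifth to $1$; identifying the first two coordinates as before produces $[x\to y]\times\{01\}$.

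The main point to get right is that each identification yields precisely the advertised relation and nothing more constrained. In particular, for $\iE$ one must identify all three of the first coordinates, since identifying only the first two leaves the extra implicative conjunct active and does not give a clean implication; whereas for $\iE_1$ and $\iE_2$ identifying the first two coordinates of the ternary-conjunction relation is exactly what converts it into an implication while leaving the forced constant coordinates untouched. These four fixed-coordinate patterns---none, a single $0$, a single $1$, and the pair $01$---are precisely the four variants in the statement, and they reflect which of the constants $0$ and $1$ are forced in the respective co-clone. Because only conjunction and variable identification are used throughout, every relation so constructed lies in $\cce\Gamma$, which establishes the claim.
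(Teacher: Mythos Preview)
Your proposal is correct and follows essentially the same approach as the paper: a case split over the four co-clones $\iE$, $\iE_0$, $\iE_1$, $\iE_2$, invoking Theorem~\ref{thm:weakbases} to get the weak base relation into $\cce\Gamma$, and then using only variable identification (all three of $x_1,x_2,x_3$ for $\iE$ and $\iE_0$, just $x_1=x_2$ for $\iE_1$ and $\iE_2$) to obtain the respective implication relation. The paper's proof is terser but carries out precisely the same identifications.
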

\begin{proof}
  The co-clone $\cc\Gamma$ is equal to $\iV$, $\iV_0$, $\iV_1$, or
  $\iV_2$.  In the case $\cc \Gamma= \iV$ the relation $R_\iV$ belongs
  to~$\ccc \Gamma$ by Theorem~\ref{thm:weakbases}; because of
  $R_\iV(y, y, y, x) = [x \to y]$ we have
  $[x\to y] \in \ccc {R_\iV}\subseteq \ccc \Gamma$.  The case
  $\cc \Gamma = \iV_1$ leads to
  $[x\to y]\times \set{1} \in \ccc \Gamma$ in an analogous manner.
  The cases $\cc \Gamma = \iV_0$ and $\cc \Gamma = \iV_2$ lead to
  $[x\to y]\times \set{0}\in \ccc \Gamma$ and
  $[x\to y]\times \set{01} \in \ccc \Gamma$, respectively, by
  observing that
  $[S_1(y,y,x)]=[S_0(\lnot y,\lnot y, \lnot x,\lnot y)]
  =[(\lnot y\land\lnot y)\eq(\lnot y\land\lnot x)] = [x\to y]$.
\end{proof}

\begin{lemma}\label{lem:Horn_MinHD-hard}
  If\/~$\Gamma$ is exactly dual Horn
  $(\iV \subseteq \cc \Gamma \subseteq \iV_2)$, then $\XSOL(\Gamma)$
  is $\MinHD$-hard.
\end{lemma}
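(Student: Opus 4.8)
The goal is an approximation-preserving reduction from $\MinHD$ to $\XSOL(\Gamma)$; since the target language is Horn I will produce Horn gadgets and appeal to duality only to talk about the clauses of $\MinHD$ in whichever orientation is convenient. Complementing all variables turns each $\MinHD$-clause $x\lor y\lor\neg z$ into the Horn clause $\neg x\lor\neg y\lor z$ and exchanges the two unit types without changing how many conjuncts are left unsatisfied, so by Lemma~\ref{lem:dual} it suffices to reduce this Horn form of $\MinHD$. Because $\XSOL$ admits no existential quantification, everything must live in $\cce\Gamma$. By Theorem~\ref{thm:weakbases} and Table~\ref{tab:weakbases}, $\cce\Gamma$ contains the weak-base relation $R_\iE=[(x_1\equiv x_2\land x_3)\land(x_2\lor x_3\to x_4)]$ when $\cc\Gamma=\iE$, together with its fixed-coordinate variants for $\iE_0,\iE_1,\iE_2$. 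Identifying coordinates isolates the ternary conjunction $[x_1\equiv x_2\land x_3]$ (the fourth coordinate surviving only as a harmless auxiliary variable), and chaining it with the implications from Lemma~\ref{lem:constructimpl} lets me assemble any Horn clause $(a\land b)\to z$ and any binary implication inside $\cce\Gamma$. Proposition~\ref{prop:coclonesmincd} then transports the hardness established for this fixed gadget language up to $\Gamma$ and treats the four subcases $\iE,\iE_0,\iE_1,\iE_2$ uniformly, the fixed coordinates supplying the unary relations needed in the three non-trivial cases.

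For the core reduction I would equip each clause $C_i$ of the Horn $\MinHD$ instance with an error variable $e_i$ and rewrite $C_i$ so that it is enforced precisely when $e_i=1$: $(a\land b)\to z$ becomes $(a\land b\land e_i)\to z$, the unit $z$ becomes $e_i\to z$, and the unit $\neg z$ becomes $e_i\to\neg z$, each of which is Horn. The resulting formula is always satisfiable, and for a fixed assignment to the original variables the number of clauses one is forced to switch off (set $e_i=0$) is exactly the number it leaves unsatisfied; thus the minimum of $\#\{i:e_i=0\}$ over all solutions equals $\OPT$ of $\MinHD$. The plan is to read this minimum off an $\XSOL$ computation by replicating each $e_i$ into $N=\mathrm{poly}(n)$ copies forced equal, so that flipping an error dominates the Hamming distance, and by steering the instance with an auxiliary control gadget that makes one designated solution $m_0$ isolated, so that the nearest \emph{other} solution is driven into the region where the real clauses are active and where its distance to $m_0$ equals $N\cdot\#\{i:e_i=0\}$ up to lower-order terms. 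Any feasible $\XSOL$ solution is then decoded by restricting to the original variables and reporting its true count of unsatisfied clauses.

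The step I expect to be the main obstacle is exactly the interplay between the direction of $\XSOL$ and that of $\MinHD$, together with the prohibition on existential quantification. $\XSOL$ minimises distance \emph{towards} the given solution, whereas $\MinHD$ minimises unsatisfied clauses; to align them, $m_0$ must carry all errors in their ``enabled'' state and yet be a genuine, \emph{isolated} solution, even though the fully enabled formula (all clauses enforced) is in general unsatisfiable. Reconciling these demands is delicate because a Horn language can force a variable only in one direction conditionally on the control variable, so a naive single-instance gadget collapses either into the easy $\OptMinOnes(\text{Horn})$ direction or into a degenerate distance-one answer. This is precisely why the construction is most naturally phrased as an $\AP$-Turing-reduction, using several oracle calls on feasible isolated bases to simulate the infeasible ``all clauses active'' reference point that makes the companion problem $\NSOL(\Gamma)$ straightforwardly $\MinHD$-hard; and it is also why the ternary conjunction from $R_\iE$ is indispensable---restricting to implications would leave a bijunctive language, for which $\XSOL\in\PO$ by Proposition~\ref{prop:XSOL-iD2}. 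Calibrating $N$ so that the additive control-gadget offset and the original-variable contribution are $o(N\cdot\OPT)$ finally yields the claimed approximation-preserving reduction.
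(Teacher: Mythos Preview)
Your proposal identifies the right toolkit (weak bases, Lemma~\ref{lem:constructimpl}, Proposition~\ref{prop:coclonesmincd}) but leaves the central step unresolved, and you essentially say so yourself: you need an isolated base solution $m_0$ in which all error variables are enabled, yet the fully enabled formula is in general unsatisfiable, so no such $m_0$ exists. Your suggested workaround---phrasing the construction as an $\AP$-Turing-reduction with several oracle calls---does not establish the lemma as stated, which requires a many-one $\AP$-reduction; the Turing-style argument is what the paper uses only for the \emph{upper} bound in the subsequent corollary, not for hardness. The final sentence about ``calibrating $N$'' does not explain how the contradiction between isolation and infeasibility is removed.

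The paper avoids this obstacle entirely by not reducing from $\MinHD$ directly. Instead it reduces from $\OptMinOnes(\Gamma\cup\{[x]\})$, which is already known to be $\MinHD$-hard; the source instance is therefore a \emph{satisfiable} Horn formula, and the task is to encode minimum Hamming weight, not minimum number of violated clauses. The key structural observation is that in a Horn CNF the only clauses that are not $0$-valid are unit positive literals. Hence one global control variable $x$ suffices: replace every unit clause $y$ by $x\to y$, and add $v\to x$ for every original variable $v$. The resulting formula is $0$-valid, so $m_0=\vec 0$ is a genuine solution; any \emph{other} solution sets some $v$ to $1$, which forces $x=1$ and hence reactivates all the original unit clauses. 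Thus the non-zero solutions are exactly the solutions of the original formula with $x$ appended, and $\OPT_{\XSOL}(\varphi',\vec 0)=\OPT_{\OptMinOnes}(\varphi)+1$, giving a clean $\AP$-reduction with $\alpha=2$. Your per-clause error variables, replication, and ternary-conjunction gadgets are not needed; only the binary implication supplied by Lemma~\ref{lem:constructimpl} is used.
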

\begin{proof}
  There are four cases to consider, namely
  $\cc{\Gamma}\in\set{\iV,\iV_0,\iV_1,\iV_2}$. For simplicity we only
  present the situation where $\cc{\Gamma} = \iV_1$; the
  case $\cc{\Gamma}=\iV_2$ is very similar, and the other possibilities
  are even less complicated. At the end we shall give a few hints how to
  adapt the proof in these cases.
  \par
  The basic structure of the proof is follows: we choose a suitable weak
  base of~$\iV_1$ consisting of an irredundant relation~$R_1$, and
  identify a relation~$H_1\in\ccc{\set{R_1}}$ which allows us to encode
  a sufficiently complicated variant of the $\OptMinOnes$-problem into
  $\XSOL(\set{H_1})$. Thus by Theorem~\ref{thm:weakbases} and
  Lemma~\ref{lem:constructimpl} we have
  $H_1\in\ccc{\set{R_1}}\subseteq\ccc{\Gamma}$ and
  $[x\to y]\times\set{1} \in\ccc{\Gamma}$, wherefore
  Proposition~\ref{prop:coclonesminhd} implies
  $\XSOL(\Gamma')\aple \XSOL(\Gamma)$ where
  $\Gamma'=\set{H_1,[x\to y]\times\set{1}}$.
  According to~\cite[Theorem~2.14(4)]{KhannaSTW-01}, $\MinHD$ is
  equivalent to $\OptMinOnes(\Delta)$ for constraint languages~$\Delta$
  being dual Horn, not $0$-valid and not implicative hitting set
  bounded$+$ with any finite bound, that is, if
  $\cc{\Delta}\in\set{\iV_1,\iV_2}$. The key point of the construction
  is to choose~$R_1$ and~$H_1$ in such a way that we can find a
  relation~$G_1$ satisfying $\iV_1\subseteq\cc{\set{G_1}}\subseteq\iV_2$
  and $((G_1\times\set{1})\cup\set{\vec 0})\times\set{1}=H_1$. The
  latter property will allow us to prove an $\AP$-reduction
  $\MinHD\apeq\OptMinOnes(\set{G_1})\aple\XSOL(\Gamma')$,
  completing the chain.
  \par

  We first check that $R_1 = \GammaF{\fV_1}{\graphic[4]}$ satisfies
  $\cc{\set{R_1}} = \iV_1$: namely, by construction, this relation is
  preserved by the disjunction and by the constant operation with
  value~$1$, i.e., $\cc{R_1}\subseteq \iV_1$. This inclusion cannot be
  proper, since $\vec{0}\notin R_1$ ($\cc{R_1}\not\subseteq\iI_0$)
  and $x \lor (y \land z) \notin R_1$
  while $x = (e_1\circ\beta)\lor (e_4\circ\beta)$,
        $y = (e_1\circ\beta)\lor (e_2\circ\beta)$ and
        $z = (e_1\circ\beta)\lor (e_3\circ\beta)$
  belong to $\GammaF{\fV_1}{\graphic[4]}$ (cf.\ before
  Theorem~\ref{thm:weakbases-from-graphics} for the notation), i.e.\ the
  generating function $(x,y,z)\mapsto x\lor (y\land z)$ of the clone
  $\fS_{00}$~\cite[Figure~2, p.~8]{CreignouV-08} fails to be a
  polymorphism of $R_1$.
  For later we note that when~$\beta$ is chosen such that the
  coordinates of~$\graphic[4]$ are ordered lexicographically (and we are
  going to assume this from now on), then this failure can already be
  observed within the first seven coordinates of~$R_1$.
  Now according to Theorem~\ref{thm:weakbases-from-graphics}, the
  sedenary 
  relation $R_1:=\GammaF{\fV_1}{\graphic[4]}$ is a weak base
  relation for~$\iV_1$ without duplicate coordinates, and a brief moment
  of inspection shows that none of them is fictitious either. Therefore,
  $R_1$ is an irredundant weak base relation for $\iV_1$.
  We define $H_1$ to be
  $\Set{(x_0,\dotsc,x_8)}{(x_0,\dotsc,x_7,x_8,\dotsc,x_8)\in R_1}$,
  then clearly $H_1 \in\ccc{\set{R_1}}$.
  Now we put $G_1 := G_1'\smallsetminus\set{\vec{0}}$ where
  $G_1' := \Set{(x_0,\dotsc,x_6)}{(x_0,\dotsc,x_8)\in H_1}$,
  and one quickly verifies that
  $((G_1\times\set{1})\cup\set{\vec 0})\times\set{1}=H_1$.
  Since $G_1'\in \cc{H_1}\subseteq \cc{R_1} = \iV_1$ and removing the
  bottom-element $\vec{0}$ of a non-trivial join-semilattice with
  top-element still yields a join-semilattice with top-element, we have
  $G_1\in\iV_1$. With the analogous counterexample as for the
  relation~$R_1$ above, we can show that
  $(x,y,z)\mapsto x\lor(y\land z)$
  is not a polymorphism of~$G_1$ (because the non-membership is
  witnessed among the first seven coordinates).
  Thus, $\cc{\set{G_1}} = \iV_1$; in particular $G_1$, and any relation
  conjunctively definable from it, is not $0$-valid.
  \par

  For the reduction let now
  $\varphi(\vec x) = G_1(\vec{x_1}) \land\cdots\land G_1(\vec{x_k})$
  be an instance of $\OptMinOnes(\set{G_1})$.
  We construct a corresponding $\Gamma'$-formula~$\varphi'$ as follows.
  \begin{align*}
    \varphi''(\vec x,y,z) &:= H_1(\vec{x_1},y,z) \land \cdots \land
                              H_1(\vec{x_k},y,z) \\
    \varphi'''(\vec x,  \vec{x^{(2)}}, \cdots, \vec{x^{(\ell)}},z) &:=
    \bigwedge_{i=1}^\ell \left(
       (x_i \toequals{z} x_i^{(2)}) \land
       \bigwedge_{j=2}^{\ell-1} (x_i^{(j)} \toequals{z} x_i^{(j+1)}) \land
                                (x_i^\ell \toequals{z} x_i) \right)\\
    \varphi'(\vec x, \vec{x^{(2)}}, \cdots, \vec{x^{(\ell)}},y,z) &:=
     \varphi''(\vec x,y,z) \land \varphi'''(\vec x, \vec{x^{(2)}}, \cdots,\vec{x^{(\ell)}},z)
  \end{align*}
  where $\ell = \card{\vec x}$ is the number of variables of~$\varphi$,
  $y$ and~$z$ are new global variables, and where we have written
  $(u\toequals{w}v)$ to denote $([x\to y]\times\set{1})(u,v,w)$.
  Let $m_0$ be the assignment to the $\ell^2+2$ variables of~$\varphi'$
  given by $m_0(z) = 1$ and $m_0(x)=0$ elsewhere. It is clear that
  $(\varphi',m_0)$ is an instance of $\XSOL(\Gamma')$, since~$m_0$
  satisfies~$\varphi'$. The formula~$\varphi'''$ only multiplies each
  variable~$x$ from~$\varphi$ $\ell$-times and forces
  $x \eq x^{(2)} \eq \cdots \eq x^{(\ell)}$, which is just a
  technicality for establishing an $\AP$-reduction. The main idea of
  this proof is the correspondence between the solutions of~$\varphi$
  and~$\varphi''$.

  For each solution~$s$ of $\varphi(\vec x)$ there exists a
  solution~$s'$ of $\varphi''(\vec x, y)$ with $s'(y)=1$ (and
  $s'(z)=1$). Each solution~$s'$ of~$\varphi''$ has always $s'(z)=1$ and
  either $s'(y)=0$ or $s'(y)=1$. Because every variable
  from~$\vec{x}$ is part of one of the $\vec{x_i}$, the assignment~$m_0$
  restricted to $(\vec{x}, y,z)$ is the only solution~$s'$
  of~$\varphi''$ satisfying $s'(y)=0$. If otherwise $s'(y)$ equals~$1$,
  then~$s'$ restricted to the variables~$\vec x$ satisfies
  $\varphi(\vec x)$, following the correspondence between the
  relations~$G_1$ and~$H_1$.

  For~$r\in[1,\infty)$ let~$s'$ be an $r$-approximate solution of the
  $\XSOL(\Gamma')$-instance $(\varphi',m_0)$.
  Let $s := s'\Restriction_{\vec x}$ be
  the restriction of~$s'$ to the variables of~$\varphi$. Since
  $s'\neq m_0$, by what we showed before, $s'(y)=1$ and $s$~is a
  solution of $\varphi(\vec x)$. We have
  $\OPT(\varphi', m_0) \geq 2$ and $\OPT(\varphi) \geq 1$, since
  solutions of the $\XSOL(\Gamma')$-instance $(\varphi',m_0)$ must be
  different from~$m_0$, whereby~$y$ is forced to have value~$1$,
  and $[\varphi]\in\ccc{\set{G_1}}$ is not $0$-valid.
  Moreover,
  $\hw(s) = \hd(\vec 0, s)$, $\hd(s',m_0) = \ell \hw(s)+1$,
  $\OPT(\varphi',m_0) = \ell \OPT(\varphi) + 1$, and
  $\hd(s',m_0)\leq r\OPT(\varphi',m_0)$.
  From this and $\OPT(\varphi)\geq 1$ it follows that
  \begin{align*}
  \ell\hw(s) < \ell\hw(s)+ 1 = \hd(s',m_0)&\leq r\OPT(\varphi',m_0)
   = r\ell\OPT(\varphi) + r\\
   &\leq r\ell\OPT(\varphi) + r\OPT(\varphi)\\
   &\leq r\ell\OPT(\varphi) + r\OPT(\varphi) + (r-1)\ell\OPT(\varphi)\\
   &= (2r-1+r/\ell)\ell\OPT(\varphi)
    = (1+2(r-1) +r/\ell)\ell\OPT(\varphi).
  \end{align*}
  Hence~$s$ is an $(1+\alpha(r-1)+\Lo(1))$-approximate solution of
  the instance~$\varphi$ of $\OptMinOnes(\set{G_1})$ where $\alpha=2$.
  \par

  In the case when $\cc{\Gamma} = \iV_2$, the proof goes through with
  minor changes:
  $R_2 = \GammaF{\fV_2}{\graphic[4]} = R_1\smallsetminus\set{\vec{1}}$,
  so we define~$H_2$ and~$G_2$ like~$H_1$ and~$G_1$ just using~$R_2$
  and~$H_2$ in place of~$R_1$ and~$H_1$.
  Then we have $H_2 = H_1\smallsetminus\set{\vec{1}}$,
  $G_2 = G_1\smallsetminus\set{\vec{1}}$ and
  $\cc{\set{G_2}}= \iV_2$. Moreover, for the reduction we
  shall need an additional global variable~$w$ for~$\varphi'''$ (and
  $\varphi'$) since the encoding of the implication from
  Lemma~\ref{lem:constructimpl} requires it (and forces it to zero in
  every model).
  \par
  For $\cc{\Gamma}= \iV_0$ we can use
  $R_0=\GammaF{\fV_0}{\graphic[4]} = R_2\cup\set{\vec{0}}$; then,
  letting
  $H_0=\Set{(x_0,\dotsc,x_7)}{(x_0,\dotsc,x_7,x_7,\dotsc,x_7)\in R_0}
      \in\ccc{\set{R_0}}$,
  we have $H_0 = (G_2\times\set{1})\cup\set{\vec{0}}$.
  On a side note, we observe that $H_0 = \GammaF{\fV_0}{\graphic[3]}$,
  which we can use alternatively without detouring via~$R_0$. Given the
  relationship between $G_2$ and $H_0$, we do not need the global
  variable~$z$ in the definition of~$\varphi''$, but we need to have it
  in the definition of $\varphi'''$, where the relation given by
  Lemma~\ref{lem:constructimpl} necessitates atoms of the form
  $(u\xrightarrow{z=0}v)$ forcing $z$ to zero in every model.
  \par
  The case where $\cc{\Gamma} = \iV$ is similar to the previous:
  we can use the irredundant weak base relation
  $H = \GammaF{\fV}{\graphic[3]} = H_0 \cup\set{\vec{1}}
     = (G_1\times\set{1})\cup\set{\vec{0}}$. Except for~$y$ in the
  definition of~$\varphi''$ no additional global variables are needed in
  the definition of~$\varphi'$, because $[u\to v]$ atoms are directly
  available for~$\varphi'''$.
\end{proof}

\begin{corollary}\label{cor:XSOL-Horn-dual_Horn}
  If\/~$\Gamma$
  is exactly Horn $(\iE \subseteq \cc \Gamma \subseteq
  \iE_2)$ or exactly dual-Horn $(\iV \subseteq \cc \Gamma \subseteq
  \iV_2)$ then $\XSOL(\Gamma)$ is $\MinHD$-complete under
  $\AP$-Turing-reductions.
\end{corollary}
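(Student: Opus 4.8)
The hardness direction is essentially already in hand. Lemma~\ref{lem:Horn_MinHD-hard} shows that $\XSOL(\Gamma)$ is $\MinHD$-hard whenever $\Gamma$ is proper Horn, i.e.\ $\MinHD\aple\XSOL(\Gamma)$, and since ordinary $\AP$-reductions are a special case of $\AP$-Turing-reductions this hardness also holds under $\AP$-Turing-reductions. For proper dual Horn $\Gamma$ the dual set $\dual(\Gamma)$ is proper Horn, so $\MinHD\aple\XSOL(\dual(\Gamma))\apeq\XSOL(\Gamma)$ by Lemma~\ref{lem:dual}. Thus the only remaining task is the matching upper bound, namely that $\XSOL(\Gamma)$ $\AP$-Turing-reduces to $\MinHD$; again by Lemma~\ref{lem:dual} it suffices to treat the Horn case $\iE\subseteq\cc\Gamma\subseteq\iE_2$.

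For the upper bound my plan is to reduce the search for a nearest \emph{other} solution to $n$ instances of the nearest-solution problem, the point being that a single $\NSOL$ call on $(\varphi,m)$ would simply return $m$ itself, which is why several oracle calls and hence an $\AP$-Turing-reduction are needed. Given an $\XSOL(\Gamma)$-instance $(\varphi,m)$ over variables $x_1,\dots,x_n$, I would form for each coordinate~$i$ the formula $\varphi_i$ obtained from $\varphi$ by adding the unit clause fixing $x_i$ to $1-m[i]$ (so $\varphi_i$ uses relations from $\Gamma\cup\{[x],[\neg x]\}$), and regard $(\varphi_i,m)$ as an $\NSOL(\Gamma\cup\{[x],[\neg x]\})$-instance. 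A solution of $\varphi_i$ is exactly a solution of $\varphi$ differing from $m$ in coordinate~$i$; consequently every other solution of $\varphi$ is feasible for some $\varphi_i$ and every solution of any $\varphi_i$ is an other solution of $\varphi$. This yields the key identity $\OPT(\varphi,m)=\min_i\OPT(\varphi_i,m)$, where the left-hand side is the $\XSOL$-optimum and the minimum ranges over the indices (nonempty in the feasible case) for which $\varphi_i$ is satisfiable.

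To turn each subproblem into an oracle call I would invoke the chain $\NSOL(\Gamma\cup\{[x],[\neg x]\})\aple\NSOL(\Gamma)\aple\MinHD$: the first step is Lemma~\ref{lem:unary}, applicable because feasibility of $\NSOL(\Gamma)$ is polynomial-time decidable for Horn $\Gamma$ by Schaefer's theorem, and the second step follows from Proposition~\ref{prop:iV_2_to_MinOnes} (stated for dual Horn and transported to Horn through Lemma~\ref{lem:dual}) together with Proposition~\ref{prop:kstw-minhd}, which identifies $\OptWMinOnes(\set{x\lor y\lor\neg z,x\lor y})$ with $\MinHD$. The overall algorithm makes these $n$ oracle calls, discards the infeasible subproblems, and returns the feasible answer of smallest Hamming distance to $m$.

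Finally I would check the approximation guarantee. If every oracle answer is $r$-approximate, then in particular the call for the index $j$ attaining the minimum in the identity above returns a solution of distance at most $r\cdot\OPT(\varphi_j,m)=r\cdot\OPT(\varphi,m)$; since the algorithm outputs the best answer it has seen, its output is also within $r\cdot\OPT(\varphi,m)$. Hence the construction is an $\AP$-Turing-reduction with $\alpha=1$ (the chained single-call reductions contribute only their own constants and the usual $o(1)$ terms), and together with the hardness direction it gives $\XSOL(\Gamma)\apeq\MinHD$ under $\AP$-Turing-reductions. I expect the main obstacle to be not any individual step but the clean bookkeeping of the $\AP$-Turing-reduction: one must verify that taking the coordinatewise minimum genuinely preserves the ratio and that fixing one variable never renders a strictly closer other solution unreachable, which is precisely what the identity $\OPT(\varphi,m)=\min_i\OPT(\varphi_i,m)$ secures.
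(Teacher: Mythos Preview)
Your proposal is correct and follows essentially the same route as the paper: hardness via Lemma~\ref{lem:Horn_MinHD-hard} plus duality, and the upper bound via the $n$ oracle calls to $\NSOL(\Gamma\cup\{[x],[\neg x]\})$ on the instances $(\varphi_i,m)$, combined with the identity $\OPT(\varphi,m)=\min_i\OPT(\varphi_i,m)$ and the reduction of $\NSOL$ to $\MinHD$ through Propositions~\ref{prop:iV_2_to_MinOnes} and~\ref{prop:kstw-minhd}. The only minor and harmless difference is your detour through Lemma~\ref{lem:unary}; since $\Gamma\cup\{[x],[\neg x]\}$ is still Horn, the paper applies Proposition~\ref{prop:iV_2_to_MinOnes} (via duality) to it directly.
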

\begin{proof}
  Hardness follows from Lemma~\ref{lem:Horn_MinHD-hard} and duality.
  Moreover, $\XSOL(\Gamma)$ can be $\AP$-Turing-reduced to
  $\NSOL(\Gamma\cup \set{[x], [\neg x]})$ as follows: Given a
  $\Gamma$-formula~$\varphi$ and a model~$m$, we construct for
  every variable~$x$ of~$\varphi$ a formula $\varphi_x= \varphi \land
  (x\eq \cmpl{m}(x))$. Then for every~$x$ where
  $[\varphi_x]\neq\emptyset$ we run an oracle algorithm for
  $\NSOL(\Gamma\cup \set{[x], [\neg x]})$ on $(\varphi_x, m)$ and output
  one result of these oracle calls that is closest to~$m$.

  We claim that this algorithm provides indeed an $\AP$-Turing reduction.
  To see this observe first that the instance $(\varphi,m)$ has
  feasible solutions if and only if this holds for $(\varphi_x,m)$ and at
  least one variable~$x$.
  Moreover, we have
  $\OPT(\varphi,m) = \min_{x,[\varphi_x]\neq\emptyset}(\OPT(\varphi_x, m))$. Let $A(\varphi,m)$
  be the answer of the algorithm on $(\varphi, m)$ and let
  $B(\varphi_x, m)$ be the answers to the oracle calls. Consider a
  variable~$x^*$ such that $\OPT(\varphi,m) =
  \min_{x,[\varphi_x]\neq\emptyset}(\OPT(\varphi_x,m)) = \OPT(\varphi_{x^*},m)$, and assume that
  $B(\varphi_{x^*}, m)$ is an $r$-approximate solution of
  $(\varphi_{x^*},m)$. Then we get
  \begin{displaymath}
    \frac{\hd(m, A(\varphi, m))}{\OPT(\varphi, m)} =
    \frac{\min_{y,[\varphi_y]\neq\emptyset}(\hd(m, B(\varphi_y, m))}{\OPT(\varphi_{x^*}, m)} \leq
    \frac{\hd(m, B(\varphi_{x^*}, m))}{\OPT(\varphi_{x^*}, m)} \leq r .
  \end{displaymath}
  Thus the algorithm is indeed an $\AP$-Turing-reduction from
  $\XSOL(\Gamma)$ to $\NSOL(\Gamma\cup \set{[x], [\neg x]})$. Note that
  for $\Gamma\subseteq \iV_2$ the problem
  $\NSOL(\Gamma\cup \set{[x], [\neg x]})$ reduces to $\MinHD$ according
  to Propositions~\ref{prop:iV_2_to_MinOnes} and~\ref{prop:kstw-minhd}. Duality completes the
  proof.
\end{proof}

\section{Finding the Minimal Distance Between Solutions}
\label{sec:proofsMSD}

In this section we study the optimization problem
$\MinSolDistance$. We first consider the polynomial-time cases and
then the cases of higher complexity.

\subsection{Polynomial-Time Cases}
\label{ssec:proofsMSD-PO}

We show that for bijunctive constraints the problem $\MinSolDistance$
can be solved in polynomial time. After stating the result we present
an algorithm and analyze its complexity and correctness.

\begin{proposition}\label{prop:MSD-iD2}
  If\/ $\Gamma$ is a bijunctive constraint language
  $(\Gamma\subseteq\iD_2)$ then $\MSD(\Gamma)$ is in $\PO$.
\end{proposition}

By Proposition~\ref{prop:BakerPixley}, an algorithm for bijunctive
constraint languages~$\Gamma$ can be restricted to at most binary
clauses. Alternatively, one can use the plain base
$\set{[x],[\neg x],[x\lor y], [\neg x\lor y], [\neg x\lor\neg y]}$
of~$\iD_2$ exhibited in~\cite{CreignouKZ-08} to see that every relation
in $\Gamma$ can be written as a conjunction of disjunctions of two not
necessarily distinct literals. We shall treat these
disjunctions as one- or two-element sets of literals when extending the
algorithm of Aspvall, Plass, and Tarjan~\cite{AspvallPT-79} to compute
the minimum distance between distinct models of a bijunctive
constraint formula.

\algorithm{\textsc{bijunctive} $\MSD$}%
{An $\iD_2$-formula~$\varphi$ given as a collection of one- or
  two-element sets of literals (bijunctive clauses).}%
{``$\leq 1$ model'' or the minimal Hamming distance of any two
  distinct models of~$\varphi$.}%
{\mbox{}\\
  Let $\Var$ be the set of variables occurring in~$\varphi$.\\
  Let $\Lit := \set{v,\neg v \mid v\in \Var}$ be the set of literals.\\
  Let~$\bar{u}$ denote the literal complementary to~$u\in\Lit$.
  \medskip\par\noindent
  Construct the relation $R:=\Set{(\bar u,v),(\bar v,u)}{\set{u,v}\in\varphi\land u\neq v}\cup\Set{(\bar u,u)}{\set{u}\in\varphi}$.\\
    Let $\leq$ be the reflexive and transitive
    closure of~$R$, i.e.\ the least preorder on~$\Lit$ extending~$R$.\\
    Construct the sets
    \begin{alignat*}{2}
      \Var_0 &:= \SET{v\in\Var}{$v\leq x\leq\neg x\phantom{\mbox{}\leq v}$\quad or\quad $\phantom{\neg v\leq\mbox{}}\neg x\leq x\leq\neg v$\quad for some $x\in\Var$}\\
      \Var_1 &:= \SET{v\in\Var}{$\phantom{v\leq\mbox{}}x\leq\neg x\leq v$\quad or\quad $\neg v\leq\neg x\leq x\phantom{\mbox{}\leq\neg v}$\quad for some $x\in\Var$}
    \end{alignat*}
    If $\Var_0\cap\Var_1 \neq\emptyset$ or $\Var_0\cup\Var_1 = \Var$ holds, then return ``$\leq 1$ model''.
  \medskip\par\noindent
    Let $\Lit' := \Lit\smallsetminus\Set{v,\neg v}{v \in \Var_0\cup \Var_1}$.\\
    Let ${\sim}:= \set{(u,v)\in\Lit' \times \Lit' \mid u\leq v \land v\leq u}$.\\
    Return $\min\Set{\card L}{L \in \Lit'/{\sim}}$ as minimal Hamming distance.
  \medskip\par\noindent
  }%

\paragraph{Complexity:}
The size of $\Lit$ is linear in the number of variables, the reflexive
closure can be computed in time linear in~$\card{\Lit}$, the
transitive closure in time cubic in~$\card{\Lit}$,
see~\cite{WarshallTransClosure1962}. The equivalence relation~$\sim$
is the intersection of~$\leq$ restricted to $\Lit'$ and its inverse (quadratic in
$\card{\Lit'}$); from it we can obtain the partition~$\Lit'/{\sim}$ in linear
time in~$\card{\Lit'}\leq \card{\Lit}$, including the cardinalities of
the equivalence classes and their minimization.
Similarly, the remaining sets from the proof ($\Var_0$, $\Var_1$,
their intersection and union, and thus also $\Lit'$) can be computed
with polynomial time complexity.

\paragraph{Correctness:}
The pairs in~$R$ arise from interpreting the atomic constraints
in~$\varphi$ as implications. By transitivity of implication, the
inequality $u\leq v$ for literals $u,v$ means that every model~$m$
of~$\varphi$ satisfies the implication $u\to v$ or, equivalently,
$m(u)\leq m(v)$.  In particular, $x\leq\neg x$ implies $m(x)=0$ and
$\neg x\leq x$ implies $m(x)=1$. Therefore $\Var_0$ can be seen to be
the set of variables that have to be false in every model
of~$\varphi$, and $\Var_1$ the set of variables true in every model.

If $\Var_0 \cap \Var_1 \neq \emptyset$ holds then the
formula~$\varphi$ is inconsistent and has no solution.  If
$\Var_0 \cup \Var_1 = \Var$ holds, then every variable has a unique
fixed value, hence $\varphi$ has only one solution.
Otherwise the formula is consistent and not all variables are
fixed, hence there are at least two models.

To determine the minimal number of variables, whose values can be
flipped between any two models of~$\varphi$, it suffices to consider
the literals without fixed value, $\Lit'$. If we have $u\leq v$ and
$v\leq u$, the literals are equivalent, $u\sim v$, and must have the
same value in every model. This means that any two distinct models have to
differ on all literals of at least one equivalence class in
$\Lit'/{\sim}$. Therefore, the return value of the algorithm is a
lower bound for the minimal distance.

To prove that the return value can indeed be attained, we exhibit two
models $m_0 \neq m_1$ of~$\varphi$ having the least cardinality of any
equivalence class in~$\Lit'/{\sim}$ as their Hamming distance.
Let $L \in \Lit'/{\sim}$ be a class of minimum cardinality. Define
$m_0(u):= 0$ and $m_1(u):= 1$ for all literals $u \in L$. We
extend this by setting $m_0(w) := m_1(w):= 0$ for all $w\in\Lit$ such that
$w\leq u$ for some $u\in L$, and by $m_0(w):= m_1(w):= 1$ for all
$w\in\Lit$ such that $u\leq w$ for some $u\in L$.  For variables
$v\in\Var$ satisfying $v\leq \neg v$ or $\neg v\leq v$ we have
$v\in \Var_0\cup\Var_1$, and thus $v\notin \Lit'$; in other words, for
$[v]_{\sim} \in \Lit'/{\sim}$ the classes $[v]_{\sim}$ and
$[\neg v]_{\sim}$ are incomparable. Thus, so far, we have not defined
$m_0$ and $m_1$ on a variable $v\in \Var$ and on its negation $\neg v$
at the same time. Of course, fixing a value for a negative literal
$\neg v$ implicitly means that we bind the assignment
for $v\in \Var$ to the opposite value.

It remains to fix the value of literals in $\Lit'$ that are neither related to
the literals in~$L$ nor have fixed values in all models.  Suppose
$(\bar u, v) \in R$ is a constraint such that the value of at least
one literal has not yet been defined. There are three cases: either
both literals have not yet received a value, or $\bar u$ is undefined
and $v$ has been assigned the value~$1$ (either as a fixed value in
all models or because of being greater than a literal in~$L$ or because of
being lesser than a complement of a literal in~$L$), or $v$
is undefined and $\bar u$ has been assigned the value~$0$ (either as a
fixed value in all models or because of being smaller than a literal
in~$L$ or greater than a complement of a literal in~$L$). All three cases can be handled by defining both models, $m_0$
and $m_1$, on the remaining variables identically: starting with a
minimal literal $u$, where $m_0$ and $m_1$ are not yet defined, we assign
$m_0(u) := m_1(\cmpl{u}) := 0$ and $m_1(u):= m_0(\cmpl{u}):=1$.
This way none of the constraints is violated, and $m_0$ and $m_1$ are
distinct only on variables corresponding to literals in $L$. Iterate this
procedure until all variables (and their complements) have been assigned
values. If $\Lit''\subseteq\Lit'$ denotes the literals remaining after
propagating the values of~$m_0$ and~$m_1$ on~$L$, then the presented
method can be implemented by partitioning $\Lit''$ into two classes $L_0$
and $L_1$ such that $L_0\cap\set{u,\cmpl{u}}$ is a singleton for every
$u\in \Lit''$ and each weakly connected component of the quasiordered set
$(\Lit'',\leq)$ is either a subset of $L_0$ or $L_1$. Then set $m_0$ and
$m_1$ to $k$ on the literals belonging to $L_k$ for $k\in\set{0,1}$.

By construction, $m_0$ differs from~$m_1$ only in the variables
corresponding to the literals in $L$, so their Hamming distance is
$\card{L}$ as desired. Moreover, both assignments respect the order
constraints in $(\Lit,\leq)$. As these faithfully reflect all original
atomic constraints, $m_0$ and $m_1$ are indeed models of~$\varphi$.

\begin{proposition}\label{prop:MSD-iE2-iV2}
  If\/~$\Gamma$ is a Horn $(\Gamma\subseteq\iE_2)$ or a dual Horn
  $(\Gamma\subseteq\iV_2)$ constraint language then $\MSD(\Gamma)$ is
  in $\PO$.
\end{proposition}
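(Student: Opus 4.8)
The plan is to exploit the fact that the solution set of a Horn formula is closed under componentwise conjunction (binary \emph{and} is a polymorphism of every relation in $\iE_2$), so that the models of $\varphi$ form a meet-semilattice under the pointwise order. By Lemma~\ref{lem:dual} the dual-Horn case $\Gamma\subseteq\iV_2$ reduces to the Horn case, since complementing every relation turns a solution set closed under \emph{or} into one closed under \emph{and}; hence I would treat only $\Gamma\subseteq\iE_2$. First I would dispose of the degenerate inputs: Horn satisfiability is polynomial by Schaefer's theorem~\cite{Schaefer-78}, and $\AnotherSat(\Gamma)$ is polynomial for $\Gamma\subseteq\iE_2$, so in polynomial time I can detect whether $\varphi$ has fewer than two models (unsatisfiable, or a unique model), in which case there is no feasible pair to return.

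Second, I would shrink the search space using the semilattice structure. For any two models $m_1,m_2$, writing $m=m_1\wedge m_2$ for their pointwise minimum, every coordinate on which they disagree is either one where $m_1>m_2$ or one where $m_1<m_2$, so $\hd(m_1,m_2)=\hd(m_1,m)+\hd(m,m_2)$ and $m$ is again a model. Hence if $m_1\neq m_2$ then at least one of the pairs $(m,m_1)$, $(m,m_2)$ consists of \emph{distinct comparable} models whose distance is at most $\hd(m_1,m_2)$, so an optimal pair may always be taken comparable, say $m_1<m_2$. Then $\hd(m_1,m_2)=\hw(m_2)-\hw(m_1)=\card D$, where $D$ is the nonempty set of coordinates that are $0$ in $m_1$ and $1$ in $m_2$. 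The task thus becomes: find a nonempty set $D$ and a model $m_1$ disjoint from $D$ such that setting the coordinates of $D$ to $1$ again yields a model, minimizing $\card D$.

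Third, I would attack this by forward chaining. Writing $\varphi$ as a conjunction of definite clauses $B\to y$ and purely negative goal clauses, the crucial point is that upward propagation through the definite clauses is \emph{confluent}: for any partial assignment there is a unique least set of coordinates that must additionally be set to $1$. Fixing a base model $m_1$ and a variable $x_i$ with $m_1(x_i)=0$, the closest model above $m_1$ with $x_i=1$ is therefore obtained by setting $x_i:=1$ and forward chaining; it is a model iff the resulting set still satisfies every goal clause, and mere feasibility of flipping $x_i$ up is \emph{monotone downward} in the base (a larger base only makes goal clauses harder), so it can be pre-tested on the minimum model. For each $x_i$ I would compute the associated flip set and return the minimum of $\card D$ over all $x_i$; the downward direction is already subsumed by the comparability reduction.

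The step I expect to be the main obstacle is the minimization over the base model $m_1$: enlarging $m_1$ can \emph{absorb} variables that forward chaining would otherwise force into $D$ (e.g.\ with clauses $x\to y$ and $z\to y$ the minimum is $\hd=1$, witnessed by $010,110$, whereas chaining from the all-zero model would spuriously put $y$ into $D$). The obstruction is precisely that Horn models are \emph{not} closed under \emph{or}, so there is no unique maximal base to chain from. To resolve it I would compute, for each $x_i$, the flip set as a least fixpoint: start with $D=\{x_i\}$, and for each definite clause fired with the help of $D$ whose head $y$ lies outside $D$, decide by a Horn-satisfiability call whether $y$ can be \emph{absorbed}, i.e.\ whether some model sets all of $D$ to $0$ while setting $y$ together with the previously absorbed variables to $1$; if not, add $y$ to $D$ and iterate. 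Since each round enlarges $D$, at most $n$ rounds occur and each is polynomial, so the procedure runs in polynomial time; the remaining work, and the delicate part of the correctness proof, is to show that this fixpoint really yields a minimum-cardinality feasible flip set.
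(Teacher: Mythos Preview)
Your reduction to comparable pairs via meet-closure is correct and is exactly how the paper begins. The gap is where you yourself place it: you do not prove that the absorption fixpoint computes a minimum feasible flip set, and this is not a routine verification. Two concrete problems. First, your absorption test only checks that the \emph{lower} model $m_1$ can accommodate $y=1$ together with the previously absorbed variables while keeping $D$ at~$0$; it never checks that the \emph{upper} assignment $m_2=m_1\cup D$ is a model, and for Horn formulas nothing guarantees this (models are closed under meet, not join), so the procedure may terminate with an infeasible $D$. Second, greedily committing to absorb $y$ can, via a goal clause such as $\neg y\lor\neg y'$, block absorption of a later $y'$, so the outcome is prima facie order-dependent; you give no confluence argument and no reason why greedy absorption is optimal. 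Your aside that feasibility of flipping $x_i$ up is ``monotone downward in the base'' is also unclear: shrinking $m_1$ changes which definite clauses fire and need not make the flip easier.

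The paper sidesteps all of this by a syntactic saturation. It closes the clause set under unit propagation and under the hyper-resolution rule
\[
\frac{(\neg x\lor y_1)\ \cdots\ (\neg x\lor y_k)\qquad(\neg y_1\lor\cdots\lor\neg y_k\lor z)}{(\neg x\lor z)}
\]
(and its negative-only variant), thereby deriving \emph{all} binary implications $\neg x\lor z$ entailed by the formula. After this, mutual binary implication is an equivalence relation on the surviving variables; a variable $z$ is declared \emph{dependent} if some clause $\neg y_1\lor\cdots\lor\neg y_k\lor z$ together with the reverse implications $\neg z\lor y_i$ (all present after saturation) forces $z$ to move with a non-equivalent $y_i$. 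The answer is then simply the minimum cardinality of an equivalence class containing no dependent variable, and the paper explicitly constructs the witnessing pair $m_1<m_2$ from such a class. The insight you were reaching for---that once all entailed binary implications are materialised, the minimum flip set is always a single equivalence class and no dynamic base-adjustment is needed---is precisely what replaces your absorption machinery.
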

We only discuss the Horn case ($\Gamma\subseteq\iE_2$), dual Horn
($\Gamma\subseteq\iV_2$) being symmetric.

\algorithm{\textsc{Horn} $\MSD$}%
{A Horn formula~$\varphi$ given as a set of Horn clauses (cf.\ the plain
base of $\iE_2$ given in~\cite{CreignouKZ-08}).}%
{``$\leq 1$ model'' or the minimal Hamming distance of any two
  distinct models of~$\varphi$.}%
{\medskip\par\noindent
  For each variable~$x$ in~$\varphi$, add the clause $(\neg x \lor x)$.\\
  Let $\U := \emptyset$.\\
  Apply the following rules to~$\varphi$ until no more clauses and
  literals can be removed and no new clauses can be added.

  \emph{Unit resolution and unit subsumption:} Let $\bar{u}$ denote
  the complement of a literal~$u$. If the clause set contains a unit
  clause $u$, remove all clauses containing the literal $u$ and remove
  all literals $\bar u$ from the remaining clauses.  Add $u$ to the
  set $\U$.

  \emph{Hyper-resolution with binary implications:} Resolve all
  negative literals of a clause simultaneously with binary
  implications possessing identical premises.
  \begin{displaymath}
    \infer{(\neg x \lor z)}%
    {(\neg x\lor y_1) \cdots (\neg x\lor y_k)&(\neg y_1\lor \dots
      \lor \neg y_k\lor z)}
    \qquad\quad
    \infer{(\neg x)}%
    {(\neg x\lor y_1) \cdots (\neg x\lor y_k)&(\neg
      y_1\lor\dots\lor\neg y_k)}
  \end{displaymath}
  Let $\D$ be the set of clauses after applying the two rules exhaustively.\\
  If $\D$ contains the empty clause, return ``$\leq 1$ model''.\\
  If $\U$ contains a literal for every variable in~$\varphi$, return ``$\leq 1$ model''.\\
  If $\varphi$ contains a variable that appears neither in~$\D$ nor
  in~$\U$, return $1$ as minimal Hamming distance.
  \medskip\par\noindent
  Otherwise, let $\Var$ be the set of variables occurring in~$\D$, and let
  ${\sim}\subseteq \Var^2$ be the relation defined by $x\sim y$ if
  $\set{\neg x\lor y,\neg y\lor x}\subseteq\D$.  Note that $\sim$ is
  an equivalence, since the tautological clauses ensure reflexivity
  and resolution of implications computes their transitive closure.
  We say that a variable~$z$ depends on variables $y_1,\dots,y_k$, if
  $\D$ contains the clauses $ \neg y_1\lor \dots\lor \neg y_k\lor z$,
  $\neg z\lor y_1$, \ldots, $\neg z\lor y_k$ and $z\not\sim y_i$ holds
  for all $i=1,\dots,k$.\\
  Return $\min\SET{\card X}{$X\in\Veq$, $X$ does not contain dependent
  variables}$ as minimal Hamming distance.
  \medskip\par\noindent
}%

\paragraph{Complexity:}
The run-time of the algorithm is polynomial in the number of clauses
in~$\varphi$: Unit resolution/subsumption
can be applied at most once for each variable, and hyper-resolution
has to be applied at most once for each variable~$x$ and each clause
$\neg y_1\lor \dots \lor \neg y_k \lor z$ and
$\neg y_1\lor \dots\lor \neg y_k$.

\paragraph{Correctness:}
Adding resolvents and removing subsumed clauses maintains logical
equivalence, therefore $\D\cup\U$ is logically equivalent
to~$\varphi$, i.e., both clause sets have the same models. We note
that the sets of variables of~$\U$ and of~$\D$ are disjoint.
The unit clauses in~$\U$ are always (uniquely) satisfiable, thus~$\D$
and~$\varphi$ are equisatisfiable.
Therefore, if~$\D$ contains the empty clause, $\varphi$ is also
unsatisfiable; otherwise~$\D$ is satisfiable, e.g., by assigning~$0$ to
every $x\in\Var$. In this case, if~$\U$ contains a literal for every
variable of~$\varphi$, the unit clauses in~$\U$ define a unique model
of~$\varphi$.

Otherwise $\varphi$ has at least two models $m_1\neq m_2$. In the simplest
case some variable~$x$ in~$\varphi$ has been left unconstrained by
$\D$ and $\U$; in this case we can pick any model of $\D$ and $\U$ and
extend it to two different models of~$\varphi$ with Hamming
distance~$1$ by setting $m_1(x)=0$ and $m_2(x)=1$ and setting $m_1(y) =
m_2(y) = 0$ for any other variable~$y$ outside $\D$ and $\U$.  For the remaining
situations it is sufficient to consider the models of~$\D$ only, as
each model~$m$ of~$\D$ uniquely extends to a model of~$\varphi$ by
defining $m(x)=1$ for $(x)\in\U$ and $m(x)=0$ for $(\neg x)\in\U$;
hence the minimal Hamming distances of the models of $\varphi$
and~$\D$ will be the same.

We are thus looking for models $m_1,m_2$ of~$\D$ such that the size of
the difference set $\Delta(m_1, m_2)=\Set{x}{m_1(x)\neq m_2(x)}$ is
minimal. In fact, since the models of Horn formulas are closed under
minimum, we may assume $m_1<m_2$, i.e., we have $m_1(x)=0$ and
$m_2(x)=1$ for all variables $x\in \Delta(m_1, m_2)$.  Indeed, given
two models $m_2$ and $m_2'$ of~$\D$ where neither $m_2\leq m_2'$ nor
$m_2'\leq m_2$, $m_1= m_2 \land m_2'$ is also a model, and it is distinct
from $m_2$.  Since $\hd(m_1,m_2) \leq \hd(m_2,m_2')$, the minimal
Hamming distance will occur between models~$m_1$ and~$m_2$ satisfying
$m_1 < m_2$.

\medskip

\noindent%
Note the following facts regarding the equivalence relation~$\sim$ and
dependent variables.
\begin{asparaitem}
\item If $x\sim y$ then the two variables must have the same value in
  every model of~$\D$ in order to satisfy the implications
  $\neg x\lor y$ and $\neg y\lor x$. This means that for all
  models~$m$ of~$\D$ and all $X\in\Veq$, we have either $m(x)=0$ for
  all $x\in X$ or $m(x)=1$ for all $x\in X$.
\item The dependence of variables is acyclic: If, for some $l\geq 2$, for
  every $1\leq i<l$ we have that $z_i$ depends on variables including
  one, say $y_i$, which is equivalent to~$z_{i+1}$, and $z_l = z_1$, then
  there is a cycle of binary implications between the variables and thus
  $z_i\sim y_i \sim z_j$ for all $i,j$, contradicting the definition of
  dependence.
\item If a variable~$z$ depending on $y_1$, \dots, $y_k$ belongs to a
  difference set~$\Delta(m_1, m_2)$, then at least one of the $y_i$s
  also has to belong to~$\Delta(m_1, m_2)$: $m_2(z)=1$ implies
  $m_2(y_j)=1$ for all $j=1,\dots,k$ (because of the clauses
  $\neg z\lor y_i$), and $m_1(z)=0$ implies $m_1(y_i)=0$ for at least
  one~$i$ (because of the clause
  $\neg y_1\lor \dots\lor \neg y_k\lor z$). Therefore
  $\Delta(m_1, m_2)$ is the union of at least two sets in~$\Veq$,
  namely the equivalence class of~$z$ and the one of~$y_i$.
\item If some $z_1\in \Delta(m_1,m_2)$ is equivalent to a variable
      $z_1'$ that depends on some other variables, then we have a
      variable $z_2$ among them, which also belongs to
      $\Delta(m_1,m_2)$. If the equivalence class of $z_2$ still contains
      a variable $z_2'$ depending on other variables, we can iterate this
      procedure. In this way we obtain a sequence $z_1\sim z_1',
      z_2\sim z_2', z_3\sim z_3', \dotsc$ where $z_i'$ depends on
      variables including $z_{i+1}$, which is equivalent to $z_{i+1}'$.
      Because there are only finitely many variables and because of
      acyclicity, after a linear number of steps we must reach a variable
      $z_n\in\Delta(m_1,m_2)$ such that its equivalence class (being a
      subset of the difference set) does not contain any dependent
      variables.
\end{asparaitem}

Hence the difference between any two models cannot be smaller than the
cardinality of the smallest set in~$\Veq$ without dependent
variables. It remains to show that we can indeed find two such models.

Let~$X$ be a set in~$\Veq$ which has minimal cardinality among the
sets without dependent variables, and let $m_0,m_1$ be interpretations
defined as follows:
\begin{inparaenum}[(1)]
\item $m_0(y) = 0$ and $m_1(y) = 1$ if $y\in X$;
\item $m_0(y) = 1$ and $m_1(y) = 1$ if $y\notin X$ and
  $(\neg x\lor y)\in\D$ for some $x\in X$;
\item $m_0(y) = 0$ and $m_1(y) = 0$ otherwise.
\end{inparaenum}
We have to show that~$m_0$ and~$m_1$ satisfy all clauses in~$\D$. Let
$m$ be any of these models. $\D$ contains two types of clauses.
\begin{asparaenum}[Type~1:]
\item Horn clauses with a positive literal
  $\neg y_1\lor \dots\lor \neg y_k\lor z$. If $m(y_i)=0$ for any~$i$,
  we are done. So suppose $m(y_i)=1$ for all $i=1,\dots,k$; we have to
  show $m(z)=1$. The condition $m(y_i)=1$ means that either $y_i\in X$
  (for $m=m_1$) or that there is a clause $(\neg x_i \lor y_i)\in\D$
  for some $x_i\in X$.  We distinguish the two cases $z\in X$ and
  $z\notin X$.

  Let $z\in X$. If $z\sim y_i$ for any $i$, we are done for we have
  $m(z)=m(y_i)=1$. So suppose $z\not\sim y_i$ for all~$i$. As the
  elements in~$X$, in particular $z$ and the $x_i$s, are equivalent
  and the binary clauses are closed under resolution, $\D$ contains
  the clause $\neg z\lor y_i$ for all~$i$. But this would mean that
  $z$ is a variable depending on the $y_i$s, contradicting the
  assumption $z\in X$.

  Let $z\notin X$, and let $x\in X$.  As the elements in~$X$ are
  equivalent and the binary clauses are closed under resolution, $\D$
  contains $\neg x\lor y_i$ for all~$i$. Closure under
  hyper-resolution with the clause
  $\neg y_1 \lor \dots\lor \neg y_k\lor z$ means that $\D$ also
  contains $\neg x\lor z$, whence $m(z)=1$.

\item Horn clauses with only negative literals
  $\neg y_1\lor \dots\lor \neg y_k$. If $m(y_i)=0$ for any~$i$, we are
  done. It remains to show that the assumption $m(y_i)=1$ for all
  $i=1,\dots,k$ leads to a contradiction.  The condition $m(y_i)=1$
  means that either $y_i\in X$ (for $m=m_1$) or that there is a clause
  $(\neg x_i\lor y_i)\in\D$ for some $x_i\in X$.
  Let $x$ be some particular element of~$X$.  Since the elements
  in~$X$ are equivalent and the binary clauses are closed under
  resolution, $\D$ contains the clause $\neg x\lor y_i$ for
  all~$i$. But then a hyper-resolution step with the clause
  $\neg y_1 \lor \dots\lor \neg y_k$ would yield the unit clause
  $\neg x$, which by construction does not occur in~$\D$. Therefore at
  least one $y_i$ is neither in~$X$ nor part of a clause
  $\neg x\lor y_i$ with $x\in X$, i.e., $m(y_i)=0$.
\end{asparaenum}

\subsection{Hard Cases}
\subsubsection{Two Solution Satisfiability}

In this section we study the feasibility problem of $\MSD(\Gamma)$
which is, given a $\Gamma$-formula~$\varphi$, to decide if~$\varphi$
has two distinct solutions.

\decproblem{$\TSSAT(\Gamma)$}%
{A conjunctive formula~$\varphi$ over the relations from the constraint
  language~$\Gamma$.}%
{Are there two satisfying assignments~$m \neq m'$ of~$\varphi$?}

A priori it is not clear that the tractability of $\TSSAT$ is fully
characterized by co-clones. The problem is that the implementation of
relations of some language~$\Gamma$ by another language~$\Gamma'$
might not be parsimonious, that is, in the implementation one solution
to a constraint might be blown up into several ones in the
implementation. Fortunately we can still determine the tractability
frontier for $\TSSAT$ by combining the corresponding results for
$\SAT$ and $\AnotherSat$.

\begin{lemma}\label{lem:sattotwo}
  Let $\Gamma$ be a constraint language for which $\SAT(\Gamma)$
  is $\NP$-hard. Then $\TSSAT(\Gamma)$ is $\NP$-hard.
\end{lemma}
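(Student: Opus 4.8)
The plan is to reduce a known $\NP$-hard decision problem to $\TSSAT(\Gamma)$, but one has to be careful: the natural candidate would be $\AnotherSat(\Gamma)$, yet $\SAT(\Gamma)$ being $\NP$-hard does \emph{not} in general imply that $\AnotherSat(\Gamma)$ is $\NP$-hard. I would therefore combine Schaefer's and Juban's dichotomies and split into two cases according to the complexity of $\AnotherSat(\Gamma)$. Throughout I use the fact that, by Schaefer~\cite{Schaefer-78}, the hypothesis that $\SAT(\Gamma)$ is $\NP$-hard forces $\Gamma$ to be neither $0$-valid, $1$-valid, Horn, dual Horn, bijunctive, nor affine, i.e.\ $\Gamma$ lies in none of $\iI_0,\iI_1,\iE_2,\iV_2,\iD_2,\iL_2$.

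First I would suppose that $\AnotherSat(\Gamma)$ is $\NP$-hard. The key observation is that both problems decide exactly the same property of the input formula: on input $(\varphi,m)$ with $m$ a model, $\AnotherSat(\Gamma)$ answers ``yes'' if and only if $\varphi$ has at least two distinct models, and on input $\varphi$, $\TSSAT(\Gamma)$ answers ``yes'' under precisely the same condition. Hence the map $(\varphi,m)\mapsto\varphi$ that simply discards the witness~$m$ is a polynomial-time many-one reduction $\AnotherSat(\Gamma)\mle\TSSAT(\Gamma)$, so $\TSSAT(\Gamma)$ is $\NP$-hard in this case.

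It remains to treat the case where $\AnotherSat(\Gamma)$ is polynomial-time decidable, which I expect to be the main obstacle. By Juban's dichotomy~\cite{Juban-99} this means $\Gamma$ is contained in one of $\iI,\iN_2,\iE_2,\iV_2,\iD_2,\iL_2$. The four Schaefer co-clones $\iE_2,\iV_2,\iD_2,\iL_2$ are excluded since they make $\SAT$ tractable, and $\iI\subseteq\iI_0$ is excluded for the same reason; hence necessarily $\Gamma\subseteq\iN_2$, i.e.\ $\Gamma$ is complementive. For such $\Gamma$ the solution set of every $\Gamma$-formula is closed under complementation, so whenever $\varphi$ has a model~$m$ it also has the distinct model~$\cmpl m$ (distinct because $\varphi$ has at least one variable). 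Thus $\varphi$ is satisfiable if and only if it has two distinct models, and the identity map is a polynomial-time reduction $\SAT(\Gamma)\mle\TSSAT(\Gamma)$; since $\SAT(\Gamma)$ is $\NP$-hard by assumption, so is $\TSSAT(\Gamma)$. The only genuinely delicate point is recognising that the complementive case escapes the $\AnotherSat$ reduction --- there $\AnotherSat$ is easy, precisely because complementivity hands us a second model for free --- and that this same complementivity is exactly what makes satisfiability collapse onto the existence of two solutions.
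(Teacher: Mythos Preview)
Your argument is correct, but it is considerably more elaborate than what the paper does. The paper gives a single, direct reduction from $\SAT(\Gamma)$ that needs neither Juban's dichotomy nor any case split: since $\SAT(\Gamma)$ is $\NP$-hard, $\Gamma$ must contain some relation~$R$ with $|R|\ge 2$ (otherwise every atom forces unique values or is unsatisfiable, and $\SAT(\Gamma)$ is trivially in~$\P$); then $\varphi\mapsto\varphi\land R(y_1,\ldots,y_\ell)$ with fresh variables~$y_i$ is a many-one reduction, because the conjunct on fresh variables multiplies the number of models by~$|R|\ge 2$. So $\varphi$ is satisfiable iff $\varphi'$ has at least two models.

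Your route instead leans on the full Juban classification to isolate the complementive case and handle it separately. This is sound, and the complementivity observation is nice, but it buys nothing here: the paper's padding trick already covers \emph{all}~$\Gamma$ with $\SAT(\Gamma)$ hard in one stroke, without ever looking at $\AnotherSat$. In fact the paper proves your Case~1 separately as the next lemma (from $\AnotherSat$-hardness), so the two lemmas together cover the full dichotomy cleanly; your proof essentially merges both lemmas into one via a case distinction.
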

\begin{proof}
  Since $\SAT(\Gamma)$ is $\NP$-hard, there must be a relation~$R$
  in~$\Gamma$ having more than one tuple, because every relation
  containing only one tuple is at the same time Horn, dual Horn,
  bijunctive, and affine. Given an instance $\varphi$ for
  $\SAT(\Gamma)$, construct $\varphi'$ as
  $\varphi \land R(y_1, \ldots, y_\ell)$ where $\ell$ is the arity of
  $R$ and $y_1, \ldots, y_\ell$ are new variables not appearing in
  $\varphi$. Obviously, $\varphi$ has a solution if and only if
  $\varphi'$ has at least two solutions.  Hence, we have proved
  $\SAT(\Gamma)\mle \TSSAT(\Gamma)$.
\end{proof}

\begin{lemma}\label{lem:asattotwo}
  Let $\Gamma$ be a constraint language for which
  $\AnotherSat(\Gamma)$ is $\NP$-hard. Then the problem
  $\TSSAT(\Gamma)$ is $\NP$-hard.
\end{lemma}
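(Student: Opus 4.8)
The plan is to exhibit a direct polynomial-time many-one reduction $\AnotherSat(\Gamma)\mle\TSSAT(\Gamma)$; since $\AnotherSat(\Gamma)$ is assumed $\NP$-hard, this immediately yields $\NP$-hardness of $\TSSAT(\Gamma)$. The key observation is that the extra information carried by an $\AnotherSat$ instance---a designated solution~$m$---is irrelevant to the existence question it poses: asking whether~$\varphi$ has a satisfying assignment \emph{other than} a known solution~$m$ amounts to asking whether~$\varphi$ has \emph{two distinct} solutions at all, which is precisely $\TSSAT$.

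Concretely, I would map an instance $(\varphi,m)$ of $\AnotherSat(\Gamma)$---where~$m$ satisfies~$\varphi$ by definition of the problem---simply to the instance~$\varphi$ of $\TSSAT(\Gamma)$, forgetting~$m$. This is trivially computable in polynomial time and leaves the relations occurring in~$\varphi$ untouched, so none of the implementation or parsimony issues discussed before Lemma~\ref{lem:sattotwo} arise here.

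For correctness I would argue both directions. If $(\varphi,m)$ is a yes-instance, there is a satisfying assignment $m'\neq m$; then~$m$ and~$m'$ are two distinct solutions of~$\varphi$, so~$\varphi$ is a yes-instance of $\TSSAT(\Gamma)$. Conversely, if~$\varphi$ has two distinct satisfying assignments $m_1\neq m_2$, then at most one of them can coincide with the single designated assignment~$m$, so at least one is a solution different from~$m$, witnessing that $(\varphi,m)$ is a yes-instance of $\AnotherSat(\Gamma)$.

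There is essentially no obstacle in this argument; the whole content is the remark that dropping~$m$ already gives a valid reduction. The only point demanding a moment of care is the backward direction, where one must note that two \emph{distinct} solutions cannot both equal~$m$, which is exactly what forces the existence of a solution distinct from~$m$. Together with Lemma~\ref{lem:sattotwo} (covering the $\SAT$-hard case), this lemma will allow the hardness frontier of $\TSSAT$ to be read off from the classifications of $\SAT$ and $\AnotherSat$.
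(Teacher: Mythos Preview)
Your proposal is correct and matches the paper's own proof essentially verbatim: drop the given model~$m$ and observe that, since~$m$ already satisfies~$\varphi$, the formula has a solution different from~$m$ if and only if it has at least two solutions.
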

\begin{proof}
  Let a $\Gamma$-formula~$\varphi$ and a satisfying assignment~$m$ be
  an instance of $\AnotherSat(\Gamma)$. Then~$\varphi$ has a solution
  other than~$m$ if and only if it has two distinct solutions.
\end{proof}

\begin{lemma}\label{lem:twoeasy}
  Let $\Gamma$ be a constraint language for which both $\SAT(\Gamma)$
  and $\AnotherSat(\Gamma)$ are in~$\P$. Then $\TSSAT$ is also
  in~$\P$.
\end{lemma}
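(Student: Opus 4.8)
The plan is to decide $\TSSAT(\Gamma)$ by chaining the two available polynomial-time procedures in the obvious pipeline. Given an input $\Gamma$-formula~$\varphi$, I would first invoke the $\SAT(\Gamma)$ algorithm to test whether $\varphi$ is satisfiable at all. If $\varphi$ is unsatisfiable it has no model, hence certainly not two distinct ones, and the algorithm rejects. If $\varphi$ is satisfiable, the idea is to produce one explicit model~$m$ of~$\varphi$ and then call the $\AnotherSat(\Gamma)$ algorithm on the instance $(\varphi, m)$: by definition $\varphi$ has two distinct satisfying assignments if and only if it is satisfiable and possesses a satisfying assignment different from~$m$. The algorithm therefore accepts precisely when $\AnotherSat(\Gamma)$ accepts on $(\varphi, m)$ and rejects otherwise, the rejecting branch corresponding to~$m$ being the unique model of~$\varphi$. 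Correctness is immediate from this equivalence, and the running time is polynomial, since the procedure consists of one $\SAT(\Gamma)$ call, the construction of~$m$, and one $\AnotherSat(\Gamma)$ call.

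The one step that genuinely needs justification is the production of the explicit model~$m$ to feed to the $\AnotherSat(\Gamma)$ oracle, because $\SAT(\Gamma)$ as a decision problem returns only a yes/no bit rather than a witnessing assignment. Here I would use that polynomial-time decidability of $\SAT(\Gamma)$ forces $\Gamma$, via Schaefer's dichotomy, into one of the tractable classes $\Gamma\subseteq\iI_0$, $\Gamma\subseteq\iI_1$, $\Gamma\subseteq\iE_2$, $\Gamma\subseteq\iV_2$, $\Gamma\subseteq\iD_2$, or $\Gamma\subseteq\iL_2$. In each of these a satisfying assignment can be computed directly in polynomial time: the constant vectors $0\cdots 0$ and $1\cdots 1$ already serve in the $0$-valid and $1$-valid cases, least-model computation by unit propagation serves for (dual) Horn, the $2$-SAT procedure serves for bijunctive, and Gaussian elimination serves for affine formulas. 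Thus a concrete model~$m$ is always available within the stated time bound.

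The main obstacle is exactly this search-from-decision step. The tempting generic route — self-reducibility, fixing the variables of~$\varphi$ one at a time while using the $\SAT(\Gamma)$ oracle to check that each committed partial assignment still extends to a model — is subtle, because pinning a variable to a Boolean constant needs the unary relations $[x]$ and $[\neg x]$, which need not lie in~$\Gamma$, so the intermediate formulas are not guaranteed to be $\Gamma$-formulas. Falling back on the constructive solution-finding algorithm attached to the relevant Schaefer class sidesteps this difficulty cleanly; it is also the only point in the argument where the tractability hypothesis on $\SAT(\Gamma)$ is used in a form stronger than a bare decision oracle.
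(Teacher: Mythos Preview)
Your proposal is correct and follows essentially the same approach as the paper: compute a model~$m$ using the constructiveness of the Schaefer-tractable $\SAT(\Gamma)$ cases, then feed $(\varphi,m)$ to the $\AnotherSat(\Gamma)$ algorithm. The paper's proof is terser, merely asserting that the tractable cases are constructive, while you spell out the concrete witness-finding algorithms and flag the self-reducibility pitfall; but the logical content is the same.
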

\begin{proof}
  Let $\varphi$ be an instance of $\TSSAT(\Gamma)$. All
  polynomial-time decidable cases of $\SAT(\Gamma)$ are constructive,
  i.e., whenever that problem is polynomial-time decidable, there
  exists a polynomial-time algorithm computing a satisfying
  assignment provided it exists. If $\varphi$ is not satisfiable, we
  reject the instance. Otherwise, we can compute in polynomial time a satisfying
  assignment~$m$ of~$\varphi$. Now use the algorithm for
  $\AnotherSat(\Gamma)$ on the instance $(\varphi, m)$ to decide if
  there is a second solution to~$\varphi$.
\end{proof}

\begin{corollary}\label{cor:sat-tssat-complexity}
  For any constraint language~$\Gamma$, the problem $\TSSAT(\Gamma)$
  is in~$\P$ if both $\SAT(\Gamma)$ and $\AnotherSat(\Gamma)$ are
  in~$\P$. Otherwise, $\TSSAT(\Gamma)$ is $\NP$-hard.
\end{corollary}

\begin{proposition}\label{prop:linearapprox}
  Let~$\Gamma$ be a constraint language for which $\TSSAT(\Gamma)$ is
  in~$\P$. Then there is a polynomial-time $n$-approximation algorithm
  for $\MSD(\Gamma)$, where~$n$ is the number of variables of the
  $\Gamma$-formula on input.
\end{proposition}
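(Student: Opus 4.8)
The plan is to exploit that an $n$-approximation for $\MSD(\Gamma)$ only requires producing \emph{any} two distinct models of $\varphi$: since two distinct models differ in at least one and in at most $n$ coordinates, while the optimum $\OPT(\varphi)$ is itself the distance of some pair of distinct models and hence at least $1$, every pair of distinct models is automatically an $n$-approximation. Thus the whole task reduces to computing two distinct satisfying assignments in polynomial time, or detecting that $\varphi$ has at most one model (in which case $\MSD(\Gamma)$ is infeasible). The skeleton of the algorithm is therefore: run the assumed polynomial-time decision procedure for $\TSSAT(\Gamma)$ on $\varphi$; if it rejects, report that there is no feasible solution; if it accepts, output two distinct models found as described below.

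It remains to turn the decision oracle into a construction. First I would note that the hypothesis is strong: by Lemmas~\ref{lem:sattotwo} and~\ref{lem:asattotwo}, whose proofs provide many-one reductions $\SAT(\Gamma)\mle\TSSAT(\Gamma)$ and $\AnotherSat(\Gamma)\mle\TSSAT(\Gamma)$, polynomial-time decidability of $\TSSAT(\Gamma)$ already forces both $\SAT(\Gamma)$ and $\AnotherSat(\Gamma)$ to be polynomial-time decidable. Consequently $\Gamma$ lies in the intersection of the tractable Schaefer and Juban classes, and in each such case the decision algorithms are in fact constructive: one computes a model $m$ of $\varphi$ and, since $\TSSAT$ has accepted, a second model $m'\neq m$. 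Concretely, for the $0$- and $1$-valid case ($\cc\Gamma=\iI$) one may simply take the constant assignments $\vec 0$ and $\vec 1$, which both satisfy every atom; the Horn, dual-Horn and bijunctive cases are handled by the very algorithms already given in Propositions~\ref{prop:MSD-iD2} and~\ref{prop:MSD-iE2-iV2}, which output two models directly; and the affine case is solved by Gaussian elimination, a nonzero homogeneous solution added to a particular solution yielding the second model.

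Finally, with two distinct models $m\neq m'$ in hand, the analysis is immediate: $1\le\hd(m,m')\le n$ and $\OPT(\varphi)\ge 1$, so $\hd(m,m')\le n\le n\cdot\OPT(\varphi)$, which is the desired $n$-approximation; combined with the polynomial running time of every step this proves the claim. I expect the only genuine obstacle to be the passage from decision to construction for the second model, i.e.\ establishing constructiveness of $\AnotherSat(\Gamma)$ in all tractable cases. This is exactly the point where one leans on the explicit structural algorithms for the individual Schaefer classes (and on the $\vec 0/\vec 1$ shortcut for $\iI$) rather than on a generic self-reduction, since the absence of constant relations in languages with $\cc\Gamma=\iI$ would obstruct the usual variable-fixing self-reduction.
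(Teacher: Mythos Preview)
Your proposal is correct and rests on the same core observation as the paper: since $1\le\OPT(\varphi)\le n$, any feasible solution is already an $n$-approximation, so the task reduces to deciding $\TSSAT(\Gamma)$.

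The paper's proof is far terser---it literally says ``check if $\varphi$ has two solutions; if so, return $n$''---and does not address how to actually produce the two models. You go further and justify constructiveness by invoking Lemmas~\ref{lem:sattotwo} and~\ref{lem:asattotwo} to conclude that $\SAT(\Gamma)$ and $\AnotherSat(\Gamma)$ are tractable, then spelling out constructions case by case. That extra care is arguably needed if one insists that an approximation algorithm output a feasible solution rather than just a value; the paper leaves this implicit. So your route is the same as the paper's at heart, but more complete on the one point the paper skates over.
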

\begin{proof}
  Since~$\TSSAT(\Gamma)$ is in~$\P$, both $\SAT(\Gamma)$ and
  $\AnotherSat(\Gamma)$ must be in~$\P$ by
  Corollary~\ref{cor:sat-tssat-complexity}. Since $\SAT(\Gamma)$ is
  in~$\P$, we can compute a model~$m$ of the input~$\varphi$ in
  polynomial time if it exists. Now we check the
  $\AnotherSat(\Gamma)$-instance $(\varphi,m)$. If it has a
  solution~$m'\neq m$, it is also polynomial time computable, and we
  return $(m,m')$. If we fail somewhere in this process, then the
  $\MSD(\Gamma)$-instance~$\varphi$ does not have feasible solutions; otherwise,
  $\hd(m,m')\leq n \leq n\cdot\OPT(\varphi)$.
\end{proof}

\subsubsection{MinDistance-Equivalent Cases}

In this section we show that, as for the $\NextSol$ problem, the
affine cases of $\MSD$ are $\MinDist$-complete.

\begin{proposition}\label{prop:msdaffine}
  $\MSD(\Gamma)$ is $\MinDist$-complete if the constraint
  language~$\Gamma$ satisfies the inclusions
  $\iL\subseteq \cc{\Gamma} \subseteq \iL_2$.
\end{proposition}
\begin{proof}
  We prove $\MSD(\Gamma)\apeq \NextSol(\Gamma)$, which is
  $\MinDist$-com\-plete for each constraint language~$\Gamma$
  satisfying the inclusions $\iL\subseteq \cc{\Gamma} \subseteq \iL_2$,
  according to Proposition~\ref{prop:MinDist-hardness-XSOL}.  As the
  inclusion~$\Gamma\subseteq \iL_2 = \cc{\set{\even^4,[x],[\neg x]}}$
  holds, any $\Gamma$-formula~$\psi$ is expressible as
  $\exists y(A_1 x +A_2 y \eq c)$.  The projection of the affine
  solution space is again an affine space, so it can be understood as
  solutions of a system~$Ax = b$.  If $(\psi,m_0)$ is an instance of
  $\XSOL(\Gamma)$, then $\psi$ is a $\MSD(\Gamma)$-instance, and a
  feasible solution $m_1\neq m_2$ satisfying~$\psi$ gives a feasible
  solution $m_3:=m_0+(m_2 - m_1)$ for $(\psi,m_0)$, where
  $\hd(m_0,m_3) 
  = \hd(m_2,m_1)$.  Conversely, a solution $m_3\neq m_0$ to
  $(\psi,m_0)$ yields a feasible answer to the $\MSD$-instance~$\psi$.
  Thus, $\OPT(\psi) = \OPT(\psi,m_0)$ and so
  $\XSOL(\Gamma)\aple\MSD(\Gamma)$. The other way round, if~$\psi$ is
  an $\MSD$-instance, then attempt to solve the system $Ax=b$ defined by
  it; if there is no or a unique solution, then the instance does not
  have feasible solutions. Otherwise, we have at least two distinct
  models of~$\psi$; let $m_0$ be one of these. As above we conclude
  $\OPT(\psi) = \OPT(\psi,m_0)$, and therefore,
  $\MSD(\Gamma)\aple\XSOL(\Gamma)$.
\end{proof}

\subsubsection{Tightness Results}

We prove that Proposition~\ref{prop:linearapprox} is essentially tight
for some constraint languages. This result builds heavily on the
previous results from Section~\ref{sssec:xsoltight}.

\begin{proposition}\label{prop:tightnessOfLinearapprox}
  For a constraint language~$\Gamma$ satisfying the inclusions
  $\iN \subseteq \cc \Gamma \subseteq \iI$ and any $\varepsilon>0$
  there is no polynomial-time $n^{1-\varepsilon}$-approximation
  algorithm for $\MSD(\Gamma)$, unless $\P = \NP$.
\end{proposition}
\begin{proof}
  We show that any polynomial time $n^{1-\varepsilon}$-approximation
  algorithm for $\MSD(\Gamma)$ would also allow to decide in
  polynomial time the problem $\AnotherSatNC(\Gamma)$, which is
  $\NP$-complete by Proposition~\ref{prop:AScomplhard}.

  The algorithm works as follows. Given an instance $(\varphi, m)$ for
  $\AnotherSatNC(\Gamma)$, the algorithm accepts if~$m$ is not a
  constant assignment. Since~$\Gamma$ is $0$-valid (and $1$-valid),
  this output is correct. If~$\varphi$ has only one variable, reject
  because~$\varphi$ has only two models; otherwise, proceed as
  follows.

  For each variable~$x$ of~$\varphi$, we construct a new
  formula~$\varphi'_{x}$ as follows.  Let~$k$ be the smallest integer
  greater than $1/\varepsilon$. Introduce $n^k-n$ new variables~$x^i$
  for $i = 1, \ldots, n^k-n$. For every $i \in \set{1, \ldots, n^k-n}$
  and every constraint $R(y_1, \ldots , y_\ell)$ in~$\varphi$, such
  that $x \in \set{y_1, \ldots, y_\ell}$, construct a new constraint
  $R(z_1^i, \ldots, z_\ell^i)$ by $z_j^i = x^i$ if $y_j = x$ and
  $z_j^i = y_j$ otherwise; add all the newly constructed constraints
  to~$\varphi$ in order to get~$\varphi'_{x}$. Note, that we can
  extend models~$s$ of~$\varphi$ to models~$s'$ of~$\varphi'_{x}$ by
  setting $s'(x^i)=s(x)$. In particular, this can be done for~$m$,
  yielding $m'\in[\varphi'_{x}]$.
  As $\Gamma\subseteq \iI = \iI_0\cap \iI_1$, the $\MSD(\Gamma)$-instance
  $\varphi'_{x}$ has feasible solutions; thus run the
  $n^{1-\varepsilon}$-approximation algorithm for $\MSD(\Gamma)$ on
  $\varphi'_{x}$. If for every $x$ the answer is a pair $(m_1,m_2)$
  with $m_2 = \cmpl{m_1}$, then reject, otherwise accept.

  This procedure is a correct polynomial-time algorithm for
  $\AnotherSatNC(\Gamma)$. For polynomial runtime is clear, it
  remains to show correctness. If~$\varphi$ has only constant models,
  then the same is true for every~$\varphi'_{x}$ since~$\varphi$
  contains a variable distinct from~$x$. Thus each
  approximation must result in a pair of complementary constant
  assignments, and the output is correct. Assume now that there is a
  model~$s$ of~$\varphi$ different from~$\vec{0}$
  and~$\vec{1}$. Hence, there exists a variable $x$ such that
  $s(x)=m(x)$ because $m$ is constant. It follows that~$\varphi'_{x}$ has a
  model~$s'$ fulfilling $\OPT(\varphi'_{x})\leq \hd(s', m')<n$, where~$n$ is the number of variables
  of~$\varphi$. But then the approximation algorithm must find two
  distinct models~$m_1\neq m_2$ of~$\varphi'_{x}$ satisfying
  $\hd(m_1, m_2) < n \cdot (n^k)^{1-\varepsilon} =
  n^{k(1-\varepsilon)+1}$.
  Since we stipulated $k > 1/\varepsilon$, it follows that
  $\hd(m_1,m_2) < n^k$. Consequently, we have $m_2\neq \cmpl{m_1}$ and
  the output of our algorithm is again correct.
\end{proof}

\section{Concluding Remarks}
\label{sec:conclusion}

The problems investigated in this paper are quite natural. In the
space of bit-vectors we search for a solution of a formula that is
closest to a given point, or for a solution next to a given
solution, or for two solutions witnessing the smallest Hamming
distance between any two solutions. Our results describe the
complexity of exploring the solution space for arbitrary families of
Boolean relations.  Moreover, our problems generalize problems
familiar from the literature: $\OptMinOnes$, $\NCW$, and $\DistanceSAT$
are instances of our $\NearestSol$, while $\MinDist$ is the same as
our problem $\MinSolDistance$ when restricting the latter to affine relations.

To prove the results, we first had to extend the notion of
$\AP$-reduction.  The optimization problems considered in the literature
have the property that each instance has at least one feasible
solution. This is not the case when looking for nearest solutions
regarding a given solution or a prescribed Boolean tuple,
as a formula may have just a single solution or no
solution at all.  Therefore we had to refine the notion of
$\AP$-reductions such that it correctly handles instances without
feasible solutions.

The complexity of $\NearestSol$ can be classified by the usual
approach: We first show that for each constraint language the
complexity of the problem does not change when admitting existential
quantifiers and equality, and then check all finitely related clones
according to Post's lattice. This approach does not work for the problems
$\NextSol$ and $\MinSolDistance$: It does not seem to be possible to
show a priori that the complexity remains unaffected under such
language extensions. In principle the complexity of a problem might
well differ for two constraint languages $\Gamma_1$ and $\Gamma_2$
that represent the same clone ($\cc{\Gamma_1} = \cc{\Gamma_2}$) but
that differ with respect to partial polymorphisms
($\ccc{\Gamma_1\cup\set{\eq}} \neq \ccc{\Gamma_2\cup\set{\eq}}$).
Theorems~\ref{thm:NOSol} and~\ref{thm:MSD} finally show that this is
not the case, but we learn this only a posteriori.
Our method of proof fundamentally relies on irredundant weak bases that
seem to be the perfect fit for such a situation: a priori compatibility
with existential quantification is not required, but it will follow once
the proof succeeds just using weak bases.

\begin{figure}
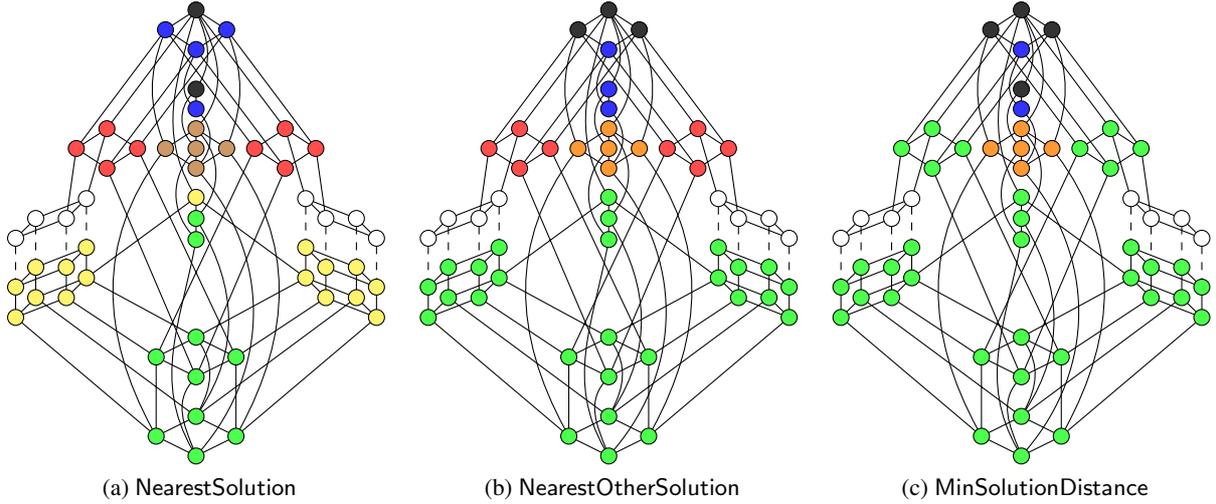

  \subfloat[$\NearestSol$]%
    {\expandafter\postlattice\expandafter[\NSOLstyle,
     scale=0.35,labels=none,cshape/circle/.append style={minimum width=6pt}]
    }%
  \hfill
  \subfloat[$\NextSol$]%
    {\expandafter\postlattice\expandafter[\NOSOLstyle,
     scale=0.35,labels=none,cshape/circle/.append style={minimum width=6pt}]
    }%
  \hfill
  \subfloat[$\MinSolDistance$]%
    {\expandafter\postlattice\expandafter[\MSDstyle,
     scale=0.35,labels=none,cshape/circle/.append style={minimum width=6pt}]
    }
    \caption{Comparing the complexities: The hard cases (colored blue
      and black) are the same, whereas the polynomial cases (green)
      increase from left to right.}
    \label{fig:comparison}
\end{figure}

Figure~\ref{fig:comparison} compares the complexity classifications of
the three problems. Regarding $\NearestSol$ and $\NextSol$, knowing
that an assignment is a solution apparently helps in finding a
solution nearby.  For expressive constraint languages it is
$\NP$-complete to decide whether a feasible solution exists at all;
for $\NearestSol$ this requires the existence of at least one
satisfying assignment, while the other two problems need even
two. Kann proved in~\cite{Kann-94} that $\OptMinOnes(\Gamma)$ is
$\NPOPB$-complete for $\cc \Gamma = \BR$, where $\NPOPB$ is the class
of $\NPO$ problems with a polynomially bounded objective
function. This result implies that $\NearestSol(\Gamma)$ is
$\NPOPB$-complete for $\cc \Gamma = \BR$ as well. It is unclear
whether this result also holds for $\cc \Gamma = \iN_2$. It may be
possible to find a suitable constraint language~$\Gamma'$ satisfying
$\cc{\Gamma'}=\BR$ such that $\OptMinOnes(\Gamma')$ reduces to $\NextSol(\Gamma)$ for
$\iI_0 \subseteq \cc \Gamma$ or $\iI_1 \subseteq \cc \Gamma$, proving
thus that $\XSOL(\Gamma)$ is $\NPOPB$-complete for these
cases. Likewise, the $\NPOPB$-hardness of $\MSD(\Gamma)$ for
$\iN_2\subseteq\cc \Gamma$ or $\iI_0\subseteq\cc \Gamma$ or
$\iI_1\subseteq\cc \Gamma$ remains open for the time being.

\bibliographystyle{plain}
\bibliography{../central}

\def\Palatalization#1{\bgroup\fontencoding{T1}\selectfont\v{#1}\egroup}
  \ifx\SetBibliographyCyrillicFontfamily\undefined\def\SetBibliographyCyrillicFontfamily{\relax}\fi
  \def\Cyr#1{\bgroup\SetBibliographyCyrillicFontfamily\fontencoding{T2A}\selectfont{#1}\egroup}
\begin{thebibliography}{10}

\bibitem{AroraBSS-97}
S.~Arora, L.~Babai, J.~Stern, and Z.~Sweedyk.
\newblock The hardness of approximate optima in lattices, codes, and systems of
  linear equations.
\newblock {\em Journal of Computer and System Sciences}, 54(2):317--331, 1997.

\bibitem{AspvallPT-79}
B.~Aspvall, M.~R. Plass, and R.~E. Tarjan.
\newblock A linear-time algorithm for testing the truth of certain quantified
  {B}oolean formulas.
\newblock {\em Information Processing Letters}, 8(3):121--123, 1979.

\bibitem{AusielloCGKMSP-99}
G.~Ausiello, P.~Crescenzi, G.~Gambosi, V.~Kann, A.~Marchetti-Spaccamela, and
  M.~Protasi.
\newblock {\em Complexity and Approximation: {C}ombinatorial Optimization
  Problems and Their Approximability Properties}.
\newblock Springer-Verlag, 1999.

\bibitem{BailleuxM-06}
O.~Bailleux and P.~Marquis.
\newblock Some computational aspects of {Distance-SAT}.
\newblock {\em Journal of Automated Reasoning}, 37(4):231--260, 2006.

\bibitem{BakerP-75}
K.~A. Baker and A.~F. Pixley.
\newblock Polynomial interpolation and the {C}hinese {R}emainder {T}heorem for
  algebraic systems.
\newblock {\em Mathematische Zeitschrift}, 143(2):165--174, 1975.

\bibitem{BehrischHMSisaac-15}
M.~Behrisch, M.~Hermann, S.~Mengel, and G.~Salzer.
\newblock Give me another one!
\newblock In K.~Elbassioni and K.~Makino, editors, {\em Proceedings 26th
  International Symposium on Algorithms and Computation (ISAAC 2015), Nagoya
  (Japan)}, volume 9472 of {\em Lecture Notes in Computer Science}, pages
  664--676. Springer-Verlag, December 2015.

\bibitem{BehrischHMSwalcom-16}
M.~Behrisch, M.~Hermann, S.~Mengel, and G.~Salzer.
\newblock As close as it gets.
\newblock In M.~Kaykobad and R.~Petreschi, editors, {\em Proceedings 10th
  International Workshop on Algorithms and Computation (WALCOM 2016), Kathmandu
  (Nepal)}, volume 9627 of {\em Lecture Notes in Computer Science}, pages
  222--235. Springer-Verlag, March 2016.

\bibitem{BehrischHMScsr-16}
M.~Behrisch, M.~Hermann, S.~Mengel, and G.~Salzer.
\newblock The next whisky bar.
\newblock In A.~S. Kulikov and G.~J. Woeginger, editors, {\em Proceedings 11th
  International Computer Science Symposium in Russia (CSR 2016), St.~Petersburg
  (Russia)}, volume 9691 of {\em Lecture Notes in Computer Science}, pages
  41--56. Springer-Verlag, June 2016.

\bibitem{BoehlerCRV-04}
E.~B\"ohler, N.~Creignou, S.~Reith, and H.~Vollmer.
\newblock Playing with {B}oolean blocks, part~{II}: {C}onstraint satisfaction
  problems.
\newblock {\em SIGACT News, Complexity Theory Column~43}, 35(1):22--35, 2004.

\bibitem{BoehlerRSV-05}
E.~B\"ohler, S.~Reith, H.~Schnoor, and H.~Vollmer.
\newblock Bases for {B}oolean co-clones.
\newblock {\em Information Processing Letters}, 96(2):59--66, 2005.

\bibitem{CreignouKS-01}
N.~Creignou, S.~Khanna, and M.~Sudan.
\newblock {\em Complexity Classifications of Boolean Constraint Satisfaction
  Problems}, volume~7 of {\em SIAM Monographs on Discrete Mathematics and
  Applications}.
\newblock SIAM, Philadelphia (PA), 2001.

\bibitem{CreignouKZ-08}
N.~Creignou, Ph.~G. Kolaitis, and B.~Zanuttini.
\newblock Structure identification of {B}oolean relations and plain bases for
  co-clones.
\newblock {\em Journal of Computer and System Sciences}, 74(7):1103--1115,
  2008.

\bibitem{CreignouV-08}
N.~Creignou and H.~Vollmer.
\newblock {B}oolean constraint satisfaction problems: When does {P}ost's
  lattice help?
\newblock In N.~Creignou, P.~G. Kolaitis, and H.~Vollmer, editors, {\em
  Complexity of constraints - {A}n overview of current research themes [Result
  of a {D}agstuhl seminar]}, volume 5250 of {\em Lecture Notes in Computer
  Science}, pages 3--37. Springer-Verlag, 2008.

\bibitem{CrescenziR-02}
P.~Crescenzi and G.~Rossi.
\newblock On the {H}amming distance of constraint satisfaction problems.
\newblock {\em Theoretical Computer Science}, 288(1):85--100, 2002.

\bibitem{Dalal-88}
M.~Dalal.
\newblock Investigations into a theory of knowledge base revision.
\newblock In H.~E. Shrobe, T.~M. Mitchell, and R.~G. Smith, editors, {\em
  Proceedings 7th National Conference on Artificial Intelligence (AAAI-88),
  St.~Paul (Minnesota, USA)}, pages 475--479. AAAI Press / MIT Press, August
  1988.

\bibitem{DumerMS-03}
I.~Dumer, D.~Micciancio, and M.~Sudan.
\newblock Hardness of approximating the minimum distance of a linear code.
\newblock {\em IEEE Transactions on Information Theory}, 49(1):22--37, 2003.

\bibitem{HochbaumMNT-93}
D.~S. Hochbaum, N.~Megiddo, J.~Naor, and A.~Tamir.
\newblock Tight bounds and 2-approximation algorithms for integer programs with
  two variables per inequality.
\newblock {\em Mathematical Programming}, 62(1-3):69--83, 1993.

\bibitem{JanovMucnik1959}
Ju.~I. Janov and A.~A. Mu{\v{c}}nik.
\newblock {\Cyr{\CYRO{}
  \cyrs\cyru\cyrshch\cyre\cyrs\cyrt\cyrv\cyro\cyrv\cyra\cyrn\cyri\cyri{}
  {$k$}-\cyrz\cyrn\cyra\cyrch\cyrn\cyrery\cyrh{}
  \cyrz\cyra\cyrm\cyrk\cyrn\cyru\cyrt\cyrery\cyrh{}
  \cyrk\cyrl\cyra\cyrs\cyrs\cyro\cyrv{}, \cyrn\cyre{}
  \cyri\cyrm\cyre\cyryu\cyrshch\cyri\cyrh{}
  \cyrk\cyro\cyrn\cyre\cyrch\cyrn\cyro\cyrg\cyro{}
  \cyrb\cyra\cyrz\cyri\cyrs\cyra}} [{E}xistence of {$k$}-valued closed classes
  without a finite basis].
\newblock {\em Doklady Akademii Nauk SSSR}, 127(1):44--46, 1959.

\bibitem{JeavonsCG-97}
P.~Jeavons, D.~Cohen, and M.~Gyssens.
\newblock Closure properties of constraints.
\newblock {\em Journal of the Association for Computing Machinery},
  44(4):527--548, 1997.

\bibitem{Juban-99}
L.~Juban.
\newblock Dichotomy theorem for the generalized unique satisfiability problem.
\newblock In G.~Ciobanu and G.~P\u{a}un, editors, {\em Proceedings 12th
  Fundamentals of Computation Theory (FCT'99) Ia\c{s}i (Romania)}, volume 1684
  of {\em Lecture Notes in Computer Science}, pages 327--337. Springer-Verlag,
  August 1999.

\bibitem{Kann-94}
V.~Kann.
\newblock Polynomially bounded minimization problems that are hard to
  approximate.
\newblock {\em Nordic Journal of Computing}, 1(3):317--331, 1994.

\bibitem{KhannaSTW-01}
S.~Khanna, M.~Sudan, L.~Trevisan, and D.~P. Williamson.
\newblock The approximability of constraint satisfaction problems.
\newblock {\em SIAM Journal on Computing}, 30(6):1863--1920, 2000.

\bibitem{Lagerkvist-14}
V.~Lagerkvist.
\newblock Weak bases of {B}oolean co-clones.
\newblock {\em Information Processing Letters}, 114(9):462--468, 2014.

\bibitem{Lau-06}
D.~Lau.
\newblock {\em Function algebras on finite sets: A basic course of many-valued
  logic and clone theory}.
\newblock Springer Monographs in Mathematics. Springer-Verlag, 2006.

\bibitem{PoeschelK-79}
R.~P{\"o}schel and L.~A. Kalu\v{z}nin.
\newblock {\em Funk\-tio\-nen- und {R}e\-la\-tio\-nen\-al\-ge\-bren}.
\newblock Deut\-scher Ver\-lag der Wis\-sen\-schaf\-ten, Ber\-lin, 1979.

\bibitem{Romov1981}
B.~A. Romov.
\newblock The algebras of partial functions and their invariants.
\newblock {\em Cybernetics}, 17(2):157--167, March 1981.
\newblock Translated from {\Cyr{\CYRA\cyrl\cyrg\cyre\cyrb\cyrr\cyrery{}
  \cyrch\cyra\cyrs\cyrt\cyri\cyrch\cyrn\cyrery\cyrh{}
  \cyrf\cyru\cyrn\cyrk\cyrc\cyri\cyrishrt{} \cyri{} \cyri\cyrh{}
  \cyri\cyrn\cyrv\cyra\cyrr\cyri\cyra\cyrn\cyrt\cyrery.}} Kibernetika, No. 2,
  pp. 1--11, March--April, 1981.

\bibitem{Schaefer-78}
T.~J. Schaefer.
\newblock The complexity of satisfiability problems.
\newblock In {\em Proceedings 10th Symposium on Theory of Computing (STOC'78),
  San Diego (California, USA)}, pages 216--226, 1978.

\bibitem{SchnoorS-08}
H.~Schnoor and I.~Schnoor.
\newblock Partial polymorphisms and constraint satisfaction problems.
\newblock In N.~Creignou, Ph.~G. Kolaitis, and H.~Vollmer, editors, {\em
  Complexity of constraints - {A}n overview of current research themes [Result
  of a {D}agstuhl seminar]}, volume 5250 of {\em Lecture Notes in Computer
  Science}, pages 229--254. Springer-Verlag, 2008.

\bibitem{SchnoorDiss}
Ilka Schnoor.
\newblock {\em The weak base method for constraint satisfaction}.
\newblock {D}issertation, {G}ott\-fried {W}il\-helm {L}eib\-niz
  {U}ni\-ver\-si\-t{\"a}t {H}annover, 2008.

\bibitem{Schrijver-86}
A.~Schrijver.
\newblock {\em Theory of linear and integer programming}.
\newblock John Wiley \& Sons, 1986.

\bibitem{Vardy-97}
A.~Vardy.
\newblock The intractability of computing the minimum distance of a code.
\newblock {\em IEEE Transactions on Information Theory}, IT-43(6):1757--1766,
  1997.

\bibitem{WarshallTransClosure1962}
S.~Warshall.
\newblock A theorem on {B}oolean matrices.
\newblock {\em Journal of the Association for Computing Machinery},
  9(1):11--12, 1962.

\end{thebibliography}

\end{document}